\let\newmathbb\mathbb
    \let\mathbb\relax
    \newcommand{\mathbb}[1]{\bm{\newmathbb{#1}}}
\crefname{theorem}{Theorem}{Theorems}
\crefname{proposition}{Proposition}{Propositions}
\crefname{lemma}{Lemma}{Lemmas}
\crefname{claim}{Claim}{Claims}
\crefname{corollary}{Corollary}{Corollaries}
\crefname{remark}{Remark}{Remarks}
\crefname{observation}{Observation}{Observations}
\crefname{hypothesis}{Hypothesis}{Hypotheses}
\crefname{definition}{Definition}{Definitions}
\crefname{problem}{Problem}{Problems}
\crefname{example}{Example}{Examples}
\crefname{appendix}{Appendix}{Appendices}
\crefname{section}{Section}{Sections}
\crefname{equation}{Eq.}{Eqs.}
\crefname{figure}{Figure}{Figures}
\crefname{table}{Table}{Tables}
\crefname{algorithm}{Algorithm}{Algorithms}
\algnewcommand{\algorithmicand}{\textbf{ and }}
\algnewcommand{\algorithmicor}{\textbf{ or }}
\algnewcommand{\algorithmicto}{\textbf{ to }}
\algrenewcommand\textproc{\textsl}
\renewcommand{\geq}{\geqslant}
\renewcommand{\leq}{\leqslant}
\renewcommand{\phi}{\varphi}
\renewcommand{\epsilon}{\varepsilon}
\renewcommand{\bar}{\overline}
\renewcommand{\tilde}{\widetilde}
\newcommand{\prb}[1]{\textup{\textsc{#1}}\xspace}
\newcommand{\nth}[1]{#1\textsuperscript{th}\xspace}
\renewcommand{\vec}[1]{\mathbf{\bm{#1}}}
\newcommand{\reco}{\leftrightsquigarrow}
\DeclareMathOperator*{\argmax}{argmax}
\DeclareMathOperator{\bigO}{\mathcal{O}}
\DeclareMathOperator{\opt}{\mathsf{opt}}
\DeclareMathOperator{\cost}{\mathsf{cost}}
\let\poly\relax\DeclareMathOperator{\poly}{poly}
\let\polylog\relax\DeclareMathOperator{\polylog}{polylog}
\DeclareMathOperator{\polyloglog}{polyloglog}
\let\Pr\relax\DeclareMathOperator*{\Pr}{\mathbb{Pr}}
\DeclareMathOperator{\minlab}{\mathsf{MinLab}}
\DeclareMathOperator{\maxpar}{\mathsf{MaxPar}}
\newcommand{\sss}{\mathsf{start}}
\newcommand{\ttt}{\mathsf{goal}}
\newcommand{\zo}{\{0,1\}}
\newcommand{\asg}{f}
\newcommand{\sqasg}{F}
\newcommand{\sqpi}{\Pi}
\newcommand{\SetCovReconf}{\prb{Set Cover Reconfiguration}}
\newcommand{\MinmaxSetCovReconf}{\prb{Minmax Set Cover Reconfiguration}}
\newcommand{\DomSetReconf}{\prb{Dominating Set Reconfiguration}}
\newcommand{\MinmaxDomSetReconf}{\prb{Minmax Dominating Set Reconfiguration}}
\newcommand{\VerCovReconf}{\prb{Hypergraph Vertex Cover Reconfiguration}}
\newcommand{\MinmaxVerCovReconf}{\prb{Minmax Hypergraph Vertex Cover Reconfiguration}}
\newcommand{\ParBCSPReconf}{\prb{Partial 2CSP Reconfiguration}}
\newcommand{\MaxParBCSP}{\prb{Max Partial 2CSP}}
\newcommand{\LabCovReconf}{\prb{Label Cover Reconfiguration}}
\newcommand{\calC}{\mathcal{C}}
\newcommand{\calE}{\mathcal{E}}
\newcommand{\calF}{\mathcal{F}}
\newcommand{\calU}{\mathcal{U}}
\newcommand{\calV}{\mathcal{V}}
\newcommand{\calW}{\mathcal{W}}
\newcommand{\bbN}{\mathbb{N}}
\newcommand{\bbR}{\mathbb{R}}
\newcommand{\scrC}{\mathscr{C}}
\newcommand{\scrS}{\mathscr{S}}
\let\Pr\relax\DeclareMathOperator*{\Pr}{\mathbb{P}}
\newtheorem{theorem}{Theorem}[section]
\newtheorem{proposition}[theorem]{Proposition}
\newtheorem{lemma}[theorem]{Lemma}
\newtheorem{claim}[theorem]{Claim}
\newtheorem{corollary}[theorem]{Corollary}
\newtheorem{observation}[theorem]{Observation}
\theoremstyle{definition}
\newtheorem{definition}[theorem]{Definition}
\newtheorem{problem}[theorem]{Problem}
\newenvironment{claim*}{\begin{claim}}{\end{claim}}
\numberwithin{equation}{section}
\title{Optimal PSPACE-hardness of \\ Approximating Set Cover Reconfiguration}
\author{
Shuichi Hirahara\thanks{National Institute of Informatics, Japan.
\href{mailto:s\_hirahara@nii.ac.jp}{\texttt{s\_hirahara@nii.ac.jp}}
}
\and
Naoto Ohsaka\thanks{CyberAgent, Inc., Tokyo, Japan. \href{mailto:ohsaka\_naoto@cyberagent.co.jp}{\texttt{ohsaka\_naoto@cyberagent.co.jp}}; \href{mailto:naoto.ohsaka@gmail.com}{\texttt{naoto.ohsaka@gmail.com}}
}
}
\date{\today}
\begin{document}

\maketitle
\thispagestyle{empty}
\begin{abstract}\noindentIn the \prb{Minmax Set Cover Reconfiguration} problem,
given a set system $\mathcal{F}$ over a universe and
its two covers $\mathcal{C}^\mathsf{start}$ and $\mathcal{C}^\mathsf{goal}$ of size $k$,
we wish to transform $\mathcal{C}^\mathsf{start}$ into $\mathcal{C}^\mathsf{goal}$ by
repeatedly adding or removing a single set of $\mathcal{F}$
while covering the universe in any intermediate state.
Then, the objective is to minimize the maximize size of any intermediate cover during transformation.
We prove that
\prb{Minmax Set Cover Reconfiguration} and \prb{Minmax Dominating Set Reconfiguration} are
$\mathsf{PSPACE}$-hard to approximate within a factor of
$2-\frac{1}{\operatorname{polyloglog} N}$,
where $N$ is the size of the universe and the number of vertices in a graph, respectively,
improving upon
{Ohsaka} (SODA 2024)~\cite{ohsaka2024gap} and
{Karthik~C.~S.~and Manurangsi} (2023)~\cite{karthik2023inapproximability}.
This is the first result that exhibits a sharp threshold for the approximation factor of any reconfiguration problem 
because both problems admit a $2$-factor approximation algorithm as per
{Ito, Demaine, Harvey, Papadimitriou, Sideri, Uehara, and Uno} (Theor.~Comput.~Sci., 2011)~\cite{ito2011complexity}.
Our proof is based on a reconfiguration analogue of the FGLSS reduction \cite{feige1996interactive}
from \emph{Probabilistically Checkable Reconfiguration Proofs}
of {Hirahara and Ohsaka} (2024)~\cite{hirahara2024probabilistically}.
We also prove that
for any constant $\varepsilon \in (0,1)$,
\prb{Minmax Hypergraph Vertex Cover Reconfiguration} on $\operatorname{poly}(\varepsilon^{-1})$-uniform hypergraphs
is $\mathsf{PSPACE}$-hard to approximate within a factor of $2-\varepsilon$.
\end{abstract}
\clearpage
\tableofcontents
\clearpage

\section{Introduction}\label{sec:intro}

\subsection{Background}\label{subsec:intro:background}

In the field of reconfiguration,
we study the reachability and connectivity over the space of feasible solutions under an adjacency relation.
Given a \emph{source problem} that asks the existence of a feasible solution,
its \emph{reconfiguration problem} requires to decide if there exists a \emph{reconfiguration sequence},
namely, a step-by-step transformation between a pair of feasible solutions
while always preserving the feasibility of any intermediate solution.
One of the reconfiguration problems we study in this paper is \SetCovReconf \cite{ito2011complexity},
whose source problem is \prb{Set Cover}.
In the \SetCovReconf problem,
for a set system $\calF$ over a universe $\calU$ and
its two covers $\calC^\sss$ and $\calC^\ttt$ of size $k$,
we seek a reconfiguration sequence from $\calC^\sss$ to $\calC^\ttt$
consisting only of covers of size at most $k+1$,
each of which is obtained from the previous one by adding or removing a single set of $\calF$.
Countless reconfiguration problems have been defined from a variety of source problems,
including Boolean satisfiability, constraint satisfaction problems, and graph problems.
Studying reconfiguration problems may help elucidate the structure of the solution space of combinatorial problems \cite{gopalan2009connectivity}.

The computational complexity of reconfiguration problems has the following trend:
a reconfiguration problem is likely to be $\PSPACE$-complete
if its source problem is intractable (say, $\NP$-complete); e.g.,
\prb{Set Cover} \cite{ito2011complexity}, 
\prb{$3$SAT} \cite{gopalan2009connectivity}, and
\prb{Independent Set} \cite{hearn2005pspace,hearn2009games};
a source problem in $\P$ frequently induces a reconfiguration problem in $\P$; e.g.,
\prb{Spanning Tree} \cite{ito2011complexity} and
\prb{$2$SAT} \cite{gopalan2009connectivity}.
Some exception are however known; e.g., 
\prb{$3$Coloring} \cite{cereceda2011finding} and
\prb{Shortest Path} \cite{bonsma2013complexity}.
We refer the readers to the surveys by
\citet{nishimura2018introduction,heuvel13complexity} and
the Combinatorial Reconfiguration wiki \cite{hoang2023combinatorial}
for more algorithmic and hardness results of reconfiguration problems.

To overcome the computational hardness of a reconfiguration problem,
we consider its \emph{optimization version},
which affords to relax the feasibility of intermediate solutions.
For example,
\MinmaxSetCovReconf \cite{ito2011complexity} is an optimization version of \SetCovReconf,
where we are allowed to use any cover of size greater than $k+1$,
but required to minimize the \emph{maximum size of any covers} in the reconfiguration sequence
(see \cref{subsec:appl:SetCover} for the formal definition).
Solving this problem approximately,
we may be able to find a ``reasonable'' reconfiguration sequence for \SetCovReconf
which consists of covers of size at most, say, $1\%$ larger than $k+1$.
Unlike \prb{Set Cover}, which is $\NP$-hard to approximate within a factor smaller than $\ln n$ 
\cite{dinur2014analytical,feige1998threshold,lund1994hardness},
\MinmaxSetCovReconf admits a $2$-factor approximation algorithm due to \citet[Theorem~6]{ito2011complexity}.
An immediate question is: \emph{Is this the best possible?}

Here, we summarize known hardness-of-approximation results on \MinmaxSetCovReconf.
\citet{ohsaka2024gap} showed that
\MinmaxSetCovReconf is $\PSPACE$-hard to approximate within a factor of $1.0029$
assuming the Reconfiguration Inapproximability Hypothesis \cite{ohsaka2023gap}, which was recently proved \cite{hirahara2024probabilistically,karthik2023inapproximability}. 
\citet{karthik2023inapproximability} proved
$\NP$-hardness of the $(2-\epsilon)$-factor approximation for any constant $\epsilon \in (0,1)$.
Both results are not optimal:
\citeauthor{ohsaka2024gap}'s factor $1.0029$ is far smaller than $2$, while
\citeauthor{karthik2023inapproximability}'s result is not $\PSPACE$-hardness.
This leaves a tantalizing possibility that there may exist a polynomial-length reconfiguration sequence that achieves a $1.0030$-factor approximation for \MinmaxSetCovReconf,
and hence the approximation problem may be in $\NP$.
Note that the $\PSPACE$-hardness result of \citeauthor{ohsaka2024gap} disproves the existence of a polynomial-length witness  (in particular, a polynomial-length reconfiguration sequence) for the $1.0029$-factor approximation under the assumption that $\NP \neq \PSPACE$.

\subsection{Our Results}
\label{subsec:intro:results}

We present optimal results
of $\PSPACE$-hardness of approximation for three reconfiguration problems.
Our first result is that
\MinmaxSetCovReconf is $\PSPACE$-hard to approximate within a factor smaller than $2$,
improving upon \citet[Corollary~4.2]{ohsaka2024gap} and \citet[Theorem~4]{karthik2023inapproximability}.
This is the first result that exhibits a sharp threshold for the approximation factor of any reconfiguration problem:
approximating within any factor below $2$ is $\PSPACE$-complete and within a $2$-factor is in $\P$ \cite{ito2011complexity}.

\begin{theorem}[informal; see \cref{thm:SetCover}]
\label{intro:thm:SetCover}
For a set system $\calF$ of universe size $N$ and its two covers $\calC^\sss$ and $\calC^\ttt$ of size $k$,
it is $\PSPACE$-complete to distinguish between the following cases\textup{:}
\begin{itemize}
    \item \textup{(}Completeness\textup{)}
    There exists a reconfiguration sequence from
    $\calC^\sss$ to $\calC^\ttt$ consisting only of covers of size at most $k+1$.
    \item \textup{(}Soundness\textup{)}
    Every reconfiguration sequence contains a cover of size greater than
    $(2-\epsilon(N))(k+1)$,
    where $\epsilon(N) \coloneq (\polyloglog N)^{-1}$.
\end{itemize}
In particular,
\MinmaxSetCovReconf is $\PSPACE$-hard to approximate
within a factor of $2-\frac{1}{\polyloglog N}$.
\end{theorem}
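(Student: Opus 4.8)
The plan is to prove \cref{intro:thm:SetCover} by chaining three reductions, the last of which is a reconfiguration analogue of the FGLSS reduction combined with Feige-style \emph{partition-system} covering gadgets. We start from the \PSPACE-hardness of \emph{gap} probabilistically checkable reconfiguration proofs of \citet{hirahara2024probabilistically}: for a reconfiguration verifier with $O(\log n)$ randomness, $O(1)$ queries, constant alphabet and perfect completeness, together with two proofs $\sqpi^\sss,\sqpi^\ttt$ both accepted with probability $1$, it is \PSPACE-hard to distinguish the case where some reconfiguration sequence keeps every proof accepted with probability $1$ from the case where every reconfiguration sequence passes through a proof accepted with probability at most $1-\delta$. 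Via a query-bundling/degree-reduction transformation—applied so that it maps reconfiguration sequences to reconfiguration sequences, which is the nonstandard point—this yields a gap version of \ParBCSPReconf (equivalently a reconfiguration \LabCovReconf) in which an elementary move changes a single variable, whose heavy side carries a large alphabet $\Sigma$ of size $L$, which has perfect completeness, and whose soundness $s\to 0$; here YES means a reconfiguration sequence of totally-satisfying assignments exists, and NO means every reconfiguration sequence of (possibly partial) assignments contains one satisfying at most an $s$-fraction of the constraints.

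The heart of the argument is the third reduction, to \MinmaxSetCovReconf. For each of the $k$ heavy variables $v$ we place a disjoint partition-system gadget $B_v$ into the universe $\calU$, designed so that a \emph{single} suitably chosen set covers $B_v$ when one commits to a label $a\in\Sigma$, whereas any cover of $B_v$ not of this committed form uses at least two sets; concretely $B_v$ has one ground element for each sign pattern over $\Sigma$, and the set $S_{v,a}$ omits exactly the element coded by $a$. We also add, for every constraint and every non-satisfying label pair, a few elements that become covered precisely when the labels chosen on the two endpoints satisfy that constraint, and we give the light (small-alphabet) side unit-cost gadgets that always decode. A totally-satisfying assignment $\sigma$ then corresponds to the honest cover $\{S_{v,\sigma(v)}\}_v$ of size $k$; changing one variable's label is simulated by first \emph{adding} the new set (size $k+1$) and then \emph{deleting} the old one (size $k$), and the universe stays covered throughout exactly because perfect completeness keeps every constraint satisfied at every intermediate assignment, so a non-satisfying pair never arises and the ambiguity introduced during the add-then-delete step is harmless. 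Hence the YES case produces a reconfiguration sequence of covers of $\calU$ of size at most $k+1$.

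For soundness, take any reconfiguration sequence of covers of size at most $(2-\epsilon)(k+1)$. Since each of the $k$ gadgets $B_v$ absorbs at least one set, fewer than $(1-\epsilon)k+O(1)$ of them can absorb two or more, so at each step at least $\epsilon k-O(1)$ heavy gadgets are covered by a single committed set and hence \emph{decode} to a label; the covered constraint-elements force these decoded labels, together with the always-decoded light side, to satisfy all incident constraints, and the decoded partial assignment changes in at most $O(1)$ variables per single-set edit. This yields a reconfiguration sequence of partial assignments each satisfying an $\Omega(\epsilon)$-fraction of the constraints, contradicting the soundness case once $s\ll\epsilon$; so it suffices to tune $s=\Theta(\epsilon)$. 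It remains to bound $N=|\calU|$: the gap CSP reconfiguration one can reach has soundness only inverse-\emph{poly}logarithmic in its alphabet size $L$—parallel repetition, which would buy inverse-polynomial soundness, does not respect reconfigurability—so driving $s$ below $\epsilon$ forces $L=2^{\poly(\epsilon^{-1})}$, and the partition-system gadgets then inflate the universe to $N=2^{2^{\poly(\epsilon^{-1})}}$, whence $\epsilon=(\polyloglog N)^{-1}$. Finally, \PSPACE-membership is routine: a reconfiguration sequence of polynomially bounded length is a certificate, and configuration-graph reachability lies in \PSPACE by Savitch's theorem.

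I expect two points to be the real obstacles. First, the partition-system gadget must deliver the factor-$2$ gap \emph{uniformly over all intermediate covers}, not just at the endpoints: every cover of size below $2(k+1)$—no matter how it arose—must decode to an honest partial assignment, decoding must be Lipschitz under single-set insertions and deletions, and the constraint-checking elements must not be coverable cheaply by sets from foreign gadgets. Second, the bundling/amplification producing the large-alphabet, low-soundness reconfiguration CSP has no analogue in the classical FGLSS reduction and must be shown to preserve the $\sqpi^\sss\!\to\!\sqpi^\ttt$ connectivity while keeping elementary moves ``local'' enough that each lifts to a single add-then-delete; threading its parameters so that the loss is genuinely $\polyloglog N$ rather than $\polylog N$—which is exactly what forces the doubly-exponential gadget size—is the delicate core of the argument.
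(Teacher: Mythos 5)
Your three-stage pipeline---PCRP theorem $\to$ low-soundness gap CSP reconfiguration $\to$ Set Cover Reconfiguration via Lund--Yannakakis-type covering gadgets---is indeed the architecture the paper follows, and you correctly flag the two places where the classical argument needs modification. But you leave both of them as unresolved ``obstacles,'' and your quantitative analysis of the alphabet size is off by one exponential, so as written the proposal does not constitute a proof.

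On the parameter analysis: you assert that the gap CSP reconfiguration one can reach has soundness only inverse-\emph{polylogarithmic} in its alphabet size $L$, forcing $L = 2^{\poly(\epsilon^{-1})}$ and a doubly-exponential universe $2^{2^{\poly(\epsilon^{-1})}}$. This is not what happens. \cref{prp:subconstant} takes the constant-query regular verifier of \cref{prp:regular} and amplifies it via an $O(\log \delta^{-1})$-step random walk on an expander over $\zo^{r(n)}$; this yields soundness $\delta$ with query complexity $q = O(\log\delta^{-1})$ and per-position degree $\poly(\delta^{-1})$. The FGLSS alphabet is then $\{\{0\},\{1\},\{0,1\}\}^q$, of size $3^q = \poly(\delta^{-1})$, i.e., soundness is inverse-\emph{polynomial} in $L$. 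The gadget $B = \zo^{\Sigma}$ therefore has size $2^{\poly(\epsilon^{-1})}$, and the $\polyloglog N$ in \cref{thm:SetCover} arises only because one instantiates $\epsilon(N) = 1/\polyloglog N$ so that $2^{|\Sigma|}$ stays polynomial in the number of vertices; nothing is doubly exponential in $\epsilon^{-1}$.

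On the two obstacles: the amplification in the paper is exactly the expander walk above, and because it leaves the proof format unchanged (only the verifier is modified) it trivially preserves $\pi^\sss \to \pi^\ttt$ connectivity; the delicate point is maintaining the degree bound $\poly(\delta^{-1})$ so that a single proof-bit flip touches few vertices. More importantly, you never address how a one-bit change in a PCRP proof becomes a one-vertex-at-a-time change in the FGLSS assignment: flipping a single bit $\pi_{i^\star}$ changes the local view at \emph{every} vertex $R$ with $i^\star \in I_R$. The paper resolves this with the alphabet-squaring trick (the per-coordinate symbol $\{0,1\}$ meaning ``both''): in the completeness proof (\cref{lem:MaxPar:completeness}), one first moves each affected vertex's $i^\star$-coordinate to $\{0,1\}$ one at a time, then to the new bit, keeping constraints \cref{eq:MaxPar:R1,eq:MaxPar:R2,eq:MaxPar:R1R2} satisfied throughout. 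Without this device the FGLSS reduction simply does not produce a valid reconfiguration sequence, so your ``add-then-delete'' completeness argument at the Set Cover level has nothing valid to lift from. A smaller issue: \cref{prb:MaxPar} is phrased via the fraction of \emph{assigned vertices} (with all constraints on assigned vertices required to hold), not the fraction of satisfied constraints; your decoding argument in fact bounds the former, so the target of the reduction should be stated that way to match the plurality-vote decoding of \cref{lem:MaxPar:soundness}.
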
\noindent
As a corollary of \cref{thm:SetCover} along with \cite{ohsaka2024gap},
the following $\PSPACE$-hardness of approximation is established for \DomSetReconf,
which also admits a $2$-factor approximation \cite{ito2011complexity}
(please refer to \cite{ohsaka2024gap} for the problem definition).

\begin{corollary}[from \cref{thm:SetCover} and \protect{\cite[Corollary~4.3]{ohsaka2024gap}}]
\label{intro:cor:DominatingSet}
\MinmaxDomSetReconf is $\PSPACE$-hard to approximate within a factor of $2-\frac{1}{\polyloglog N}$,
where $N$ is the number of vertices in a graph.
\end{corollary}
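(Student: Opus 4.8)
The plan is to reduce \MinmaxSetCovReconf to \MinmaxDomSetReconf by a polynomial-time, gap-preserving reduction and then invoke \cref{intro:thm:SetCover}; this is essentially what \cite[Corollary~4.3]{ohsaka2024gap} does, via the textbook reduction from \prb{Set Cover} to \prb{Dominating Set}. Given a set system $\calF$ over a universe $\calU$ of size $N$ with two covers $\calC^\sss,\calC^\ttt$ of size $k$, build the graph $G$ on $\{v_S : S \in \calF\} \cup \{w_u : u \in \calU\}$ in which the set-vertices form a clique and $v_S$ is adjacent to $w_u$ whenever $u \in S$, and let $D^\sss \coloneq \{v_S : S \in \calC^\sss\}$ and $D^\ttt \coloneq \{v_S : S \in \calC^\ttt\}$, both of size $k$. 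Since $G$ has $n = |\calF| + N = \poly(N)$ vertices, $\polyloglog n = \Theta(\polyloglog N)$, so it suffices to transfer the multiplicative gap of \cref{intro:thm:SetCover}.

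For completeness, a reconfiguration sequence $\calC^\sss = \calC_0,\dots,\calC_m = \calC^\ttt$ through covers of size $\le k+1$ maps to $D_j \coloneq \{v_S : S \in \calC_j\}$: each $D_j$ is a dominating set (it is nonempty, hence dominates all set-vertices through the clique, and it dominates $w_u$ through any $v_S$ with $S\ni u$, $S\in\calC_j$), consecutive $D_j$'s differ by one vertex because the $\calC_j$'s differ by one set, and $|D_j| = |\calC_j| \le k+1$.

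For soundness I argue by contraposition. Fix for each $u$ a set $S_u \ni u$ in $\calF$, and to each dominating set $D$ of $G$ associate the cover $\sigma(D) \coloneq \{S : v_S \in D\} \cup \{S_u : w_u \in D\}$; then $\sigma(D)$ covers $\calU$ (domination of $w_u$ forces $w_u \in D$ or some $v_S \in D$ with $S\ni u$), $|\sigma(D)| \le |D|$, and $\sigma(D^\sss) = \calC^\sss$, $\sigma(D^\ttt) = \calC^\ttt$. One checks that every single-vertex move $D \to D'$ changes $\sigma$ by adding or removing at most one set, so a reconfiguration sequence from $D^\sss$ to $D^\ttt$ with all dominating sets of size $\le t$ yields, after deleting stutters, a reconfiguration sequence from $\calC^\sss$ to $\calC^\ttt$ with all covers of size $\le t$. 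Taking $t = (2-\epsilon(N))(k+1)$ and applying the soundness case of \cref{intro:thm:SetCover} shows that every reconfiguration sequence for $G$ contains a dominating set of size $> (2-\epsilon(N))(k+1) = \bigl(2 - \Theta(1/\polyloglog n)\bigr)(k+1)$; since the reduction runs in polynomial time, this gives the claimed \PSPACE-hardness of approximating within $2 - \frac{1}{\polyloglog n}$.

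The main obstacle is the soundness direction: showing that admitting element-vertices $w_u$ into an intermediate dominating set cannot help beat the factor $2$ — which is exactly what the map $\sigma$ and its $1$-Lipschitz behaviour under single-vertex moves take care of — together with the parameter bookkeeping that turns the universe size $N$ into the vertex count $n = \poly(N)$, so that only a constant factor is lost inside the $\polyloglog$.
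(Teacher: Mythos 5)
Your proposal is correct and takes essentially the same route as the paper, which establishes the corollary simply by invoking the gap-preserving reduction from \MinmaxSetCovReconf to \MinmaxDomSetReconf of \cite[Corollary~4.3]{ohsaka2024gap}. You have spelled out exactly that textbook reduction (set-vertices forming a clique, element-vertices attached to the sets containing them) together with the $1$-Lipschitz projection $\sigma$ that collapses element-vertices back to sets, and the parameter bookkeeping $n = |\calF| + N = \poly(N)$, hence $\polyloglog n = \Theta(\polyloglog N)$, is handled correctly.
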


Our third result is a similar inapproximability result for \VerCovReconf,
which is defined analogously to \SetCovReconf
(see
\ifthenelse{\boolean{FULL}}{\cref{subsec:appl:VertexCover}}{\cref{subsec:appl:VertexCover,app:appl}}
for the formal definition).
\MinmaxVerCovReconf is easily shown to be $2$-factor approximable \cite{ito2011complexity};
we prove that this is optimal.

\begin{theorem}[informal; see \cref{thm:VertexCover}]
\label{intro:thm:VertexCover}
For any constant $\epsilon \in (0,1)$,
a $\poly(\epsilon^{-1})$-uniform hypergraph, and
its two vertex covers $\calC^\sss$ and $\calC^\ttt$ of size $k$,
it is $\PSPACE$-complete to distinguish between the following cases\textup{:}
\begin{itemize}
    \item \textup{(}Completeness\textup{)}
    There exists a reconfiguration sequence from $\calC^\sss$ to $\calC^\ttt$
    consisting only of vertex covers of size at most $k+1$.
    \item \textup{(}Soundness\textup{)}
    Every reconfiguration sequence contains a vertex cover of size greater than $(2-\epsilon)(k+1)$.
\end{itemize}
In particular, \MinmaxVerCovReconf
on $\poly(\epsilon^{-1})$-uniform hypergraphs
is $\PSPACE$-hard to approximate within a factor of $2-\epsilon$.
\end{theorem}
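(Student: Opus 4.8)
The plan is to reuse the route behind \cref{thm:SetCover}: begin with the $\PSPACE$-hardness of a gap version of \ParBCSPReconf supplied by the Probabilistically Checkable Reconfiguration Proofs of Hirahara and Ohsaka, and then run a reconfiguration analogue of the FGLSS reduction, but arranged so that the produced set system is the incidence structure of a $d$-uniform hypergraph with $d=\poly(\epsilon^{-1})$. It helps to note that \MinmaxVerCovReconf on $d$-uniform hypergraphs is exactly \MinmaxSetCovReconf restricted to set systems in which every element of the universe lies in exactly $d$ sets (elements are hyperedges, sets are vertices), so the task is to realize a \cref{thm:SetCover}-style gap while keeping the element-degree bounded by $\poly(\epsilon^{-1})$, and this is precisely why one settles for the constant factor $2-\epsilon$ instead of $2-\tfrac1{\polyloglog N}$. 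Concretely, after fixing $\epsilon\in(0,1)$ and putting $d\coloneq\poly(\epsilon^{-1})$, I would first obtain a gap instance of \ParBCSPReconf (equivalently \LabCovReconf) on a bounded-degree constraint graph with \emph{perfect completeness} --- in the completeness case $\psi^\sss$ reconfigures to $\psi^\ttt$ through labelings satisfying every constraint --- and with the \emph{strong soundness} that in the soundness case every reconfiguration sequence of (partial) labelings contains one satisfying at most a $\tfrac12+\epsilon'$ fraction of the constraints, where $\epsilon'=\poly(\epsilon)$. Pushing the soundness down to $\tfrac12$ rather than merely below $1$ is what forces the arity of the intermediate instance, hence the uniformity $d$, to grow polynomially in $\epsilon^{-1}$, and is obtained by composing the basic gap with the gap amplification available in the reconfiguration setting, iterated a number of times depending only on $\epsilon$.

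The hypergraph $H$ then has as vertices the pairs $(x,a)$ (variable $x$, label $a$) together with a controlled number of ``dummy'' vertices, and for each constraint on a pair $(x,y)$ it carries one $d$-uniform ``check'' hyperedge placed on the (few) vertices that encode the accepting local views of that constraint, padded with dummies to exactly $d$ endpoints; the gadget is calibrated so that a vertex set $\calC$ hits a check on its base vertices alone exactly when the partial labeling read off $\calC$ satisfies (or leaves under-determined) the corresponding constraint, and otherwise $\calC$ must spend an extra vertex on that check. The two covers $\calC^\sss,\calC^\ttt$ of size $k$ are the vertex covers canonically associated with $\psi^\sss,\psi^\ttt$ (all labels of each variable but the chosen one, together with all dummies), and the number of constraints is rescaled --- by duplicating constraints and adding free dummy vertices --- so that $k$ equals, up to a $1\pm O(\epsilon)$ factor, half the number of check hyperedges. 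For completeness, a single relabeling of $x$ from $a$ to $a'$ is simulated by two vertex moves: add $(x,a')$, momentarily reaching size $k+1$ while still a vertex cover because every $x$-incident check now contains every vertex of the form $(x,b)$, then delete $(x,a)$; concatenating over the given constraint-preserving reconfiguration of labelings produces a reconfiguration of vertex covers of maximum size $k+1$.

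For soundness, suppose toward a contradiction that $H$ admits a reconfiguration of vertex covers of maximum size at most $(2-\epsilon)(k+1)$. Reading off from each intermediate cover the partial labeling ``$x\mapsto a$ iff $(x,a)$ is the unique $x$-label absent from the cover,'' and contracting the moves touching only dummy vertices, yields a valid reconfiguration of partial labelings from $\psi^\sss$ to $\psi^\ttt$; by the soundness guarantee some intermediate labeling $\psi^\star$ satisfies at most a $\tfrac12+\epsilon'$ fraction of the constraints. Each of the $\ge(\tfrac12-\epsilon')$-fraction of constraints not satisfied by $\psi^\star$ forces the cover at that step to contain a vertex beyond those encoding $\psi^\star$; arranging the per-constraint gadgets so that these extra vertices overlap only $O(1)$-ly and summing, the cover at that step has size at least $k+(1-O(\epsilon))k>(2-\epsilon)(k+1)$ --- a contradiction --- once $d=\poly(\epsilon^{-1})$, $\epsilon'=\poly(\epsilon)$, and the rescaling constants are tuned against one another.

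I expect the main obstacle to be exactly this soundness bookkeeping together with the gadget design. The factor $2$ is not the static integrality gap of any single gadget (which would stop at $\tfrac{\ell}{\ell-1}$ for an $\ell$-uniform layered gadget); it has to be manufactured from the reconfiguration structure --- the cover is forced, at a bad intermediate step, to effectively carry both $\calC^\sss$ and $\calC^\ttt$ simultaneously --- and this must be engineered while simultaneously (i) keeping every hyperedge exactly $d$-uniform with $d$ only polynomial in $\epsilon^{-1}$; (ii) encoding partial labelings as vertex sets in a way that makes the gadget punish both violated and under-determined constraints, so the form of the \ParBCSPReconf soundness statement and the gadget's behavior on unassigned variables actually mesh; (iii) keeping all per-constraint and per-variable overheads, as well as the gap loss incurred by the soundness amplification, down to $O(\epsilon)$ fractions rather than, say, $\exp(\epsilon^{-1})$-sized; and (iv) making the canonical covers have size \emph{exactly} $k$ so that the claimed $(k+1)$ versus $(2-\epsilon)(k+1)$ threshold is literal.
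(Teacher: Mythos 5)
Your opening observation---that $d$-uniform \MinmaxVerCovReconf is exactly \MinmaxSetCovReconf with every universe element lying in exactly $d$ sets---is correct and is in fact the whole engine of the paper's proof: \cref{thm:VertexCover} simply dualizes (``inverts the index'' of) the Lund--Yannakakis gadget of \cref{thm:SetCover}, taking the pairs $(v,\alpha)\in\calV\times\Sigma$ as vertices and, as hyperedges, the sets $T_{e,\vec{q}}=\{(v,\alpha):(e,\vec{q})\in S_{v,\alpha}\}$ padded to be $2|\Sigma|$-uniform; the factor-$2$ gap is then inherited verbatim because $\minlab_G(\asg^\sss\reco\asg^\ttt)=\cost_H(\calC^\sss\reco\calC^\ttt)$. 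You abandon this route, however, and instead posit a gap instance of \ParBCSPReconf whose soundness is that some intermediate labeling satisfies at most a $\tfrac12+\epsilon'$ fraction of \emph{constraints}. That is not what the paper establishes: the soundness of \cref{thm:MaxPar} concerns the fraction of \emph{assigned vertices} in a fully satisfying partial assignment, and the gap amplification in \cref{prp:subconstant} drives that quantity down, not a constraint-satisfaction value. No $\tfrac12+\epsilon'$ value soundness for reconfiguration $2$CSP is proved here or in the cited works, and the factor $2$ does not come from such a gap---it comes from the elementary observation inside the proof of \cref{thm:LabelCover} that whenever $\|\asg^{(t)}\|$ is small, almost every vertex must carry at least two labels, which already doubles $\minlab$.

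Even granting your $\tfrac12$-soundness premise, the reduction you sketch has unfilled holes. Your encoding (cover $=$ non-chosen labels) is the complement of the paper's (cover $=$ chosen labels, \cref{eq:VertexCover:Casgmt}); under it a single relabeling $a\to a'$ is simulated by first adding $(x,a)$ and then removing $(x,a')$, not by adding $(x,a')$ as you write, so the completeness step as stated does not move. More seriously, your gadget attaches a single $d$-uniform ``check'' hyperedge per constraint and then argues that each of an $\Omega(1)$-fraction of violated constraints forces a distinct extra vertex in the cover; you defer this to ``arranging the per-constraint gadgets so that the extra vertices overlap only $O(1)$-ly,'' but you neither exhibit such a gadget nor show the overlap is controllable, and in a bounded-degree constraint graph constraints do share variables, so distinctness of the forced extras is precisely what needs proof. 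The paper avoids this counting entirely: the dualized gadget yields the identity $\minlab_G=\cost_H$ (via \cref{lem:SetCover}), so the gap transfers with no per-constraint bookkeeping, and the $\poly(\epsilon^{-1})$ uniformity falls out for free because each $T_{e,\vec q}$ only touches pairs $(v,\alpha)$ with $v$ an endpoint of $e$, i.e., at most $2|\Sigma|=\poly(\epsilon^{-1})$ vertices before padding. I would drop the $\tfrac12+\epsilon'$ detour and the bespoke gadget and instead argue directly by dualizing \cref{thm:SetCover}.
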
\noindent
We highlight here that 
the size of hyperedges in a \VerCovReconf instance of \cref{thm:VertexCover} depends (polynomially) only on the value of $\epsilon^{-1}$,
whereas the size of subsets in a \SetCovReconf instance of \cref{thm:SetCover} may depend on the universe size $N$.

\ifthenelse{\boolean{FULL}}{}{
Due to space limitation, proofs marked with $\ast$ are deferred to \cref{app:FGLSS,app:appl}.
}

\subsection{Proof Overview}
\label{subsec:intro:proof}

At a high level, our proofs of \cref{intro:thm:SetCover,intro:thm:VertexCover}
are given by combining the ideas developed in 
\cite{hirahara2024probabilistically,ohsaka2023gap,ohsaka2024gap,karthik2023inapproximability}.
\citet{karthik2023inapproximability}
proved $\NP$-hardness of the $(2-\epsilon)$-factor approximation of $\MinmaxSetCovReconf$ as follows.
\begin{enumerate}
    \item Starting from the PCP theorem for $\NP$ \cite{arora1998proof,arora1998probabilistic}, they applied the FGLSS reduction \cite{feige1996interactive} to prove $\NP$-hardness of the $\bigO(\epsilon^{-1})$-factor approximation of an intermediate problem, which we call \MaxParBCSP.
    \item The $\bigO(\epsilon^{-1})$-factor approximation of \MaxParBCSP is reduced to the $(2-\epsilon)$-factor approximation of a  reconfiguration problem, which we call \LabCovReconf (\cref{prb:MinLab}).
    \item \LabCovReconf can be reduced to \MinmaxSetCovReconf via approximation-preserving reductions  of \citet{ohsaka2024gap,lund1994hardness}.
\end{enumerate}
Here, \MaxParBCSP is defined as follows.
The input consists of a graph $G = (\calV, \calE)$, a finite alphabet $\Sigma$,
and constraints $\psi_e \colon \Sigma^2 \to \zo$ for each edge $e \in \calE$.
A \emph{partial assignment} is a function  $\asg \colon \calV \to \Sigma \cup \{\bot\}$, where the symbol $\bot$ indicates ``unassigned.''
The task is to maximize the fraction of assigned vertices in a partial assignment $\asg$ that \emph{satisfies} $\psi_e$ for every $e = (v, w) \in \calE$; i.e., $\psi_e(\asg(v), \asg(w)) = 1$ if $\asg(v) \neq \bot$ and $\asg(w) \neq \bot$.

To improve this $\NP$-hardness result to $\PSPACE$-hardness,
we replace the starting point with 
the PCRP (Probabilistically Checkable Reconfiguration Proof) system of \citet{hirahara2024probabilistically},
which is a reconfiguration analogue of the PCP theorem.
We also replace \MaxParBCSP with its reconfiguration analogue, which we call \ParBCSPReconf (\cref{prb:MaxPar}).
The proof of $\PSPACE$-hardness is outlined as follows.
\begin{enumerate}
    \item Starting from the \emph{PCRP theorem} for $\PSPACE$ \cite{hirahara2024probabilistically},
    we apply the FGLSS reduction \cite{feige1996interactive} to prove $\PSPACE$-hardness of \ParBCSPReconf (\cref{subsec:FGLSS:subconstant,subsec:FGLSS:MaxPar}).
    \item
    We reduce \ParBCSPReconf to \LabCovReconf (\cref{subsec:FGLSS:LabelCover}).
    \item
    We reduce \LabCovReconf to \MinmaxSetCovReconf by the reductions of \cite{ohsaka2024gap,lund1994hardness} (\cref{subsec:appl:SetCover}).
\end{enumerate}
The second and third steps are similar to the previous work \cite{karthik2023inapproximability}.
Our main technical contribution lies in the first step, which we explain below.

Roughly speaking, the PCRP theorem \cite{hirahara2024probabilistically} shows that
any $\PSPACE$ computation on inputs of length $n$ can be encoded into a sequence $\pi^{(1)}, \cdots, \pi^{(T)} \in \zo^{\poly(n)}$ of exponentially many proofs such that any adjacent pair of proofs $\pi^{(t)}$ and $\pi^{(t+1)}$ differs in at most one bit, and each proof $\pi^{(t)}$ can be probabilistically checked by reading $q(n)$ bits of the proof and using $r(n)$ random bits, where $q(n) = \bigO(1)$ and $r(n) = \bigO(\log n)$.
The FGLSS reduction \cite{feige1996interactive} transforms such a proof system into a graph $G = (\calV, \calE)$, an alphabet $\Sigma$, and constraints $(\psi_e)_{e \in \calE}$
such that each vertex $v \in \calV \coloneq \zo^{r(n)}$ corresponds to a coin flip sequence of a verifier, each value $\alpha \in \Sigma = \zo^{q(n)}$ corresponds to a local view of the verifier,
and the constraints $\psi_e$ check the consistency of two local views of the verifier.
This reduction works in the case of the PCP theorem and proves $\NP$-hardness of \MaxParBCSP \cite{karthik2023inapproximability}.
However, the reduction does not work in the case of the PCRP theorem:
We need to ensure that the reconfiguration sequence of proofs $\pi^{(1)}, \cdots, \pi^{(T)}$ is transformed into a sequence of partial assignments $\asg^{(1)}, \cdots, \asg^{(T)}$, each adjacent pair of which differs in at most one vertex.
The issue is that changing one bit in the original proof $\pi^{(t)}$ may result in changing the assignments of many vertices in a partial assignment $\asg^{(t)} \colon \calV \to \Sigma \cup \{\bot\}$.

To address this issue, we employ the ideas developed in \cite{ohsaka2023gap,ohsaka2024gap},
called the \emph{alphabet squaring trick},
and modify the FGLSS reduction as follows.
Given a verifier that reads $q(n)$ bits of a proof,
we define 
the alphabet as $\Sigma = \{0, 1, 01\}^{q(n)}$.
Intuitively, the symbol ``$01$'' means that we are taking $0$ and $1$ simultaneously.
This enables us to construct a reconfiguration sequence of 
partial assignments $\asg^{(1)}, \cdots, \asg^{(T)}$ from a reconfiguration sequence of proofs $\pi^{(1)}, \cdots, \pi^{(T)}$.
Details can be found in \Cref{subsec:FGLSS:MaxPar}.

%The case when $T = 1$ corresponds to the PCP theorem \cite{arora1998proof}.
%In this case, the FGLSS reduction \cite{feige1996interactive} is given as follows.

\subsection{Related Work}
\label{subsec:intro:related}

\citet{ito2011complexity} showed that
optimization versions of \prb{SAT Reconfiguration} and \prb{Clique Reconfiguration}
are $\NP$-hard to approximate,
relying on $\NP$-hardness of approximating \prb{Max SAT}~\cite{hastad2001some} and \prb{Max Clique}~\cite{hastad1999clique}, respectively.
Note that their $\NP$-hardness results are not optimal
since \prb{SAT Reconfiguration} and \prb{Clique Reconfiguration} are $\PSPACE$-complete.
Toward $\PSPACE$-hardness of approximation for reconfiguration problems,
\citet{ohsaka2023gap} proposed the \emph{Reconfiguration Inapproximability Hypothesis} (RIH),
which postulates that a reconfiguration analogue of \prb{Constraint Satisfaction Problem}
is $\PSPACE$-hard to approximate, and demonstrated
$\PSPACE$-hardness of approximation for many popular reconfiguration problems,
including those of 
\prb{$3$SAT}, \prb{Independent Set}, \prb{Vertex Cover}, \prb{Clique}, \prb{Dominating Set}, and \prb{Set Cover}.
\citet{ohsaka2024gap} adapted \citeauthor{dinur2007pcp}'s gap amplification \cite{dinur2007pcp}
to demonstrate that under RIH,
optimization versions of \prb{$2$CSP Reconfiguration} and \SetCovReconf are 
$\PSPACE$-hard to approximate within a factor of $0.9942$ and $1.0029$, respectively.

Very recently, \citet{karthik2023inapproximability,hirahara2024probabilistically}
announced the proof of RIH independently,
implying that the above $\PSPACE$-hardness results hold \emph{unconditionally}.
\citet{karthik2023inapproximability} further proved that
(optimization versions of) \prb{$2$CSP Reconfiguration} and \SetCovReconf
are $\NP$-hard to approximate within a factor smaller than $2$,
which is quantitatively tight
because both problems are (nearly) $2$-factor approximable.
Our result partially resolves
an open question of \cite[Section~6]{karthik2023inapproximability}:
``\emph{Can we prove tight $\PSPACE$-hardness of approximation results for GapMaxMin-2-$\mathsf{CSP}_q$ and Set Cover Reconfiguration?}''

Other reconfiguration problems whose approximability was investigated include
those of 
\prb{Set Cover} \cite{ito2011complexity},
\prb{Subset Sum} \cite{ito2014approximability}, and
\prb{Submodular Maximization} \cite{ohsaka2022reconfiguration}.
We note that optimization variants of reconfiguration problems frequently
refer to those of \emph{the shortest reconfiguration sequence}
\cite{mouawad2017shortest,bonamy2020shortest,ito2022shortest,kaminski2011shortest},
which are orthogonal to this study.

\section{Preliminaries}
\label{sec:pre}

\subsection{Notations}
\label{subsec:pre:notations}

For a nonnegative integer $n \in \bbN$, let $ [n] \coloneq \{1, 2, \ldots, n\} $.
Unless otherwise specified, the base of logarithms is $2$.
A \emph{sequence} $\scrS$ of a finite number of objects $S^{(1)}, \ldots, S^{(T)}$
is denoted by $( S^{(1)}, \ldots, S^{(T)} )$, and
we write $S^{(t)} \in \scrS$ to indicate that $S^{(t)}$ appears in $\scrS$.
Let $\Sigma$ be a finite set called \emph{alphabet}.
For a length-$n$ string $\pi \in \Sigma^n$ and a finite sequence of indices $I \subseteq [n]^*$,
we use $\pi|_I \coloneq (\pi_i)_{i \in I}$ to denote the restriction of $\pi$ to $I$.
The \emph{Hamming distance} between two strings $f,g \in \Sigma^n$,
denoted by $\Delta(f,g)$,
is defined as the number of positions on which $f$ and $g$ differ; namely,
\begin{align}
    \Delta(f,g) \coloneq \left|\Bigl\{i \in [n] \Bigm| f_i \neq g_i \Bigr\}\right|.
\end{align}

\subsection{Reconfiguration Problems on Constraint Graphs}

\paragraph{Constraint Graphs.}
In this section, we formulate reconfiguration problems on constraint graphs.
The notion of \emph{constraint graph} is defined as follows.

\begin{definition}
A \emph{$q$-ary constraint graph} is defined as a tuple $G=(\calV,\calE,\Sigma,\Psi)$ such that
\begin{itemize}
    \item
    $(\calV,\calE)$ is a $q$-uniform\footnote{
        A hypergraph is said to be \emph{$q$-uniform} if each of its hyperedges has size exactly $q$.
    } hypergraph called the \emph{underlying graph},
    \item
    $\Sigma$ is a finite set called the \emph{alphabet}, and
    \item
    $\Psi = (\psi_e)_{e \in \calE}$ is a collection of $q$-ary \emph{constraints}, where
    each $\psi_e \colon \Sigma^e \to \zo$ is a circuit.
\end{itemize}
A binary constraint graph is simply referred to as a \emph{constraint graph}.
\end{definition}
For an \emph{assignment} $\asg \colon \calV \to \Sigma$,
we say that $\asg$ \emph{satisfies} a hyperedge $e = \{v_1, \ldots, v_q\} \in \calE$ (or a constraint $\psi_e$) if
$\psi_e(\asg(e)) = 1$,
where $\asg(e) \coloneq (\asg(v_1), \ldots, \asg(v_q))$, and
$\asg$ \emph{satisfies} $G$ if it satisfies all the hyperedges of $G$.
In the \prb{$q$CSP Reconfiguration} problem,
for a $q$-ary constraint graph $G$ and its two satisfying assignments $\asg^\sss$ and $\asg^\ttt$,
we are required to decide if there exists a reconfiguration sequence from $\asg^\sss$ to $\asg^\ttt$
consisting only of satisfying assignments for $G$,
each adjacent pair of which differs in at most one vertex.
\prb{$q$CSP Reconfiguration} is $\PSPACE$-complete in general \cite{gopalan2009connectivity,ito2011complexity};
thus, we formulate its two optimization versions.

\paragraph{\ParBCSPReconf.}
For a constraint graph $G = (\calV,\calE,\Sigma,\Psi = (\psi_e)_{e \in \calE})$,
a \emph{partial assignment} is defined as a function $\asg \colon \calV \to \Sigma \cup \{\bot\}$,
where the symbol $\bot$ indicates ``unassigned.''
We say that a partial assignment
$\asg \colon \calV \to \Sigma \cup \{\bot\}$
\emph{satisfies} an edge $e = (v,w) \in \calE$ if
$\psi_e(\asg(v), \asg(w)) = 1$ whenever $\asg(v) \neq \bot$ and $\asg(w) \neq \bot$.
The \emph{size} of $\asg$, denoted by $\|\asg\|$, is defined as
the number of vertices whose value is assigned; namely,
\begin{align}
    \|\asg\| \coloneq \left|\Bigl\{ v \in \calV \Bigm| \asg(v) \neq \bot \Bigr\}\right|.
\end{align}
For two satisfying partial assignments $\asg^\sss$ and $\asg^\ttt$ for $G$,
a \emph{reconfiguration partial assignment sequence from $\asg^\sss$ to $\asg^\ttt$}
is a sequence
$\sqasg = ( \asg^{(1)}, \ldots \asg^{(T)} )$
of satisfying partial assignments
such that
$\asg^{(1)} = \asg^\sss$,
$\asg^{(T)} = \asg^\ttt$, and
$\Delta(\asg^{(t)}, \asg^{(t+1)}) \leq 1$
(i.e., $\asg^{(t)}$ and $\asg^{(t+1)}$ differ in at most one vertex)
for all $t$.
For any reconfiguration partial assignment sequence
$\sqasg = ( \asg^{(1)}, \ldots, \asg^{(T)} )$,
we define $\|\sqasg\|_{\min}$ as
\begin{align}
    \|\sqasg\|_{\min} \coloneq \min_{1 \leq t \leq T} \|\asg^{(t)}\|.
\end{align}
\ParBCSPReconf is formally defined as follows:

\begin{problem}[\ParBCSPReconf]
\label{prb:MaxPar}
For a constraint graph
$G = (\calV,\calE,\Sigma,\Psi = (\psi_e)_{e \in \calE})$ and
its two satisfying partial assignments $\asg^\sss, \asg^\ttt \colon \calV \to \Sigma \cup \{\bot\}$,
we are required to find a reconfiguration partial assignment sequence $\sqasg$
from $\asg^\sss$ to $\asg^\ttt$
such that $\|\sqasg\|_{\min}$ is maximized.
\end{problem}
\noindent
Let $\maxpar_G(\asg^\sss \reco \asg^\ttt)$ denote the maximum value of
$\frac{\|\sqasg\|_{\min}}{|\calV|}$
over all possible reconfiguration sequences $\sqasg$ from $\asg^\sss$ to $\asg^\ttt$; namely,
\begin{align}
    \maxpar_G(\asg^\sss \reco \asg^\ttt)
    \coloneq \max_{\sqasg = ( \asg^\sss, \ldots, \asg^\ttt )} \frac{\|\sqasg\|_{\min}}{|\calV|}.
\end{align}
Note that $0 \leq \maxpar_G(\asg^\sss \reco \asg^\ttt) \leq 1$.
For every numbers $0 \leq s \leq c \leq 1$,
\prb{Gap$_{c,s}$ \ParBCSPReconf}
requests to determine for a constraint graph $G$ and its two satisfying partial assignments $\asg^\sss$ and $\asg^\ttt$,
whether $\maxpar_G(\asg^\sss \reco \asg^\ttt) \geq c$ or
$\maxpar_G(\asg^\sss \reco \asg^\ttt) < s$.
Note that we can assume
$\|\asg^\sss\| = \|\asg^\ttt\| = |\calV|$ when $c=1$.

\paragraph{\LabCovReconf.}
For a constraint graph $G = (\calV,\calE,\Sigma,\Psi)$,
a \emph{multi-assignment} is defined as a function
$\asg \colon \calV \to 2^\Sigma$.
We say that a multi-assignment $\asg$ \emph{satisfies} edge $e = (v,w) \in \calE$
if there exists a pair
$(\alpha, \beta) \in \asg(v) \times \asg(w)$
such that $\psi_e(\alpha, \beta) = 1$.
The size of $\asg$, denoted by $\|\asg\|$, is defined as
the sum of $|\asg(v)|$ over all $v \in \calV$; namely,
\begin{align}
    \|\asg\| \coloneq \sum_{v \in \calV} |\asg(v)|.
\end{align}
For two satisfying multi-assignments $\asg^\sss$ and $\asg^\ttt$ for $G$,
a \emph{reconfiguration multi-assignment sequence from $\asg^\sss$ to $\asg^\ttt$}
is a sequence $\sqasg = ( \asg^{(1)}, \ldots, \asg^{(T)} )$
of satisfying multi-assignments
such that
$\asg^{(1)} = \asg^\sss$,
$\asg^{(T)} = \asg^\ttt$, and
\begin{align}
    \sum_{v \in \calV} \Bigl| \asg^{(t)}(v) \triangle \asg^{(t+1)}(v) \Bigr| \leq 1 \; \text{ for all } t.
\end{align}
For any reconfiguration multi-assignment sequence
$\sqasg = ( \asg^{(1)}, \ldots, \asg^{(T)} )$,
we define $\|\sqasg\|_{\max}$ as
\begin{align}
    \|\sqasg\|_{\max} \coloneq \max_{1 \leq t \leq T} \|\asg^{(t)}\|.
\end{align}
\LabCovReconf is formally defined as follows.\footnote{
    This problem can be thought of as a reconfiguration analogue of \prb{Min Rep} \cite{charikar2011improved}.
}

\begin{problem}[\LabCovReconf]
\label{prb:MinLab}
For a constraint graph $G = (\calV,\calE,\Sigma,\Psi)$ and its two satisfying multi-assignments
$\asg^\sss, \asg^\ttt \colon \calV \to 2^\Sigma$,
we are required to find a reconfiguration multi-assignment sequence $\sqasg$
from $\asg^\sss$ to $\asg^\ttt$ such that
$\|\sqasg\|_{\max}$ is minimized.
\end{problem}
\noindent
Let $\minlab_G(\asg^\sss \reco \asg^\ttt)$
denote the minimum value of $\frac{\|\sqasg\|_{\max}}{|\calV|+1}$
over all possible reconfiguration multi-assignment sequences $F$ from $\asg^\sss$ to $\asg^\ttt$;
namely,
\begin{align}
    \minlab_G(\asg^\sss \reco \asg^\ttt)
    \coloneq \min_{\sqasg = ( \asg^\sss, \ldots, \asg^\ttt )}
    \frac{\|\sqasg\|_{\max}}{|\calV|+1}.
\end{align}
Note that $\minlab_G(\asg^\sss \reco \asg^\ttt) \geq \frac{|\calV|}{|\calV|+1}$.
For every numbers $1 \leq c \leq s$,
\prb{Gap$_{c,s}$ \LabCovReconf}
requests to determine
whether $\minlab_G(\asg^\sss \reco \asg^\ttt) \leq c$ or
$\minlab_G(\asg^\sss \reco \asg^\ttt) > s$
for a constraint graph $G$ and
its two satisfying multi-assignments $\asg^\sss$ and $\asg^\ttt$.
Note that we can assume
$\frac{\|\asg^\sss\|}{|\calV|+1} = \frac{\|\asg^\ttt\|}{|\calV|+1} \leq 1$ when $c=1$.

\subsection{Probabilistically Checkable Reconfiguration Proof Systems}
\label{subsec:pre:PCRP}

First, we formally define the notion of \emph{verifier}.

\begin{definition}
A \emph{verifier} with
\emph{randomness complexity} $r \colon \bbN \to \bbN$ and
\emph{query complexity} $q \colon \bbN \to \bbN$
is a probabilistic polynomial-time algorithm $V$
that given an input $x \in \zo^*$,
tosses $r = r(|x|)$ random bits $R$ and uses
$R$ to generate a sequence of $q = q(|x|)$ queries
$I = (i_1, \ldots, i_q)$ and a circuit $D \colon \zo^q \to \zo$.
We write $(I,D) \sim V(x)$ to denote the random variable
for a pair of the query sequence and circuit generated by $V$ on input $x \in \{0,1\}^*$.
Denote by $V^\pi(x) \coloneq D(\pi|_I)$ the output of $V$ on input $x$
given oracle access to a proof $\pi \in \{0,1\}^*$.
We say that $V(x)$ \emph{accepts} a proof $\pi$ if 
$V^\pi(x) = 1$; i.e.,
$D(\pi|_I) = 1$ for $(I,D) \sim V(x)$.
\end{definition}\noindent
We proceed to the definition of \emph{Probabilistically Checkable Reconfiguration Proofs} (PCRPs) due to 
\citet{hirahara2024probabilistically}, which offer
a PCP-type characterization of $\PSPACE$.
For any pair of proofs $\pi^\sss, \pi^\ttt \in \zo^\ell$,
a \emph{reconfiguration sequence from $\pi^\sss$ to $\pi^\ttt$}
is a sequence
$(\pi^{(1)}, \ldots, \pi^{(T)}) \in (\zo^\ell)^*$ such that
$\pi^{(1)} = \pi^\sss$,
$\pi^{(T)} = \pi^\ttt$, and 
$\Delta(\pi^{(t)}, \pi^{(t+1)}) \leq 1$
(i.e., $\pi^{(t)}$ and $\pi^{(t+1)}$ differ in at most one bit)
for all $t$.

\begin{theorem}[PCRP theorem of \protect{\citet{hirahara2024probabilistically}}]
\label{thm:PCRP}
For any language $L$ in \PSPACE,
there exists a verifier $V$ with
randomness complexity $r(n) = \bigO(\log n)$ and
query complexity $q(n) = \bigO(1)$,
coupled with polynomial-time computable functions
$\pi^\sss, \pi^\ttt \colon \zo^* \to \zo^*$, such that
the following hold for any input $x \in \zo^*$\textup{:}
\begin{itemize}
    \item \textup{(}Completeness\textup{)}
    If $x \in L$, there exists a reconfiguration sequence
    $\Pi = ( \pi^{(1)}, \ldots, \pi^{(T)} )$
    from $\pi^\sss(x)$ to $\pi^\ttt(x)$ over $\zo^{\poly(|x|)}$ such that
    $V(x)$ accepts every proof with probability $1$\textup{;} namely,
    \begin{align}
        \forall t \in [T], \quad \Pr\Bigl[V(x) \text{ accepts } \pi^{(t)}\Bigr] = 1.
    \end{align}
    \item \textup{(}Soundness\textup{)}
    If $x \notin L$, every reconfiguration sequence 
    $\Pi = ( \pi^{(1)}, \ldots, \pi^{(T)} )$
    from $\pi^\sss(x)$ to $\pi^\ttt(x)$ over $\zo^{\poly(|x|)}$
    includes a proof that is rejected by $V(x)$ with probability more than $\frac{1}{2}$\textup{;} namely,
    \begin{align}
        \exists t \in [T], \quad \Pr\Bigl[V(x) \text{ accepts } \pi^{(t)}\Bigr] < \frac{1}{2}.
    \end{align}
\end{itemize}    
\end{theorem}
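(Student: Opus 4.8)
The plan is to rederive \cref{thm:PCRP}, the PCRP theorem of \citet{hirahara2024probabilistically}, in two stages that parallel the proof of the PCP theorem \cite{arora1998proof,arora1998probabilistic} but track an entire reconfiguration sequence of proofs instead of a single proof. For the first stage, a ``weak'' PCRP, I would start from a $\PSPACE$-complete reconfiguration problem with a local adjacency relation — concretely, configuration-graph reachability for a polynomial-space Turing machine, which reduces (via the ingredients behind \cite{gopalan2009connectivity,ito2011complexity}) to a gap version of \prb{$3$CSP Reconfiguration} on a constraint graph with $m = \poly(n)$ constraints. Take the proof to be an encoding of the current assignment, and let the verifier pick one constraint uniformly, read the $O(1)$ variables it touches, and accept iff that constraint is satisfied; this gives randomness $O(\log n)$, query complexity $O(1)$, perfect completeness, and soundness $1 - \Omega(1/m)$. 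The one subtlety is the $\Delta \le 1$ requirement: a single reconfiguration step rewrites a whole $O(\log|\Sigma|)$-bit block encoding one variable. To carry out such a step bit by bit while keeping every intermediate proof accepted with probability $1$, I would use the alphabet-squaring trick already exploited elsewhere in this paper — enlarge $\Sigma$ with ``superposition'' symbols, move a variable from $a$ to $b$ through the symbol ``$\{a,b\}$'', relax each constraint to accept whenever some choice among the superposed values is consistent, and interpolate the block through the encoding of ``$\{a,b\}$''. Completeness survives because a superposition of two values each satisfying every incident constraint still satisfies them all; soundness is unaffected up to constants.

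The second stage amplifies the soundness gap from $1 - 1/\poly(n)$ up to a constant such as $\tfrac12$, while preserving $q = O(1)$ queries, $r = O(\log n)$ randomness, and perfect completeness. I would iterate a reconfiguration-faithful version of \citeauthor{dinur2007pcp}'s gap amplification \cite{dinur2007pcp} — degree reduction and expanderization, constant-power graph powering (multiplying the gap by a constant at the cost of a larger alphabet), and alphabet reduction by composition with a constant-size inner verifier — for $O(\log \poly(n)) = O(\log n)$ rounds, much as in \cite{ohsaka2024gap}. The difference from \cite{ohsaka2024gap} is quantitative: one starts from the inverse-polynomial gap produced in the first stage rather than an inverse-constant gap postulated by the Reconfiguration Inapproximability Hypothesis, and must check that the iteration still drives the gap to a constant. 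Finally, reading the variables of the sampled constraint as the query sequence $I$ and its circuit as $D$, and taking $\pi^\sss(x), \pi^\ttt(x)$ to be the encodings of the two endpoint assignments, repackages the resulting CSP-reconfiguration hardness as a verifier $V$ of exactly the form claimed.

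The main obstacle is the reconfiguration-faithfulness of every operation in the second stage, in both directions simultaneously. For soundness one needs that any reconfiguration sequence in the amplified instance pulls back to a reconfiguration sequence in the original, so that an assignment far from satisfying somewhere along the original path forces a proof rejected with probability $> \tfrac12$ along the amplified path; for completeness one needs that any reconfiguration sequence in the original pushes forward and can be re-ordered one bit at a time in the amplified instance. Since graph powering and composition act globally on assignments, a single-vertex change downstairs becomes a multi-coordinate change upstairs, so the superposition-alphabet interpolation of the first stage must be re-applied at every level; one must then verify that the relaxed inner and outer constraints never accept a genuinely inconsistent configuration (otherwise soundness breaks) yet always accept the interpolants (otherwise completeness breaks). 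Meeting both requirements across $O(\log n)$ rounds of amplification is the technical core of the argument, and is precisely what \citet{hirahara2024probabilistically} establish.
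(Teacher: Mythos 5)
The paper does not prove \cref{thm:PCRP}; it imports it verbatim as a black box, crediting it to Hirahara and Ohsaka \cite{hirahara2024probabilistically} (their Theorem~5.1), and everything in the present paper is built on top of it --- \cref{prp:regular} and the chain of reductions in \cref{tab:reductions} start from \cref{thm:PCRP}, never the other way around. So there is no paper-internal proof against which to compare your attempt, and your own closing sentence (that the $O(\log n)$ rounds of amplification ``is precisely what \citet{hirahara2024probabilistically} establish'') already concedes that the sketch defers to the cited work rather than reproving it.

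As an outline of what such a proof plausibly looks like, your two-stage plan (weak PCRP with inverse-polynomial gap, then reconfiguration-faithful Dinur amplification) is sensible and echoes the ideas of \cite{ohsaka2023gap,ohsaka2024gap} that this paper uses downstream. One concrete gap worth fixing even at the sketch level: ``interpolate the block through the encoding of $\{a,b\}$'' does not by itself yield one-bit-per-step reconfiguration, because the encoding of $\{a\}$ and that of $\{a,b\}$ may differ in several bits and the intermediate bit patterns may not decode to anything your relaxed constraint handles. You need to commit to an encoding in which the encodings of $\{a\}$, $\{a,b\}$, and $\{b\}$ are pairwise Hamming-adjacent (e.g.\ a one-hot indicator vector over $\Sigma$; a generic binary encoding does not work), and you need the relaxed constraint to be well-defined and accepting on every intermediate string, not only on valid codewords. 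The deeper issue --- whether degree reduction, expanderization, powering, and composition can each be made reconfiguration-faithful simultaneously for completeness and soundness, and whether iterating them $O(\log n)$ times drives the gap from $1/\poly(n)$ to a constant without leaking --- is the genuine content of \cite{hirahara2024probabilistically}, which your sketch names but does not supply. For the purposes of this paper, that is fine: the theorem is used, not reproved.
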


We further introduce the notion of regular verifier.
We say that a verifier is \emph{regular} if each position in its proof is equally likely to be queried.\footnote{
    Note that regular verifiers are sometimes called \emph{smooth} verifiers, e.g., \cite{paradise2021smooth}.
    Since the term ``regularity'' is compatible with that of (hyper) graphs,
    we do not use the term ``smoothness'' but ``regularity.''
}

\begin{definition}
For a verifier $V$ and an input $x \in \zo^*$,
the \emph{degree} of a position $i$ of a proof is defined as
the number of times $i$ is queried by $V(x)$ over $r(|x|)$ random bits; namely,
\begin{align}
    \left|\Bigl\{ R \in \zo^{r(|x|)} \Bigm| i \in I_R \Bigr\}\right|
    = \Pr_{(I,D) \sim V(x)}\Bigl[i \in I\Bigr] \cdot 2^{r(|x|)},
\end{align}
where $r$ is the randomness complexity of $V$ and
$I_R$ is the query sequence generated by $V(x)$ on the randomness $R$.
A verifier $V$ is said to be \emph{$\Delta$-regular} if
the degree of every position is exactly equal to $\Delta$.
\end{definition}

\section{Subconstant Error PCRP Systems and FGLSS Reduction}
\label{sec:FGLSS}

In this section,
we construct a bounded-degree PCRP verifier with subconstant error
using \cref{thm:PCRP} in \cref{subsec:FGLSS:subconstant}, and
prove $\PSPACE$-hardness of approximation for 
\ParBCSPReconf and \LabCovReconf
by the FGLSS reduction \cite{feige1996interactive} in \cref{subsec:FGLSS:MaxPar,subsec:FGLSS:LabelCover}, respectively.

\subsection{Bounded-degree PCRP Systems with Subconstant Error}
\label{subsec:FGLSS:subconstant}

Starting from \cref{thm:PCRP},
we first obtain a regular PCRP verifier for any $\PSPACE$ language,
whose proof uses the degree reduction technique due to \citet{ohsaka2023gap}.

\ifthenelse{\boolean{FULL}}{
\begin{proposition}
}{
\begin{proposition}[$\ast$]
}
\label{prp:regular}
For any language $L$ in $\PSPACE$,
there exists a $\Delta$-regular PCRP verifier $V$ with
randomness complexity $r(n) = \bigO(\log n)$,
query complexity $q(n) = \bigO(1)$,
perfect completeness, and
soundness $1-\epsilon$, for some constant $\Delta \in \bbN$ and $\epsilon \in (0,1)$.
\end{proposition}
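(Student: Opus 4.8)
The plan is to start from the PCRP verifier $V$ given by \cref{thm:PCRP} for the language $L \in \PSPACE$ --- which already has randomness complexity $r(n)=\bigO(\log n)$, query complexity $q(n)=\bigO(1)$, perfect completeness, and soundness $\frac12$ --- and to post-process it into a $\Delta$-regular verifier, losing only a constant factor in the rejection probability. The sole missing property is regularity: some positions of the proof $\pi \in \zo^{\poly(n)}$ may be queried much more often than others. I would fix this by the degree-reduction technique of \citet{ohsaka2023gap}: fix an input $x$, write $d_i$ for the degree of proof position $i$ (the number of random strings $R$ with $i \in I_R$), replace each position $i$ by $d_i$ fresh copies, and reroute the $d_i$ tests querying $i$ to these copies, one per copy, so that every copy is queried exactly once by a rerouted ``original'' test. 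To preserve soundness one must force the copies of each $i$ to agree; I would enforce this by a bounded-degree consistency gadget --- say an equality constraint on each edge of a cycle through the $d_i$ copies --- augmented by one auxiliary ``enabler'' bit per gadget edge that can locally switch that edge's constraint off. Finally, pad the remaining low-degree positions (the enabler bits) with dummy always-accepting tests so that every position of the new proof has exactly the same degree $\Delta = \bigO(1)$. The number of tests grows by only a constant factor, so the new randomness complexity is $r(n)+\bigO(1)=\bigO(\log n)$ and the query complexity stays $\bigO(1)$; the new designated proofs are the obvious liftings of $\pi^\sss(x)$ and $\pi^\ttt(x)$, with all copies of position $i$ set to the original bit and all enablers off.

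For completeness, I would lift a reconfiguration sequence $(\pi^{(1)},\dots,\pi^{(T)})$ for $V$ by treating each step $\pi^{(t)}\to\pi^{(t+1)}$, which flips a single position $i$, as a ``sweep'' that flips the $d_i$ copies of $i$ one at a time; each consistency constraint is kept satisfied by first switching off (through its enabler bit) each gadget edge incident to a copy before that copy is flipped and switching it back on once both endpoints of the edge agree again. Throughout the sweep every rerouted original test stays satisfied: it reads exactly one copy of $i$, so it sees either the old bit --- whence it reads $\pi^{(t)}$ on its query set --- or the new bit --- whence it reads $\pi^{(t+1)}$ --- and both are accepting. Hence every intermediate proof is accepted with probability $1$.

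For soundness, project any reconfiguration sequence in the new proof space to the original space by replacing the copies of each position with their majority value; a single-bit step changes this projection in at most one position, so the projection (after deleting repetitions) is a reconfiguration sequence from $\pi^\sss(x)$ to $\pi^\ttt(x)$ for $V$. By the soundness of $V$, some projected proof $\tilde\pi$ is rejected by a constant fraction of $V$'s tests. In the corresponding new proof, either its copies agree with $\tilde\pi$ at all but a small fraction of positions, and then a constant fraction of the rerouted original tests still reject, or its copies disagree substantially somewhere, and then a constant fraction of the consistency tests reject; either way the new proof is accepted with probability at most $1-\epsilon$ for a constant $\epsilon>0$.

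The step I expect to be the main obstacle is reconciling regularity with \emph{perfect} completeness. The naive consistency enforcement --- a single hard ``all copies equal'' constraint --- has exactly two satisfying assignments, ``all $0$'' and ``all $1$'', which cannot be reconfigured into each other without passing through a rejecting proof; so the enforcement must be soft enough to be swept through (this is the role of the enabler bits, in the spirit of the alphabet-squaring trick of \cref{subsec:FGLSS:MaxPar}) while remaining hard enough that a substantially inconsistent proof is caught with constant probability. At the same time, the total blow-up in the number of (rerouted, consistency, enabler-padding) tests must be kept to a constant factor, since otherwise the soundness degrades below every fixed constant; choosing the consistency gadget and the dummy padding so that both properties hold simultaneously --- in particular so that a high-degree inconsistent position forces proportionally many consistency rejections --- is the technical heart of the proof.
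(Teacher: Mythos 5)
Your high-level plan (degree reduction by splitting each proof position into as many copies as its degree, rerouting tests, adding a bounded-degree consistency gadget, and padding to make everything exactly regular) matches the spirit of the paper, which however does not carry out this construction itself: it simply composes a chain of black-box reductions from \cite{ohsaka2023gap} and \cite{hirahara2024probabilistically} --- PCRP verifier $\to$ \prb{$2$CSP$_3$ Reconfiguration} $\to$ (degree-reduced) \prb{$2$CSP$_6$ Reconfiguration} $\to$ \prb{$3$SAT Reconfiguration} $\to$ $\bigO(\Delta)$-regular $3$-query verifier --- with the degree reduction bought wholesale from \cite[Lemma~3.7]{ohsaka2023gap}. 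You are effectively re-deriving that lemma, which is fine in principle, but the specific gadget you sketch does not work.

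Two concrete problems. First, the cycle through the $d_i$ copies has no expansion: if, say, half the copies are $0$ and half are $1$ but the two blocks are contiguous along the cycle, then only two consistency edges are violated, so the consistency tests reject with probability $\bigO(1/d_i)$, which is not bounded below by any constant; you need the consistency constraints to live on a constant-degree \emph{expander} so that an $\eta$-fraction of minority copies forces an $\Omega(\eta)$-fraction of violated consistency edges. Second, and more seriously, the enabler bits destroy soundness outright: a cheating prover sets every enabler to ``off'', after which no consistency test ever rejects, and the copies of each position can be set arbitrarily to fool the rerouted original tests. Your completeness sweep needs a way to pass between ``all copies $0$'' and ``all copies $1$'' without a rejecting state, but the correct softening is not an external switch that the prover controls; it is the alphabet-squaring trick that you yourself cite (replace $\zo$ by $\{\{0\},\{1\},\{0,1\}\}$, and make the equality constraint accept whenever the two sides are nested), which lets the sweep pass through $\{0,1\}$ while still rejecting genuine $\{0\}$-vs-$\{1\}$ inconsistencies. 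This is exactly what \cite[Lemma~3.7]{ohsaka2023gap} does (hence the alphabet size jumping from $3$ to $6$ in that step); once you replace the cycle by an expander and the enabler bits by alphabet squaring, your argument goes through, but as written the soundness claim ``either way the new proof is accepted with probability at most $1-\epsilon$'' is unsupported.
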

\ifthenelse{\boolean{FULL}}{
\begin{proof}
\begin{table}[t]
    \centering
    \small
    \begin{tabularx}{\textwidth}{c|cXXl}
        \toprule
        \textbf{problem/verifier} & \textbf{alph.~size} & \textbf{soundness} & \textbf{notes} & \textbf{reference} \\
        \midrule
        $q$-query verifier & $2$
            & $\frac{1}{2}$
            & ---
            & \cite[Theorem~5.1]{hirahara2024probabilistically} \\
        \midrule
        \prb{$2$CSP Reconf} & $3$
            & $1-\epsilon$ ($\epsilon$ depends on $q$)
            & ---
            & \cite[Lemmas~3.2~and~3.6]{ohsaka2023gap} \\
        \midrule
        \prb{$2$CSP Reconf} & $6$
            & $1-\bar{\epsilon}$ ($\bar{\epsilon}$ depends on $\epsilon$)
            & max.~degree $\Delta$ depends on $\epsilon$
            & \cite[Lemma~3.7]{ohsaka2023gap} \\
        \midrule
        \prb{$3$SAT Reconf} & $2$
            & $1-\Omega(\bar{\epsilon})$
            & each variable appears $\bigO(\Delta)$ times
            & \cite[Lemma~3.2]{ohsaka2023gap} \\
        \midrule
        $3$-query verifier & $2$
            & $1-\Omega\left(\frac{\bar{\epsilon}}{\Delta}\right)$
            & $\bigO(\Delta)$-regular
            & --- \\
        \bottomrule
    \end{tabularx}
    \caption{Sequence of reductions used in \cref{prp:regular}.}
    \label{tab:reductions}
\end{table}

Here, the suffix ``$_W$'' will designate the restricted case of \prb{$q$CSP Reconfiguration}
whose alphabet size $|\Sigma|$ is $W$.
By \cref{thm:PCRP}, for some integer $q \in \bbN$,
there is a PCRP verifier $V$ for $L$ with
randomness complexity $\bigO(\log n)$,
query complexity $q$,
perfect completeness, and
soundness $\frac{1}{2}$.
For an input $x \in \zo^*$,
the verifier $V(x)$ can be transformed into an instance of
\prb{Gap$_{1,\frac{1}{2}}$ $q$CSP$_2$ Reconfiguration} in
a canonical manner, e.g., \cite[Proposition~4.9]{hirahara2024probabilistically}.
By \cite[Lemmas~3.2~and~3.6]{ohsaka2023gap},
we then obtain an instance of
\prb{Gap$_{1,1-\epsilon}$ $2$CSP$_3$ Reconfiguration},
where $\epsilon \in (0,1)$ depends only on $q$.
Further applying the degree reduction step of \cite[Lemma~3.7]{ohsaka2023gap},
we get an instance of
\prb{Gap$_{1,1-\bar{\epsilon}}$ $2$CSP$_6$ Reconfiguration},
whose underlying graph has maximum degree bounded by $\Delta$,
where $\bar{\epsilon} \in (0,1)$ and $\Delta \in \bbN$
depend only on $\epsilon$.
Using \cite[Lemma~3.2]{ohsaka2023gap},
an instance of
\prb{Gap$_{1,1-\Omega(\bar{\epsilon})}$ $3$SAT Reconfiguration} is produced,
where each Boolean variable appears in at most $\bigO(\Delta)$ clauses.
By padding with trivial constraints,
this instance is transformed into an instance of
\prb{Gap$_{1,1-\Omega(\frac{\bar{\epsilon}}{\Delta})}$ $3$CSP$_2$ Reconfiguration},
whose underlying graph is $\bigO(\Delta)$-regular.
That is to say, there exists a $3$-query $\bigO(\Delta)$-regular PCRP verifier $\tilde{V}$ for $L$ with
randomness complexity $\bigO(\log n)$,
query complexity $\bigO(1)$,
perfect completeness, and
soundness $1-\epsilon'$,
where $\epsilon' = \Omega\left(\frac{\bar{\epsilon}}{\Delta}\right)$,
as desired.
See \cref{tab:reductions} for a sequence of reductions used to obtain $\tilde{V}$.

\end{proof}
}{
}

Subsequently,
using a randomness-efficient sampler over expander graphs
(e.g., \cite[Section~3]{hoory2006expander}),
we construct a \emph{bounded-degree} PCRP verifier
with \emph{subconstant error}.

\ifthenelse{\boolean{FULL}}{
\begin{proposition}
}{
\begin{proposition}[$\ast$]
}
\label{prp:subconstant}
For any language $L$ in $\PSPACE$ and 
any function $\delta \colon \bbN \to \bbR$ with
$\delta(n) = \Omega(n^{-1})$,
there exists a bounded-degree PCRP verifier $V$ with 
randomness complexity $r(n) = \bigO(\log \delta(n)^{-1} + \log n)$,
query complexity $q(n) = \bigO(\log \delta(n)^{-1})$,
perfect completeness, and
soundness $\delta(n)$.
Moreover, for any input $x \in \zo^*$,
the degree of any position is
$\poly(\delta(|x|)^{-1})$.
\end{proposition}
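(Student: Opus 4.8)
The plan is to amplify the soundness of the regular verifier from \cref{prp:regular} by the expander‑walk (recycled‑randomness) repetition. This keeps the randomness complexity additive rather than multiplicative and, crucially, preserves bounded degree precisely \emph{because} the starting verifier is regular: a high‑degree position of a non‑regular verifier would blow up to an even larger degree under repetition, whereas a $\Delta$‑regular one stays controlled.

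First I would invoke \cref{prp:regular} to fix a $\Delta$‑regular PCRP verifier $V_0$ for $L$ with randomness complexity $r_0(n) = \bigO(\log n)$, query complexity $q_0(n) = \bigO(1)$, perfect completeness, and soundness $1-\epsilon$, for some constants $\Delta \in \bbN$ and $\epsilon \in (0,1)$. Using an explicit family of $d$‑regular expanders (see, e.g., \cite[Section~3]{hoory2006expander}), I would fix for each $n$ a graph $H_n$ on the vertex set $\zo^{r_0(n)}$ whose normalized second eigenvalue $\lambda$ satisfies $\lambda < \tfrac{\epsilon}{2}$, where $d = d(\epsilon) = \bigO(1)$. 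Put $k = k(n) \coloneq \Theta(\log \delta(n)^{-1})$, with the hidden constant fixed below. On input $x$ with $n = |x|$, the new verifier $V$ samples $v_1 \in \zo^{r_0(n)}$ uniformly (spending $r_0(n)$ random bits), takes a $(k-1)$‑step random walk $v_1, \ldots, v_k$ on $H_n$ (spending $(k-1)\lceil \log_2 d\rceil$ more random bits), simulates $V_0(x)$ on each randomness string $v_j$ to obtain a query sequence and circuit $(I_j, D_j)$, queries the concatenation $I \coloneq (I_1, \ldots, I_k)$, and accepts iff $D_j(\pi|_{I_j}) = 1$ for every $j \in [k]$. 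The start and goal proofs are inherited from $V_0$ (hence polynomial‑time computable).

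Next I would verify the four properties. The randomness complexity is $r_0(n) + (k-1)\lceil \log_2 d\rceil = \bigO(\log n) + \bigO(\log \delta(n)^{-1})$, the query complexity is $k\cdot q_0(n) = \bigO(\log \delta(n)^{-1})$, and $V$ runs in polynomial time. For perfect completeness: if $x \in L$, then along the reconfiguration sequence $\Pi = (\pi^{(1)}, \ldots, \pi^{(T)})$ supplied by \cref{prp:regular}, $V_0(x)$ accepts every $\pi^{(t)}$ on \emph{every} randomness string, so $V(x)$ accepts every $\pi^{(t)}$ on every walk, and the same $\Pi$ witnesses completeness for $V$. For soundness: if $x \notin L$ and $\Pi$ is any reconfiguration sequence over $\zo^{\poly(n)}$, pick $\pi^{(t)} \in \Pi$ with $\Pr[V_0(x)\text{ accepts }\pi^{(t)}] \leq 1-\epsilon$, and let $B \subseteq \zo^{r_0(n)}$ collect the randomness strings on which $V_0(x)$ accepts $\pi^{(t)}$, so its density is at most $1-\epsilon$. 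Since $V(x)$ accepts $\pi^{(t)}$ precisely when the walk stays inside $B$, the expander hitting bound gives $\Pr[V(x)\text{ accepts }\pi^{(t)}] \leq (1-\epsilon+\lambda)^{k-1} \leq (1-\tfrac{\epsilon}{2})^{k-1}$, which is at most $\delta(n)$ once the constant in $k = \Theta(\log \delta(n)^{-1})$ is taken large enough.

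Finally, for the degree bound, fix a position $i$ of the proof: it is queried by $V(x)$ on a random string $R$ iff $i \in I_{v_j}$ for some $j \in [k]$, where $v_1, \ldots, v_k$ is the walk determined by $R$. For each fixed $j$, every vertex of $H_n$ occurs as $v_j$ for exactly $2^{r(n)}/2^{r_0(n)} = d^{k-1}$ strings $R$, and by $\Delta$‑regularity of $V_0$ exactly $\Delta$ vertices $v$ satisfy $i \in I_v$; hence at most $k\Delta d^{k-1}$ strings $R$ query $i$. With $d = \bigO(1)$ and $k = \Theta(\log \delta(n)^{-1})$ this is $\poly(\delta(n)^{-1})$, so $V$ is bounded‑degree as claimed. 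The one part needing care is calibrating the expander (to get $\lambda < \epsilon/2$) together with the constant in $k$, so that the hitting bound pushes acceptance below $\delta(n)$ while $k$ — and thus the degree $k\Delta d^{k-1}$ — remains polynomial in $\delta(n)^{-1}$; everything else is bookkeeping.
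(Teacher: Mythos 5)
Your proposal is correct and follows essentially the same route as the paper: invoke the regular verifier from \cref{prp:regular}, amplify soundness via an expander random-walk sampler of length $\Theta(\log \delta(n)^{-1})$, inherit completeness since the base verifier accepts on every randomness string, bound soundness via the expander hitting/walk bound, and use $\Delta$-regularity of the base verifier together with a union/counting argument over the $\Theta(\log\delta(n)^{-1})$ walk steps to show each position has degree $\poly(\delta(n)^{-1})$. The only cosmetic difference is that you count walks through a fixed vertex directly, whereas the paper first computes $\Pr[i \in I]$ by a union bound and then multiplies by $2^{\tilde{r}(n)}$; these are equivalent.
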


\paragraph{Verifier Description.}
Our PCRP verifier is described as follows.
By \cref{prp:regular},
let $V$ be a $\Delta$-regular PCRP verifier for a $\PSPACE$-complete language $L$ with
randomness complexity $r(n) = \bigO(\log n)$,
query complexity $q(n) = q \in \bbN$,
perfect completeness, and
soundness $1-\epsilon$,
where $\Delta \in \bbN$ and $\epsilon \in (0,1)$.
The proof length, denoted by $\ell(n)$, is polynomially bounded since $\ell(n) \leq q(n) 2^{r(n)} = \poly(n)$.
Hereafter, for any $r(n)$ random bit sequence $R$,
let $I_R$ and $D_R$ respectively denote the query sequence
and circuit generated by $V(x)$ on the randomness $R$.
Given a function $\delta \colon \bbN \to \bbR$ with $\delta(n) = \Omega(n^{-1})$,
we construct the following verifier $\tilde{V}$:

\begin{itembox}[l]{\textbf{Bounded-degree verifier $\tilde{V}$ with subconstant error.}}
\begin{algorithmic}[1]
    \item[\textbf{Input:}]
        a $\Delta$-regular verifier $V$ with soundness $1-\epsilon$,
        a function $\delta \colon \bbN \to \bbR$, and
        an input $x \in \zo^n$.
    \item[\textbf{Oracle access:}]
        a proof $\pi \in \zo^{\ell(n)}$.
    \State construct a $(d,\lambda)$-expander graph $X$ over vertex set $\zo^{r(n)}$
        with $\frac{\lambda}{d} < \frac{\epsilon}{4}$.
    \State let $\rho \coloneq \left\lceil \frac{2}{\epsilon} \ln \delta(n)^{-1} \right\rceil = \bigO(\log \delta(n)^{-1})$.
    \State uniformly sample a $(\rho-1)$-length random walk
        $\vec{R} = (R_1, \ldots, R_\rho)$ over $X$
        using $r(n) + \rho \cdot \log d$ random bits.
    \For{\textbf{each} $1 \leq k \leq \rho$}
        \State execute $V(x)$ on $R_k$ to generate a query sequence $I_{R_k} = (i_1, \ldots, i_q)$ and a circuit $D_{R_k} \colon \zo^q \to \zo$.
        \If{$D_{R_k}(\pi|_{I_{R_k}}) = 0$}
            \State declare \textsf{reject}.
        \EndIf
    \EndFor
    \State declare \textsf{accept}.
\end{algorithmic}
\end{itembox}

\paragraph{Correctness.}
The perfect completeness and soundness for a fixed proof $\pi \in \zo^{\ell(n)}$ are shown below,
whose proof relies on the property about random walks over expander graphs
due to \citet{alon1995derandomized}.
\ifthenelse{\boolean{FULL}}{
\begin{claim}
}{
\begin{claim}[$\ast$]
}
\label{clm:subconstant:acceptance}
If $V(x)$ accepts $\pi$ with probability $1$,
then $\tilde{V}(x)$ accepts $\pi$ with probability $1$.
If $V(x)$ accepts $\pi$ with probability less than $1-\epsilon$,
then $\tilde{V}(x)$ accepts $\pi$ with probability less than $\delta(n)$.
\end{claim}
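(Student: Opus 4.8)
The plan is to verify the two implications directly from the description of the verifier $\tilde V$, using for the soundness direction the standard bound on random walks over expander graphs due to \citet{alon1995derandomized}. Completeness is immediate: if $V(x)$ accepts $\pi$ with probability $1$ then, since the randomness of $V(x)$ is uniform over the finite set $\zo^{r(n)}$, we have $D_R(\pi|_{I_R}) = 1$ for \emph{every} $R \in \zo^{r(n)}$; hence for any random walk $\vec{R} = (R_1,\ldots,R_\rho)$ sampled by $\tilde V(x)$ and every $k \in [\rho]$ the per-iteration test $D_{R_k}(\pi|_{I_{R_k}}) = 1$ succeeds, so $\tilde V(x)$ never declares \textsf{reject} and accepts $\pi$ with probability $1$.

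For soundness, suppose $V(x)$ accepts $\pi$ with probability less than $1-\epsilon$, and set $A \coloneq \bigl\{ R \in \zo^{r(n)} : D_R(\pi|_{I_R}) = 1 \bigr\}$, the set of coin sequences on which $V(x)$ accepts $\pi$; by assumption $\frac{|A|}{2^{r(n)}} < 1-\epsilon$. By construction $\tilde V(x)$ accepts $\pi$ if and only if the sampled walk stays entirely inside $A$, i.e.\ $R_k \in A$ for all $k \in [\rho]$. Since $X$ is a $(d,\lambda)$-expander over the vertex set $\zo^{r(n)}$ with $\frac{\lambda}{d} < \frac{\epsilon}{4}$ and the initial vertex $R_1$ is uniform, the expander random-walk bound of \citet{alon1995derandomized} yields
\begin{align}
    \Pr_{\vec{R}}\Bigl[ R_k \in A \text{ for all } k \in [\rho] \Bigr]
    &\leq \left( \frac{|A|}{2^{r(n)}} + \frac{\lambda}{d} \right)^{\rho - 1} \\
    &< \left( 1 - \epsilon + \frac{\epsilon}{4} \right)^{\rho - 1}
    = \left( 1 - \frac{3\epsilon}{4} \right)^{\rho - 1}
    \leq \delta(n),
\end{align}
where the last inequality holds by the choice $\rho = \left\lceil \frac{2}{\epsilon}\ln\delta(n)^{-1} \right\rceil$, for $\delta(n)$ in the subconstant regime of interest (so that $\tfrac{3\epsilon}{4}(\rho-1) \geq \ln\delta(n)^{-1}$). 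This shows $\tilde V(x)$ accepts $\pi$ with probability less than $\delta(n)$, completing the soundness analysis.

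The one point requiring care is citing the expander-walk bound in the correct normalization and checking that the parameters leave enough slack; the choices are engineered precisely so that the per-step probability of escaping $A$ is at least $\tfrac{\epsilon}{2}$, which comfortably absorbs the bookkeeping differences between the standard forms of this bound (whether the exponent is $\rho$ or $\rho-1$, and whether the additive term is $\tfrac{\lambda}{d}$ or $\tfrac{\lambda}{d}$ scaled by the density of $\zo^{r(n)}\setminus A$). I do not expect a genuine obstacle here: this claim is the quantitative heart of \cref{prp:subconstant}, but once the event ``$\tilde V$ accepts'' is identified with ``the random walk stays inside the accepting set $A$'', it reduces to a routine application of expander-walk concentration.
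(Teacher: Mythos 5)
Your proof is correct and follows essentially the same route as the paper: identify the accepting coin-sequence set, observe that $\tilde V$ accepts iff the random walk stays inside it, and apply the Alon--Feige--Wigderson--Zuckerman expander-walk bound. The only divergence is in the precise form of that bound: the paper states it with exponent $\rho$ and additive error $2\lambda/d$ (which makes the arithmetic close cleanly with $\rho = \lceil \frac{2}{\epsilon}\ln\delta(n)^{-1}\rceil$ and no side condition), whereas your exponent-$(\rho-1)$, error-$\lambda/d$ version requires the extra step of checking $\ln\delta(n)^{-1} \geq \Theta(\epsilon)$ so that the lost factor is absorbed --- a point you correctly flag as harmless in the subconstant-$\delta$ regime of interest.
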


\ifthenelse{\boolean{FULL}}{
To prove \cref{clm:subconstant:acceptance},
we refer to the following property about random walks over expander graphs.

\begin{lemma}[\protect{\citet{alon1995derandomized}}]
\label{lem:expander-walk}
    Let
    $X$ be a $(d,\lambda)$-expander graph,
    $S$ be any vertex set of $X$, and
    $\vec{R} \coloneq (R_1, \ldots, R_\rho)$
    be a $\rho$-tuple of random variables denoting 
    the vertices of a uniformly chosen $(\rho-1)$-length random walk over $X$.
    Then, it holds that
    \begin{align}
        \left(\frac{|S|}{|\calV(X)|} - 2\frac{\lambda}{d}\right)^\rho
        \leq \Pr_{\vec{R}}\Bigl[\forall k \in [\rho], \; R_k \in S \Bigr]
        \leq \left(\frac{|S|}{|\calV(X)|} + 2\frac{\lambda}{d}\right)^\rho.
    \end{align}
\end{lemma}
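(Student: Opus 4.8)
The plan is to prove this classical hitting bound (due to \citet{alon1995derandomized}) in the standard way: express the random-walk event as a matrix product on $\bbR^{\calV(X)}$ and bound its norm using the spectral gap of $X$. Write $N \coloneq |\calV(X)|$, $\mu \coloneq |S|/N$, and $\gamma \coloneq \lambda/d$. Let $\hat A$ be the normalized adjacency (i.e.\ transition) operator of $X$, which is symmetric and doubly stochastic; let $e \coloneq \frac{1}{\sqrt N}\mathbf 1$ be the unit uniform vector, so that $\hat A e = e$ and $\|\hat A x\|_2 \le \gamma\|x\|_2$ for every $x$ orthogonal to $e$; and let $P$ be the diagonal $0/1$ projection onto the coordinates in $S$, so that $Pe = \mu e + w$ for some $w \perp e$ with $\|w\|_2 = \sqrt{\mu(1-\mu)}$.

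First I would record the identity
\[
    \Pr_{\vec R}\Bigl[\forall k \in [\rho],\ R_k \in S\Bigr] = \bigl\langle Pe,\ (\hat A P)^{\rho-1} e \bigr\rangle ,
\]
which follows by unrolling the Markov chain (condition on $R_1$, use $\Pr[R_{k+1}=j \mid R_k=i] = \hat A_{ij}$, and induct on $\rho$). Next I would control this inner product by a two-dimensional recursion. For $x \in \bbR^{\calV(X)}$ write $x = \langle x, e\rangle e + x^{\perp}$ with $x^{\perp} \perp e$, and set $\phi(x) \coloneq \bigl(|\langle x, e\rangle|,\ \|x^{\perp}\|_2\bigr)$. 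Using $\hat A e = e$, the contraction $\|\hat A y\|_2 \le \gamma\|y\|_2$ on $e^{\perp}$, and the splitting $Pe = \mu e + w$, a short computation shows $\phi(\hat A P x) \le B\,\phi(x)$ entrywise, where $B \coloneq \left(\begin{smallmatrix} \mu & \sqrt{\mu(1-\mu)} \\ \gamma\sqrt{\mu(1-\mu)} & \gamma \end{smallmatrix}\right)$. Since $\phi(e) = (1,0)^{\top}$, and since pairing the final vector $(\hat A P)^{\rho-1}e$ against $Pe$ contributes at most the first row of $B$ applied to $\phi\bigl((\hat A P)^{\rho-1}e\bigr)$, iterating yields $\Pr_{\vec R}[\forall k,\ R_k \in S] \le \bigl(B^{\rho}(1,0)^{\top}\bigr)_1 = (1,0)\,B^{\rho}\,(1,0)^{\top}$.

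For the spectral estimate I would conjugate $B$ by $\operatorname{diag}(1, \sqrt\gamma)$; this fixes the vector $(1,0)^{\top}$ and turns $B$ into a symmetric matrix with the same spectrum, so $(1,0)\,B^{\rho}\,(1,0)^{\top} \le \lambda_{\max}(B)^{\rho}$. From $\operatorname{tr} B = \mu+\gamma$ and $\det B = \mu^{2}\gamma$ one gets $\lambda_{\max}(B) = \tfrac12\bigl((\mu+\gamma)+\sqrt{(\mu+\gamma)^2-4\mu^2\gamma}\bigr) \le \mu+\gamma \le \mu + 2\tfrac{\lambda}{d}$, which is the claimed upper bound (in fact with $\tfrac{\lambda}{d}$ in place of $2\tfrac{\lambda}{d}$). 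The lower bound comes from the same decomposition read from below: with $z_k \coloneq (\hat A P)^k e$ one keeps the exact identity $\langle z_{k+1}, e\rangle = \mu\langle z_k, e\rangle + \langle w, z_k^{\perp}\rangle$ and shows inductively that $\|z_k^{\perp}\|_2$ never exceeds a bounded multiple of $\gamma\langle z_k, e\rangle$, whence $\langle z_{k+1}, e\rangle \ge (\mu - 2\tfrac{\lambda}{d})\langle z_k, e\rangle$; this is exactly the argument in \citet{alon1995derandomized}, and the lower bound is vacuous unless $\mu \ge 2\tfrac{\lambda}{d}$.

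The hard part is the lower bound. The upper bound is short and robust, and it is in any case the only direction we use — in \cref{clm:subconstant:acceptance} one applies it with $S$ equal to the set of coin-toss strings on which $V(x)$ rejects the fixed proof $\pi$, getting $\Pr[\tilde V(x)\text{ accepts }\pi] \le (\,1-\Pr[\text{reject}] + 2\lambda/d\,)^{\rho} \le (1-\tfrac\epsilon2)^{\rho} \le \delta(n)$. The subtlety in the lower direction is that the cross term $\langle w, z_k^{\perp}\rangle$ has unknown sign, so a naive triangle inequality is too lossy; one must genuinely track that the walk's distribution stays near uniform throughout, and it is there that the factor $2\lambda/d$ (rather than the $\lambda/d$ a crude bound gives) is spent. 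Since the statement is used as a black box and attributed to \citet{alon1995derandomized}, simply quoting that reference is also a legitimate way to discharge it.
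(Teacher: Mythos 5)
The paper does not actually prove this lemma: it is imported verbatim from \citet{alon1995derandomized} and used as a black box, and only the upper-bound direction is ever invoked (in the proof of \cref{clm:subconstant:acceptance}). Your proposal therefore does more than the paper, and the extra content is correct where it is worked out. The identity $\Pr_{\vec R}[\forall k,\ R_k\in S]=\langle Pe,(\hat AP)^{\rho-1}e\rangle$ is right, the entrywise recursion $\phi(\hat APx)\le B\,\phi(x)$ with $B=\left(\begin{smallmatrix}\mu & \sqrt{\mu(1-\mu)}\\ \gamma\sqrt{\mu(1-\mu)} & \gamma\end{smallmatrix}\right)$ checks out (the $e$-component of $\hat APx$ equals that of $Px$, and $\hat A$ contracts $e^\perp$ by $\gamma$), and the conjugation by $\operatorname{diag}(1,\sqrt\gamma)$ symmetrizes $B$ while fixing $(1,0)^\top$, giving $\lambda_{\max}(B)\le\operatorname{tr}B=\mu+\gamma$; so you indeed recover the claimed upper bound with the better constant $\lambda/d$ in place of $2\lambda/d$. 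The lower bound you only sketch and partly defer back to \citet{alon1995derandomized}; that is the genuinely delicate direction (the sign of the cross term $\langle w,z_k^\perp\rangle$ forces one to maintain an inductive invariant comparing $\|z_k^\perp\|$ to $\langle z_k,e\rangle$, and the induction only closes when $\mu$ sufficiently exceeds $\lambda/d$), but it is never needed in this paper, and your observation that the stated lower bound is vacuous (indeed, taken literally, false for even $\rho$ when $|S|/|\calV(X)|<2\lambda/d$, e.g.\ $S=\emptyset$) unless $|S|/|\calV(X)|\ge 2\lambda/d$ is a caveat the paper leaves implicit. In short: your spectral argument is a valid, essentially self-contained replacement for the citation as far as this paper's use of the lemma goes, and simply quoting the reference--as the paper does--is equally legitimate.
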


\begin{proof}[Proof of \cref{clm:subconstant:acceptance}]
Suppose first $V(x)$ accepts $\pi$ with probability $1$; then,
it holds that
\begin{align}
    \Pr\Bigl[\tilde{V}(x) \text{ accepts } \pi\Bigr]
    = \Pr_{\vec{R}}\Bigl[\forall k \in [\rho], \; D_{R_k}(\pi|_{I_{R_k}}) = 1\Bigr]
    = 1.
\end{align}
Suppose then $V(x)$ accepts $\pi$ with probability less than $1-\epsilon$.
Define 
\begin{align}
    S \coloneq \Bigl\{R \in \zo^{r(n)} \Bigm| D_R(\pi|_{I_R}) = 1 \Bigr\}.
\end{align}
Note that $\frac{|S|}{|\calV(X)|} < 1-\epsilon$.
Then, it holds that
\begin{align}
    \Pr\Bigl[\tilde{V}(x) \text{ accepts } \pi\Bigr]
    = \Pr_{\vec{R}}\Bigl[ \forall k \in [\rho], \; D_{R_k}(\pi|_{I_{R_k}}) = 1 \Bigr]
    = \Pr_{\vec{R}}\Bigl[ \forall k \in [\rho], \; R_k \in S \Bigr].
\end{align}
Applying \cref{lem:expander-walk}, we derive
\begin{align}
\begin{aligned}
    \Pr_{\vec{R}}\Bigl[\forall k \in [\rho], \; R_k \in S\Bigr]
    & \leq \left( \frac{|S|}{|\calV(X)|} + 2\frac{\lambda}{d} \right)^\rho \\
    & < \left(1-\frac{\epsilon}{2}\right)^\rho
        & \text{(by } \frac{\lambda}{d} < \frac{\epsilon}{4} \text{)} \\
    & \leq \exp\left(-\frac{\epsilon}{2} \rho\right) \\
    & \leq \delta(n),
        & \text{(by } \rho = \left\lceil\frac{2}{\epsilon}\ln \delta(n)^{-1}\right\rceil \text{)}
\end{aligned}
\end{align}
completing the proof.
\end{proof}

We are now ready to prove \cref{prp:subconstant}.
\begin{proof}[Proof of \cref{prp:subconstant}]
We first show the perfect completeness and soundness.
Suppose $x \in L$, then
there exists a reconfiguration sequence
$\sqpi = ( \pi^{(1)}, \ldots, \pi^{(T)} )$
from $\pi^\sss(x)$ to $\pi^\ttt(x)$ such that
$\Pr[V(x) \text{ accepts } \pi^{(t)}] = 1$ for all $t$.
By \cref{clm:subconstant:acceptance}, we have 
$\Pr[\tilde{V}(x) \text{ accepts } \pi^{(t)}] = 1$ for all $t$.
Suppose $x \notin L$, then
for every reconfiguration sequence
$\sqpi = ( \pi^{(1)}, \ldots, \pi^{(T)} )$
from $\pi^\sss(x)$ to $\pi^\ttt(x)$,
it holds that
$\Pr[V(x) \text{ accepts } \pi^{(t)}] < 1-\epsilon$ for some $t$.
By \cref{clm:subconstant:acceptance}, we have
$\Pr[\tilde{V}(x) \text{ accepts } \pi^{(t)}] < \delta(n)$ for such $t$.

Since $\rho = \bigO(\log \delta(n)^{-1})$,
the randomness complexity of $\tilde{V}$ is equal to
$\tilde{r}(n) = r(n) + \rho \cdot \log d = \bigO(\log \delta(n)^{-1} + \log n)$, and
the query complexity is
$\tilde{q}(n) = q(n) \cdot \rho = \bigO(\log \delta(n)^{-1})$.
Note that
$d$ and $\lambda$ may depend only on $\epsilon$, and
a $(d,\lambda)$-expander graph $X$ over $\zo^{r(n)}$ can be constructed in polynomial time in $2^{r(n)} = \poly(n)$, e.g.,
by using an explicit construction of near-Ramanujan graphs
\cite{mohanty2021explicit,alon2021explicit}.

Observe finally that $\tilde{V}$ queries each position $i \in [\ell(n)]$ of a proof with probability equal to
\begin{align}
\label{eq:subconstant:degree}
    \Pr_{\vec{R}}\left[ \bigvee_{1 \leq k \leq \rho} \Bigl(i \in I_{R_k}\Bigr) \right].
\end{align}
\item Since $V$ is $\Delta$-regular, it holds that
\begin{align}
    \Pr_{R \sim \zo^{r(n)}}\Bigl[i \in I_R \Bigr]
    = \frac{\Delta}{2^{r(n)}}.
\end{align}
Using the fact that each $R_k$ is uniformly distributed over $\zo^{r(n)}$,
we bound \cref{eq:subconstant:degree} as follows:
\begin{align}
    \Pr_{\vec{R}}\left[ \bigvee_{1 \leq k \leq \rho} \Bigl(i \in I_{R_k}\Bigr) \right]
    \underbrace{\leq}_{\text{union bound}} \sum_{k \in [\rho]}\Pr_{\vec{R}}\Bigl[i \in I_{R_k}\Bigr]
    = \frac{\rho \cdot \Delta}{2^{r(n)}}
    = \bigO\left(\frac{\log \delta(n)^{-1}}{2^{r(n)}}\right).
\end{align}
Consequently, the degree of each position $i$ with respect to $\tilde{V}$ is at most
\begin{align}
\begin{aligned}    
    \Pr_{\vec{R}}\left[ \bigvee_{1 \leq k \leq \rho} \Bigl(i \in I_{R_k}\Bigr) \right]
    \cdot 2^{\tilde{r}(n)}
    & = \bigO\left(\frac{\log \delta(n)^{-1}}{2^{r(n)}}\right)
    \cdot 2^{r(n) + \rho \cdot \log d} \\
    & = \bigO(\log \delta(n)^{-1}) \cdot 2^{\bigO(\log \delta(n)^{-1})} \\
    & = \poly(\delta(n)^{-1}),
\end{aligned}
\end{align}
which completes the proof.
\end{proof}

}{
We give a proof sketch of \cref{prp:subconstant}.
\begin{proof}[Proof sketch of \cref{prp:subconstant}]
The perfect completeness and soundness of $\tilde{V}$ are immediate from 
\cref{clm:subconstant:acceptance}.
By construction, it is easy to bound the randomness and query complexities of $\tilde{V}$.
Since $V$ is assumed to be $\Delta$-regular and
each random bit sequence $R_k$ is uniformly distributed over $\zo^{r(n)}$,
we can bound the degree of $\tilde{V}$ from above by $\poly(\delta(n)^{-1})$.
\end{proof}
}

\subsection{FGLSS Reduction and $\PSPACE$-hardness of Approximation for \ParBCSPReconf}
\label{subsec:FGLSS:MaxPar}

We now establish the FGLSS reduction from \cref{prp:subconstant} and
show that \ParBCSPReconf is $\PSPACE$-hard to approximate within a factor arbitrarily close to $0$.

\begin{theorem}
\label{thm:MaxPar}
For any function $\epsilon \colon \bbN \to \bbR$ with
$\epsilon(n) = \Omega\left(\frac{1}{\polylog n}\right)$,
\prb{Gap$_{1,\epsilon(N)}$ \ParBCSPReconf}
with alphabet size $\poly(\epsilon(N)^{-1})$ is \PSPACE-complete,
where $N$ is the number of vertices.
\end{theorem}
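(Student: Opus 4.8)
The plan is to perform an FGLSS-type reduction from the bounded-degree, subconstant-error PCRP verifier of \cref{prp:subconstant}, using the alphabet squaring trick of \cite{ohsaka2023gap,ohsaka2024gap} so that flipping one proof bit corresponds to changing the value of one vertex. Fix a $\PSPACE$-complete language $L$ and a target $\epsilon$ with $\epsilon(n)=\Omega(1/\polylog n)$. First I apply \cref{prp:subconstant} with a parameter $\delta(n)=\Theta(1/\polylog n)$ (the polylogarithm to be fixed at the end) to obtain a verifier $V$ for $L$ with randomness complexity $\tilde r(n)=\bigO(\log n)$, query complexity $\tilde q(n)=\bigO(\log\log n)$, degree $\poly(\delta(n)^{-1})=\polylog n$, perfect completeness, soundness $\delta(n)$, and canonical start/goal proofs $\pi^\sss(x),\pi^\ttt(x)\in\zo^{\ell(n)}$ accepted with probability $1$ on every input. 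The constraint graph $G=(\calV,\calE,\Sigma,\Psi)$ is then: $\calV\coloneq\zo^{\tilde r(n)}$, so $N\coloneq|\calV|=\poly(n)$, with vertex $R$ carrying the query set $I_R$ and decision circuit $D_R$ of $V(x)$ on coins $R$; alphabet $\Sigma\coloneq\{0,1,01\}^{\tilde q(n)}$ of size $3^{\tilde q(n)}=\poly(\delta(n)^{-1})$, where ``$01$'' means ``$0$ and $1$ simultaneously''; edge set $\calE\coloneq\{\{R,R'\}:I_R\cap I_{R'}\neq\emptyset\}$; and, calling a \emph{resolution} of $\alpha$ any string obtained by replacing each ``$01$'' coordinate of $\alpha$ by $0$ or $1$, the constraint $\psi_{\{R,R'\}}(\alpha,\beta)$ holds iff $\alpha$ and $\beta$ agree on the shared proof positions with ``$01$'' declared compatible with both $0$ and $1$, \emph{and} every resolution of $\alpha$ is accepted by $D_R$ and every resolution of $\beta$ by $D_{R'}$. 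Bounded degree keeps $|\calE|=\poly(n)$ and makes every $\psi_e$ a $\poly(n)$-size circuit (only $\polylog n$ resolutions are tested); as usual I also add, if needed, auxiliary edges so that every vertex is incident to at least one edge, forcing the ``accepting view'' clause to bite (this is anyway immediate from the regularity of $V$). Finally put $\asg^\sss(R)\coloneq\pi^\sss(x)|_{I_R}$ and $\asg^\ttt(R)\coloneq\pi^\ttt(x)|_{I_R}$; these are satisfying full assignments irrespective of whether $x\in L$.

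For completeness, if $x\in L$ I take a reconfiguration sequence $\sqpi=(\pi^{(1)},\dots,\pi^{(T)})$ from $\pi^\sss(x)$ to $\pi^\ttt(x)$ with every $\pi^{(t)}$ accepted with probability $1$, and simulate each one-bit flip $\pi^{(t)}\to\pi^{(t+1)}$ at a position $i$ by a two-phase, vertex-by-vertex move: first, for each $R$ with $i\in I_R$ in turn, set the $i$-th coordinate of $\asg(R)$ to ``$01$''; then, for each such $R$ in turn, set it to $\pi^{(t+1)}_i$. Every intermediate partial assignment is full, so $\|\sqasg\|_{\min}=|\calV|$, and every edge stays satisfied, because the two resolutions of a ``$01$''-carrying view are exactly $\pi^{(t)}|_{I_R}$ and $\pi^{(t+1)}|_{I_R}$, both accepted, and ``$01$'' is compatible with $0$, $1$, and ``$01$''. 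Hence $\maxpar_G(\asg^\sss\reco\asg^\ttt)=1$.

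For soundness, if $x\notin L$ let $\sqasg=(\asg^{(1)},\dots,\asg^{(T)})$ be any reconfiguration partial assignment sequence from $\asg^\sss$ to $\asg^\ttt$. From each $\asg^{(t)}$ I extract a proof $\pi^{(t)}\in\zo^{\ell(n)}$ by setting $\pi^{(t)}_j$ to the common non-``$01$'' value of the assigned vertices that query position $j$ (well defined by the consistency clause), breaking ties arbitrarily if there is none; then $\pi^{(1)}=\pi^\sss(x)$, $\pi^{(T)}=\pi^\ttt(x)$, and for each assigned vertex $R$ the restriction $\pi^{(t)}|_{I_R}$ is a resolution of $\asg^{(t)}(R)$, hence accepted, so $\Pr[V(x)\text{ accepts }\pi^{(t)}]\geq\|\asg^{(t)}\|/N$. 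Since $\asg^{(t)}$ and $\asg^{(t+1)}$ differ only at one vertex $R^\ast$, the extracted proofs differ only on positions of $I_{R^\ast}$, i.e.\ in $\leq\tilde q(n)$ bits, and I interpolate between them one bit at a time; for any vertex $R\neq R^\ast$ assigned in $\asg^{(t)}$, each flipped position is queried by $R$ only with value ``$01$'' (else it would be pinned and unchanged), so $\pi|_{I_R}$ stays an accepting resolution of $\asg^{(t)}(R)$ throughout, and hence every proof along the interpolation is accepted with probability $\geq(\|\asg^{(t)}\|-1)/N$. Concatenating these interpolations gives a reconfiguration sequence of proofs from $\pi^\sss(x)$ to $\pi^\ttt(x)$ in which every proof is accepted with probability $\geq\|\sqasg\|_{\min}/N-1/N$; by the soundness of $V$ some proof in it is accepted with probability $<\delta(n)$, so $\|\sqasg\|_{\min}/N<\delta(n)+1/N$ and thus $\maxpar_G(\asg^\sss\reco\asg^\ttt)<\delta(n)+1/N$.

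It remains to choose $\delta$ so that $\delta(n)+1/N\leq\epsilon(N)$. Since $\delta(n)=\Theta(1/\polylog n)$ makes $N=\poly(n)$, and then $\epsilon(N)=\Omega(1/\polylog N)=\Omega(1/\polylog n)$ while $1/N=1/\poly(n)=o(\epsilon(N))$, it is enough to take $\delta$ to be a suitably small constant times the polylogarithm $1/(\log n)^{a}$ matching $\epsilon(n)=\Omega(1/(\log n)^{a})$; there is no genuine circularity, as $\delta$ is a function of $n$ alone and $N$ is determined afterwards. With this choice the alphabet has size $3^{\tilde q(n)}=\poly(\delta(n)^{-1})=\poly(\epsilon(N)^{-1})$, the reduction runs in polynomial time, and since the problem is clearly in $\PSPACE$, \prb{Gap$_{1,\epsilon(N)}$ \ParBCSPReconf} is $\PSPACE$-complete. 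I expect the main obstacle to be the soundness step — in particular, verifying that the bit-by-bit interpolation of the extracted proofs really is a reconfiguration sequence from $\pi^\sss(x)$ to $\pi^\ttt(x)$ with only an additive $1/N$ loss in acceptance probability — together with the parameter bookkeeping that simultaneously keeps $N$ polynomial, ties $\delta$ to $\epsilon(N)$, and holds the alphabet to $\poly(\epsilon(N)^{-1})$.
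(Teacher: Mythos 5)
Your construction, completeness argument, and overall soundness strategy (extract a proof from each partial assignment, observe that assigned vertices yield accepting local views, interpolate between consecutive extracted proofs) match the paper's, so the approach is essentially the same. But your interpolation step is a genuinely different key argument, and it is worth comparing. The paper's \cref{clm:MaxPar:interpolate} is a generic union bound: any intermediate proof $\pi^\circ$ differs from $\pi^{(t)}$ in at most $q(n)$ positions, and each of these is hit by the verifier with probability at most $\delta(n)^{-\kappa}/2^{r(n)}$ by the bounded-degree guarantee of \cref{prp:subconstant}, yielding an additive loss of $q(n)\delta(n)^{-\kappa}/2^{r(n)}$. Your argument is structural: you observe that for any vertex $R \neq R^\ast$ assigned in both $\asg^{(t)}$ and $\asg^{(t+1)}$, any flipped coordinate $i \in I_R$ must carry the value ``$01$'' in $\asg^{(t)}(R)$ (otherwise the extraction would pin it identically in both proofs), so $\pi^\circ|_{I_R}$ remains a resolution of $\asg^{(t)}(R)$ and is accepted throughout the bit-by-bit interpolation. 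This gives loss only $1/N$ and, notably, makes no use of the degree bound on the verifier --- the $\poly(\delta^{-1})$-degree guarantee of \cref{prp:subconstant} becomes irrelevant for the soundness analysis. Your bound is strictly tighter, and the argument exploits the alphabet-squaring structure more directly than the paper's generic smoothness bound.

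One technical point you should make precise: your edge set $\calE = \{\{R,R'\} : I_R \cap I_{R'} \neq \emptyset\}$ is written as unordered pairs, and your remedy for ``every vertex should be checked'' is to ``add auxiliary edges so that every vertex is incident to at least one edge.'' This is not quite sufficient, because the constraint $\psi_e$ on an edge is enforced only when \emph{both} endpoints are assigned; an edge to an unassigned neighbor does nothing, so an assigned vertex $R$ whose only edges go to unassigned vertices would escape the ``every resolution is accepted by $D_R$'' check, breaking \cref{obs:MaxPar:popularity}-type observation and the bound $\Pr[V(x)\text{ accepts }\pi^{(t)}]\geq\|\asg^{(t)}\|/N$. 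The paper avoids this by taking $\calE \subseteq \calV \times \calV$ so that every $R$ has the self-loop $(R,R)$, which is incident to an assigned vertex exactly when $R$ itself is assigned. You should adopt self-loops rather than arbitrary auxiliary edges; once you do, your proof goes through.
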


\paragraph{Reduction.}
We describe a reduction from a bounded-degree PCRP verifier to \ParBCSPReconf. 
Define $\delta(n) \coloneq \frac{\epsilon(\poly (n))}{2}$, whose precise expression is given later.
For any \PSPACE-complete language $L$,
let $V$ be a bounded-degree PCRP verifier of \cref{prp:subconstant} with
randomness complexity $r(n) = \bigO(\log \delta(n)^{-1} + \log n)$,
query complexity $q(n) = \bigO(\log \delta(n)^{-1})$,
perfect completeness, and
soundness $\delta(n)$.
The proof length $\ell(n)$ is polynomially bounded.
Suppose we are given an input $x \in \zo^n$.
Let $\pi^\sss, \pi^\ttt \in \zo^{\ell(n)}$ be the two proofs associated with $V(x)$.
Because the degree of $V$ is bounded by $\poly(\delta(n)^{-1})$,
for some constant $\kappa \in \bbN$,
we have
\begin{align}
    \Pr_{(I,D) \sim V(x)}\Bigl[i \in I\Bigr]
    \leq \frac{\delta(n)^{-\kappa}}{2^{r(n)}}
    \text{ for any } i \in [\ell(n)].
\end{align}
Hereafter, for any $r(n)$ random bit sequence $R$,
let $I_R$ and $D_R$ denote the query sequence
and the circuit generated by $V(x)$ on the randomness $R$, respectively.
Construct a constraint graph $G = (\calV,\calE,\Sigma,\Psi)$ as follows:
\begin{align}
    \calV & \coloneq \zo^{r(n)}, \\
    \calE & \coloneq
        \Bigl\{(R_1,R_2) \in \calV\times \calV \Bigm| I_{R_1} \cap I_{R_2} \neq \emptyset \Bigr\}, \\
    \Sigma & \coloneq
        \Bigl\{\{0\}, \{1\}, \{0,1\} \Bigr\}^{q(n)}, \\
    \Psi & \coloneq \{\psi_e\}_{e \in \calE},
\end{align}
where we define $\psi_{R_1,R_2} \colon \Sigma \times \Sigma \to \zo$
for each edge $(R_1,R_2) \in \calE$ so that
$\psi_{R_1,R_2}(\asg(R_1), \asg(R_2)) = 1$ for
an assignment $\asg \colon \calV \to \Sigma$
if and only if the following three conditions are satisfied:
\begin{align}
    & \forall \alpha \in \prod_{i \in I_R} \asg(R_1)_i, \quad D_{R_1}(\alpha) = 1,
    \label{eq:MaxPar:R1} \\
    & \forall \beta \in \prod_{i \in I_R} \asg(R_2)_i, \quad D_{R_2}(\beta) = 1,
    \label{eq:MaxPar:R2} \\
    & \forall i \in I_{R_1} \cap I_{R_2}, \quad \asg(R_1)_i \subseteq \asg(R_2)_i \text{ or } \asg(R_1)_i \supseteq \asg(R_2)_i.
    \label{eq:MaxPar:R1R2}
\end{align}
Here, for the sake of simple notation,
we consider $\asg(R)$ as if it were indexed by $I_R$ (rather than $[q(n)]$);
namely, $\asg(R) \in \{\{0\}, \{1\}, \{0,1\}\}^{I_R}$.
Thus, $\asg(R)$ for each $R \in \calV$ corresponds the local view of $V(x)$ on $R$.

For any proof $\pi \in \zo^{\ell(n)}$, we associate it with
an assignment $\asg_\pi \colon \calV \to \Sigma$ such that
\begin{align}
\label{eq:MaxPar:asgmtpi}
    \asg_\pi(R) \coloneq \Bigl(\{\pi_i\}\Bigr)_{i \in I_R} \text{ for all } R \in \calV.
\end{align}
Note that $\asg_\pi(R) \in \{\{0\}, \{1\}\}^{I_R}$.
Constructing two assignments
$\asg^\sss$ from $\pi^\sss$ and $\asg^\ttt$ from $\pi^\ttt$ by \cref{eq:MaxPar:asgmtpi},
we obtain an instance $(G,\asg^\sss,\asg^\ttt)$ of \ParBCSPReconf.
Observe that
$\asg^\sss$ and $\asg^\ttt$ satisfy $G$ and
$\|\asg^\sss\| = \|\asg^\ttt\| = |\calV|$.
Note that $N \coloneq |\calV| \leq n^c$ for some constant $c \in \bbN$.
Letting $\delta(n) \coloneq \frac{\epsilon(n^c)}{2} = \Omega\left(\frac{1}{\polylog n}\right)$
ensures that the alphabet size is $|\Sigma| = \bigO(3^{q(n)}) = \poly(\epsilon(N)^{-1})$.
This completes the description of the reduction.

\paragraph{Correctness.}
We first prove the completeness.
\begin{lemma}[Completeness]
\label{lem:MaxPar:completeness}
If $x \in L$, then $\maxpar_G(\asg^\sss \reco \asg^\ttt) = 1$.
\end{lemma}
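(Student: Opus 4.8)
The plan is to lift the reconfiguration sequence of \emph{proofs} furnished by the perfect completeness of the PCRP verifier into a reconfiguration sequence of \emph{total} assignments for $G$, using the alphabet squaring trick to spread each single bit flip of the proof over many single-vertex changes of the assignment.

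First I would invoke perfect completeness of the verifier $V$ of \cref{prp:subconstant}: since $x \in L$ there is a reconfiguration sequence $\sqpi = (\pi^{(1)}, \ldots, \pi^{(T)})$ from $\pi^\sss$ to $\pi^\ttt$ with $\Delta(\pi^{(t)},\pi^{(t+1)}) \le 1$ and $D_R(\pi^{(t)}|_{I_R}) = 1$ for all $t \in [T]$ and all $R \in \calV$. I would then convert each move from $\pi^{(t)}$ to $\pi^{(t+1)}$ into a finite block of assignments: if $\pi^{(t)} = \pi^{(t+1)}$, do nothing; otherwise let $i$ be the unique index on which the two proofs differ, say $\pi^{(t)}_i = 0$ and $\pi^{(t+1)}_i = 1$ (the reverse case being symmetric), and enumerate the finite set $\{R \in \calV : i \in I_R\} = \{R_1, \ldots, R_m\}$. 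Starting from $\asg_{\pi^{(t)}}$, the \emph{opening phase} sets $\asg(R_j)_i \coloneq \{0,1\}$ for $j = 1, \ldots, m$ in turn, and then the \emph{closing phase} sets $\asg(R_j)_i \coloneq \{1\}$ for $j = 1, \ldots, m$ in turn; by \cref{eq:MaxPar:asgmtpi} the block ends exactly at $\asg_{\pi^{(t+1)}}$. Concatenating these blocks over all $t$ yields a sequence $\sqasg$ of total assignments with endpoints $\asg_{\pi^\sss} = \asg^\sss$ and $\asg_{\pi^\ttt} = \asg^\ttt$ in which consecutive entries differ in at most one vertex; since $\bot$ is never used we have $\|\asg\| = |\calV|$ for every $\asg \in \sqasg$, so once $\sqasg$ is shown to be a valid reconfiguration partial assignment sequence we obtain $\|\sqasg\|_{\min}/|\calV| = 1$ and hence $\maxpar_G(\asg^\sss \reco \asg^\ttt) = 1$.

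The remaining step, where I expect the only real care is needed, is checking that every $\asg$ appearing in $\sqasg$ satisfies every constraint $\psi_{R_1,R_2}$, i.e., \cref{eq:MaxPar:R1,eq:MaxPar:R2,eq:MaxPar:R1R2}. Inside the block handling a flip at index $i$: if $i \notin I_R$ then $\asg(R)$ stays equal to $\asg_{\pi^{(t)}}(R) = \asg_{\pi^{(t+1)}}(R)$ throughout; if $i \in I_R$ then every coordinate $k \neq i$ of $\asg(R)$ is the singleton $\{\pi^{(t)}_k\} = \{\pi^{(t+1)}_k\}$ while $\asg(R)_i$ lies in $\{\{0\},\{0,1\}\}$ during the opening phase and in $\{\{0,1\},\{1\}\}$ during the closing phase. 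Consequently every tuple consistent with $\asg(R)$ equals $\pi^{(t)}|_{I_R}$ or $\pi^{(t+1)}|_{I_R}$, both of which $D_R$ accepts by perfect completeness, which gives \cref{eq:MaxPar:R1,eq:MaxPar:R2}; and for any edge $(R_1,R_2)$ and coordinate $k$, the sets $\asg(R_1)_k$ and $\asg(R_2)_k$ are $\subseteq$-comparable — they are equal when $k \neq i$, and when $k = i$ they are two members of the chain $\{0\} \subsetneq \{0,1\}$ or of the chain $\{1\} \subsetneq \{0,1\}$, since the entire opening phase precedes the entire closing phase — which gives \cref{eq:MaxPar:R1R2}. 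The one place where perfect completeness is genuinely used is \cref{eq:MaxPar:R1,eq:MaxPar:R2} during the opening phase: the doubled view $\asg(R)_i = \{0,1\}$ ranges over exactly the two local views induced by the \emph{adjacent} proofs $\pi^{(t)}$ and $\pi^{(t+1)}$, so it is essential that both of them are accepted with probability $1$ — precisely the completeness guarantee of \cref{prp:subconstant}; everything else is routine bookkeeping.
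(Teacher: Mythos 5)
Your proposal is correct and follows essentially the same two-phase (open to $\{0,1\}$, then close to the new singleton) construction as the paper, just making explicit the concatenation over consecutive steps of the proof reconfiguration sequence that the paper dispatches with a "WLOG one bit flip."
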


\begin{proof}[Proof of \cref{lem:MaxPar:completeness}]
It is sufficient to consider the case that
$\pi^\sss$ and $\pi^\ttt$ differ in exactly one position, say, $i^\star \in [\ell(n)]$; namely,
$\pi^\sss_{i^\star} \neq \pi^\ttt_{i^\star}$ and
$\pi^\sss_i = \pi^\ttt_i$ for all $i \neq i^\star$.
Note that $\asg^\sss$ and $\asg^\ttt$ may differ in two or more vertices.
Consider a reconfiguration partial assignment sequence $\sqasg$ from $\asg^\sss$ to $\asg^\ttt$
obtained by the following procedure:
\begin{itembox}[l]{\textbf{Reconfiguration sequence $\sqasg$ from $\asg^\sss$ to $\asg^\ttt$.}}
\begin{algorithmic}[1]
%    \LComment{start from $\asg^\sss$.}
    \For{\textbf{each} $R \in \calV$ such that $i^\star \in I_R$}
        \State change the \nth{$i^\star$} entry of $R$'s current value
        from $\asg^\sss(R)_{i^\star} = \{\pi^\sss_{i^\star}\}$
        to $\zo$.
    \EndFor
    \For{\textbf{each} $R \in \calV$ such that $i^\star \in I_R$}
        \State change the \nth{$i^\star$} entry of $R$'s current value
        from $\zo$
        to $\asg^\ttt(R)_{i^\star} = \{\pi^\ttt_{i^\star}\}$.
    \EndFor
%    \LComment{end with $\asg^\ttt$.}
\end{algorithmic}
\end{itembox}
Observe that any partial assignment $\asg^\circ$ of $\sqasg$ satisfies $G$ for the following reasons:
\begin{itemize}
\item
Since $f^\circ(R)_{i^\star} \subseteq \{0,1\} = \{\pi^\sss_{i^\star}, \pi^\ttt_{i^\star}\} = f^\sss(R)_{i^\star} \cup f^\ttt(R)_{i^\star}$
when $i^\star \in I_R$,
$\asg^\circ$ satisfies \cref{eq:MaxPar:R1,eq:MaxPar:R2}.
\item Letting $K \coloneq \{ f^\circ(R)_{i^\star} \mid i^\star \in I_R \}$,
we find $K$ to be either
$\{\{0\}\}$,
$\{\{1\}\}$,
$\{\{0,1\}\}$,
$\{\{0\},\{0,1\}\}$, or
$\{\{1\},\{0,1\}\}$ by construction;
i.e., $\asg^\circ$ satisfies \cref{eq:MaxPar:R1R2}.
\end{itemize}
Since $\|\asg^\circ\| = |\calV|$,
it holds that
$\maxpar_G(\asg^\sss \reco \asg^\ttt) \geq \frac{\|\sqasg\|_{\max}}{|\calV|} = 1$,
completing the proof.
\end{proof}

\begin{lemma}[Soundness]
\label{lem:MaxPar:soundness}
If $x\notin L$, then
\begin{align}
    \maxpar_G(\asg^\sss \reco \asg^\ttt)
    < \delta(n) + \frac{q(n) \cdot \delta(n)^{-\kappa}}{2^{r(n)}}.
\end{align}
\end{lemma}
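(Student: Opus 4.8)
The plan is to argue the contrapositive. By the definition of $\maxpar_G(\asg^\sss \reco \asg^\ttt)$ as a maximum, it suffices to show that if there is a reconfiguration partial assignment sequence $\sqasg = (\asg^{(1)}, \dots, \asg^{(T)})$ from $\asg^\sss$ to $\asg^\ttt$ with $\|\asg^{(t)}\| \geq \bigl(\delta(n) + \tfrac{q(n)\,\delta(n)^{-\kappa}}{2^{r(n)}}\bigr)|\calV|$ for every $t$, then $x \in L$. I will manufacture from $\sqasg$ a reconfiguration sequence of proofs from $\pi^\sss$ to $\pi^\ttt$ in which every proof is accepted by $V(x)$ with probability at least $\delta(n)$; by the soundness guarantee of the PCRP verifier $V$ of \cref{prp:subconstant}, such a sequence can exist only when $x \in L$.

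\emph{Decoding a proof from a partial assignment.} For a satisfying partial assignment $\asg$, fix a position $i \in [\ell(n)]$. Any two assigned vertices $R_1, R_2$ with $i \in I_{R_1} \cap I_{R_2}$ form an edge of $G$ (since $I_{R_1} \cap I_{R_2} \neq \emptyset$), so \cref{eq:MaxPar:R1R2} forces $\asg(R_1)_i$ and $\asg(R_2)_i$ to be $\subseteq$-comparable; hence $\{\asg(R)_i : i \in I_R,\ \asg(R) \neq \bot\}$ is a chain inside $\{\{0\},\{1\},\{0,1\}\}$ and never contains both $\{0\}$ and $\{1\}$. Define $\pi_\asg \in \zo^{\ell(n)}$ by letting $(\pi_\asg)_i$ be the unique bit $b$ with $\{b\}$ in this chain if one exists and $0$ otherwise (positions queried by no $R$ do not influence $V(x)$ and we may assume they agree in $\pi^\sss$ and $\pi^\ttt$, so $\pi_{\asg^\sss} = \pi^\sss$ and $\pi_{\asg^\ttt} = \pi^\ttt$). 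By construction $\pi_\asg|_{I_R} \in \prod_{i \in I_R} \asg(R)_i$ for every assigned $R$; since the loop $(R,R)$ belongs to $\calE$ and is satisfied by $\asg$, \cref{eq:MaxPar:R1} gives $D_R(\pi_\asg|_{I_R}) = 1$, i.e.\ $V(x)$ accepts $\pi_\asg$ on randomness $R$. Therefore $\Pr[V(x) \text{ accepts } \pi_\asg] \geq \|\asg\| / |\calV|$.

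\emph{Interpolating between consecutive proofs.} Write $\pi^{(t)} \coloneq \pi_{\asg^{(t)}}$. If $\asg^{(t)}$ and $\asg^{(t+1)}$ differ only at the vertex $R^\star$, then $(\pi_\asg)_i$ can change only for $i \in I_{R^\star}$, so $\pi^{(t)}$ and $\pi^{(t+1)}$ differ on at most $|I_{R^\star}| \leq q(n)$ coordinates; splice in $\zo^{\ell(n)}$-proofs that flip one such coordinate at a time. Any spliced-in proof $\sigma$ agrees with $\pi^{(t)}$ on $I_R$ whenever $I_R \cap I_{R^\star} = \emptyset$, and the number of $R$ with $I_R \cap I_{R^\star} \neq \emptyset$ is at most $\sum_{i \in I_{R^\star}} |\{R : i \in I_R\}| \leq q(n)\,\delta(n)^{-\kappa}$ by the degree bound $\Pr_{(I,D) \sim V(x)}[i \in I] \leq \delta(n)^{-\kappa}/2^{r(n)}$; hence
\begin{align*}
    \Pr[V(x) \text{ accepts } \sigma]
    \;\geq\; \Pr[V(x) \text{ accepts } \pi^{(t)}] - \frac{q(n)\,\delta(n)^{-\kappa}}{2^{r(n)}}
    \;\geq\; \frac{\|\asg^{(t)}\|}{|\calV|} - \frac{q(n)\,\delta(n)^{-\kappa}}{2^{r(n)}}
    \;\geq\; \delta(n).
\end{align*}
Concatenating these interpolations over $t = 1, \dots, T-1$ produces a reconfiguration sequence of proofs from $\pi^\sss$ to $\pi^\ttt$, every member of which is accepted by $V(x)$ with probability at least $\delta(n)$; by the soundness of \cref{prp:subconstant} this is impossible when $x \notin L$, so $x \in L$, as desired.

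The main obstacle I anticipate is the accounting in the interpolation step: verifying that decoding commutes well enough with a single-vertex change that consecutive decoded proofs differ on at most $q(n)$ coordinates, and pinning the acceptance-probability loss at the spliced-in proofs to exactly $\tfrac{q(n)\,\delta(n)^{-\kappa}}{2^{r(n)}}$ via the degree bound, which is what makes the error term in the lemma tight. The conceptual point to get right beforehand is that the loops $(R,R) \in \calE$ of the FGLSS graph keep the local-acceptance constraint \cref{eq:MaxPar:R1} active at every assigned vertex, so that the bound $\Pr[V(x)\text{ accepts }\pi_\asg] \geq \|\asg\|/|\calV|$ holds with no further correction.
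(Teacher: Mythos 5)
Your proposal is correct and follows essentially the same route as the paper: decode each partial assignment to a Boolean proof, use the self-loops of the FGLSS graph to show that $V(x)$ accepts the decoded proof with probability at least the assigned fraction $\|\asg^{(t)}\|/|\calV|$, and then interpolate between consecutive decoded proofs (which differ on at most $q(n)$ coordinates) losing at most $q(n)\,\delta(n)^{-\kappa}/2^{r(n)}$ via the regularity (degree) bound. The only cosmetic difference is in how the decoding is phrased: the paper decodes by a plurality vote over $\{R : i \in I_R\}$ and then checks each possible chain shape in a small table to establish \cref{obs:MaxPar:popularity}, while you observe directly that the edge constraints make the set of labels at position $i$ a $\subseteq$-chain and read off the unique singleton; these produce the exact same proof $\pi_{\asg}$ and the exact same conclusion, so your presentation is a harmless streamlining rather than a different argument.
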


The proof of \cref{thm:MaxPar} follows from 
\cref{lem:MaxPar:completeness,lem:MaxPar:soundness} because
for any sufficiently large $n$ such that
$\frac{q(n) \cdot \delta(n)^{-\kappa}}{2^{r(n)}} \leq \delta(n)$
(note that $\delta(n) = \Omega\left(\frac{1}{\polylog n}\right)$),
the following hold:
\begin{itemize}
    \item (Perfect completeness) If $x \in L$,
        then $\maxpar_G(\asg^\sss \reco \asg^\ttt) = 1$;
    \item (Soundness) If $x \notin L$,
        then $\maxpar_G(\asg^\sss \reco \asg^\ttt) < 2\delta(n) = \epsilon(N)$.
\end{itemize}

\begin{proof}[Proof of \cref{lem:MaxPar:soundness}]
We prove the contrapositive.
Suppose
$\maxpar_G(\asg^\sss \reco \asg^\ttt) \geq \Gamma$ for some $\Gamma \in (0,1)$, and
there is a reconfiguration partial assignment sequence
$\sqasg = ( \asg^{(1)}, \ldots, \asg^{(T)} )$
from $\asg^\sss$ to $\asg^\ttt$
such that
$\|\sqasg\|_{\min} = \maxpar_G(\asg^\sss \reco \asg^\ttt)$.
Define then a (not necessarily reconfiguration) sequence
$\sqpi = ( \pi^{(1)}, \ldots, \pi^{(T)} )$
over $\zo^{\ell(n)}$ such that
each proof $\pi^{(t)}$
is determined based on the \emph{plurality vote} over $\asg^{(t)}$; namely,
\begin{align}
    \pi^{(t)}_i \coloneq \argmax_{b \in \zo}
    \left|\Bigl\{ R \in \calV \Bigm| i \in I_R \text{ and } b \in \asg^{(t)}(R)_i \Bigr\}\right|
    \; \text{for all } i \in [\ell(n)],
\end{align}
where ties are broken so that $0$ is chosen.
In particular,
$\pi^{(1)} = \pi^\sss$ and $\pi^{(T)} = \pi^\ttt$.
Observe the following:

\begin{observation}
\label{obs:MaxPar:popularity}
    For any $t \in [T]$ and $R \in \calV$, it holds that
    \begin{align}\label{eq:MaxPar:popularity}
        \asg^{(t)}(R) \neq \bot
        \implies D_R(\pi^{(t)}|_{I_R}) = 1.
    \end{align}
\end{observation}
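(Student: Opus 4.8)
The plan is to reduce the implication to two facts. First, I would observe that by the definition of $\calE$ the pair $(R,R)$ is an edge of $G$ (its query set is nonempty, so $I_R\cap I_R\neq\emptyset$), and this self-loop carries precisely the constraint forcing $R$'s local view to be accepting. Since $\asg^{(t)}\in\sqasg$ is a satisfying partial assignment and $\asg^{(t)}(R)\neq\bot$, both endpoints of this self-loop are assigned, hence $\psi_{R,R}(\asg^{(t)}(R),\asg^{(t)}(R))=1$; instantiating \cref{eq:MaxPar:R1} for the edge $(R,R)$ then gives $D_R(\alpha)=1$ for \emph{every} $\alpha\in\prod_{i\in I_R}\asg^{(t)}(R)_i$. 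Consequently it suffices to prove the second fact: that $\pi^{(t)}_i\in\asg^{(t)}(R)_i$ for every $i\in I_R$, so that $\pi^{(t)}|_{I_R}$ is one of those $\alpha$'s.

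To establish $\pi^{(t)}_i\in\asg^{(t)}(R)_i$, I would fix $i\in I_R$ and split on the value $\asg^{(t)}(R)_i\in\{\{0\},\{1\},\{0,1\}\}$. If $\asg^{(t)}(R)_i=\{0,1\}$ the claim is immediate, since $\pi^{(t)}_i\in\{0,1\}$ always. If $\asg^{(t)}(R)_i=\{0\}$, I would use the consistency constraint \cref{eq:MaxPar:R1R2}: for any vertex $R'$ with $i\in I_{R'}$ and $\asg^{(t)}(R')\neq\bot$, the edge $(R,R')$ is satisfied by $\asg^{(t)}$, so $\asg^{(t)}(R)_i\subseteq\asg^{(t)}(R')_i$ or $\asg^{(t)}(R)_i\supseteq\asg^{(t)}(R')_i$; either possibility forces $0\in\asg^{(t)}(R')_i$. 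Hence in the plurality vote at coordinate $i$, \emph{every} assigned vertex querying $i$ is counted toward the symbol $0$, while only those with $\asg^{(t)}(R')_i=\{0,1\}$ are counted toward $1$; since $R$ itself is counted toward $0$ but not toward $1$, the tally for $0$ strictly exceeds that for $1$, whence $\pi^{(t)}_i=0\in\asg^{(t)}(R)_i$. The case $\asg^{(t)}(R)_i=\{1\}$ is symmetric, and here the tally for $1$ strictly exceeds that for $0$, so the tie-breaking convention (ties go to $0$) does not interfere; thus $\pi^{(t)}_i=1\in\asg^{(t)}(R)_i$. Combining the two facts proves the observation.

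I expect the only delicate point to be the counting argument in the second step: one must invoke \cref{eq:MaxPar:R1R2} to rule out that any assigned vertex querying $i$ ``hard-disagrees'' with $R$ at coordinate $i$, and then note that $R$'s own contribution makes the winning tally strict, so the tie-breaking rule is harmless. Everything else — the existence of the self-loop $(R,R)\in\calE$, the meaning of $\psi_{R,R}$ via \cref{eq:MaxPar:R1}, and the defining formula for $\pi^{(t)}$ — is a direct unwinding of the definitions.
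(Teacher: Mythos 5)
Your proposal is correct and follows essentially the same route as the paper's proof: you use the self-loop $(R,R)\in\calE$ together with \cref{eq:MaxPar:R1} to get $D_R(\alpha)=1$ for every $\alpha\in\prod_{i\in I_R}\asg^{(t)}(R)_i$, and you use the consistency constraint \cref{eq:MaxPar:R1R2} to show the plurality-vote value $\pi^{(t)}_i$ lies in $\asg^{(t)}(R)_i$. The only cosmetic difference is that you split cases on $\asg^{(t)}(R)_i\in\{\{0\},\{1\},\{0,1\}\}$ rather than enumerating the possible sets $K$ of local values as the paper does, but the counting argument (every assigned voter at coordinate $i$ must agree up to containment, and $R$'s own vote makes the majority strict) is the same.
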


Since
$\Pr_{R \sim \calV} [f^{(t)}(R) \neq \bot] =  \|\asg^{(t)}\| \geq \Gamma$,
by \cref{obs:MaxPar:popularity},
we have that for all $t$,
\begin{align}
    \Pr\Bigl[ V(x) \text{ accepts } \pi^{(t)} \Bigr]
    = \Pr_{R \sim \zo^{r(n)}}\Bigl[ D_R(\pi^{(t)}|_{I_R}) = 1 \Bigr]
    \geq \Pr_{R \sim \calV} \Bigl[f^{(t)}(R) \neq \bot\Bigr]
    \geq \Gamma.
\end{align}
Unfortunately, $\sqpi$ is \emph{not} a reconfiguration sequence because
$\pi^{(t)}$ and $\pi^{(t+1)}$ may differ in two or more positions.
Since $f^{(t)}$ and $f^{(t+1)}$ differ in a single vertex $R \in \calV$,
we have $\pi^{(t)}_i \neq \pi^{(t+1)}_i$ only if $i \in I_R$, implying
$\Delta(\pi^{(t)}, \pi^{(t+1)}) \leq |I_R| = q(n)$.
Using this fact,
we interpolate between $\pi^{(t)}$ and $\pi^{(t+1)}$
to find a valid reconfiguration sequence $\sqpi^{(t)}$ such that
$V(x)$ accepts every proof of $\sqpi^{(t)}$ with probability $\Gamma - o(1)$.

\begin{claim}
\label{clm:MaxPar:interpolate}
    There exists a reconfiguration sequence $\sqpi^{(t)}$
    from $\pi^{(t)}$ to $\pi^{(t+1)}$
    such that for every proof $\pi^\circ$ of $\sqpi^{(t)}$,
    \begin{align}
        \Pr\Bigl[V(x) \text{ accepts } \pi^\circ\Bigr] 
        \geq \Gamma - \frac{q(n) \cdot \delta(n)^{-\kappa}}{2^{r(n)}}.
    \end{align}
\end{claim}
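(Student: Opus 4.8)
The plan is to build $\sqpi^{(t)}$ by flipping, one bit at a time, the positions on which $\pi^{(t)}$ and $\pi^{(t+1)}$ disagree, and to bound the resulting loss in acceptance probability using the degree bound on $V$ supplied by \cref{prp:subconstant}.

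First I would reduce to the case that $\asg^{(t)}$ and $\asg^{(t+1)}$ differ in exactly one vertex $R^\star \in \calV$; if they coincide, then $\pi^{(t)} = \pi^{(t+1)}$ by the plurality-vote rule, and the length-one sequence $\sqpi^{(t)} = (\pi^{(t)})$ already witnesses the claim since $\Pr[V(x)\text{ accepts }\pi^{(t)}] \ge \Gamma$. Because the plurality value $\pi^{(t)}_i$ at a position $i$ is a function only of the restricted values $\{\asg^{(t)}(R)_i : R \in \calV,\, i \in I_R\}$, changing $\asg^{(t)}$ only at $R^\star$ can alter $\pi^{(t)}_i$ only for positions $i \in I_{R^\star}$. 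Hence the disagreement set $J \coloneq \{i : \pi^{(t)}_i \neq \pi^{(t+1)}_i\}$ is contained in $I_{R^\star}$, so $|J| \le |I_{R^\star}| = q(n)$. List $J = \{i_1, \ldots, i_m\}$ with $m \le q(n)$ and set $\sqpi^{(t)} = (\sigma^{(0)}, \ldots, \sigma^{(m)})$, where $\sigma^{(0)} = \pi^{(t)}$ and $\sigma^{(j)}$ is obtained from $\sigma^{(j-1)}$ by setting position $i_j$ equal to $\pi^{(t+1)}_{i_j}$. Then $\sigma^{(m)} = \pi^{(t+1)}$, consecutive proofs differ in at most one bit, so $\sqpi^{(t)}$ is a valid reconfiguration sequence from $\pi^{(t)}$ to $\pi^{(t+1)}$, and every $\pi^\circ \in \sqpi^{(t)}$ differs from $\pi^{(t)}$ only on a subset of $J \subseteq I_{R^\star}$.

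The main step is then to compare acceptance probabilities. For a fixed randomness $R$, the decision $D_R(\,\cdot\,|_{I_R})$ depends only on the bits indexed by $I_R$, so the event that $V(x)$ accepts can differ between $\pi^{(t)}$ and $\pi^\circ$ only on those $R$ with $I_R \cap J \neq \emptyset$. The degree bound from the reduction — namely $\Pr_{(I,D) \sim V(x)}[i \in I] \le \delta(n)^{-\kappa}/2^{r(n)}$ for every position $i$, i.e.\ each position is queried by at most $\delta(n)^{-\kappa}$ random strings — together with a union bound over the $m \le q(n)$ positions of $J$ shows that there are at most $q(n)\cdot\delta(n)^{-\kappa}$ such $R$. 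Consequently
\[
  \Pr\bigl[V(x)\text{ accepts }\pi^\circ\bigr]
  \;\geq\; \Pr\bigl[V(x)\text{ accepts }\pi^{(t)}\bigr] - \frac{q(n)\cdot\delta(n)^{-\kappa}}{2^{r(n)}}
  \;\geq\; \Gamma - \frac{q(n)\cdot\delta(n)^{-\kappa}}{2^{r(n)}},
\]
where the last inequality uses $\Pr[V(x)\text{ accepts }\pi^{(t)}] \ge \Gamma$ established just above. This is exactly the asserted bound.

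There is no real obstacle; the one point that must not be skipped is that all flipped positions lie inside the single query set $I_{R^\star}$, so that their number is bounded by $q(n)$. This is what keeps the union bound at $q(n)\cdot\delta(n)^{-\kappa}$ rather than something scaling with the proof length $\ell(n)$, and it is precisely the fact already exploited to bound $\Delta(\pi^{(t)},\pi^{(t+1)})$ in the surrounding proof.
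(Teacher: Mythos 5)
Your proof is correct and follows essentially the same route as the paper: flip the disagreeing positions one at a time, note that every intermediate proof differs from $\pi^{(t)}$ on at most $q(n)$ positions (all inside $I_{R^\star}$), and apply the degree bound with a union bound over those positions. The paper states the bound $\Delta(\pi^{(t)},\pi^{(t+1)}) \leq q(n)$ just before the claim rather than re-deriving it inside the claim's proof, but this is only a difference in presentation, not substance.
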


Concatenating $\sqpi^{(t)}$'s of \cref{clm:MaxPar:interpolate} for all $t$,
we obtain a valid reconfiguration sequence $\sqpi$ from
$\pi^\sss$ to $\pi^\ttt$ such that
\begin{align}
    \min_{1 \leq t \leq T} \Pr\Bigl[V(x) \text{ accepts } \pi^{(t)} \Bigr]
    \geq \Gamma - \frac{q(n) \cdot \delta(n)^{-\kappa}}{2^{r(n)}}.
\end{align}
Substituting
$\delta(n) + \frac{q(n) \cdot \delta(n)^{-\kappa}}{2^{r(n)}}$
for $\Gamma$, we have that if
$\maxpar_G(\asg^\sss \reco \asg^\ttt) \geq \delta(n) + \frac{q(n)\cdot \delta(n)^{-\kappa}}{2^{r(n)}}$,
then $V(x)$ accepts every proof $\pi^{(t)}$ of $\sqpi$ with probability at least $\delta(n)$; i.e.,
$x \in L$.
This completes the proof of \cref{lem:MaxPar:soundness}.
\end{proof}

What remains to be done is to prove \cref{obs:MaxPar:popularity,clm:MaxPar:interpolate}.

\begin{proof}[Proof of \cref{obs:MaxPar:popularity}]
Suppose $\asg^{(t)}(R) \neq \bot$ for some $t \in [T]$ and $R \in \calV$.
We will show that $\pi^{(t)}_i \in \asg^{(t)}(R)_i$ for every $i \in I_R$.
Define
\begin{align}
    K \coloneq \Bigl\{
        \asg^{(t)}(R')_i \Bigm| \exists R' \in \calV \text{ s.t.~} i \in I_{R'} \text{ and } \asg^{(t)}(R') \neq \bot
    \Bigr\}.
\end{align}
Then, 
any pair $\alpha, \beta \in K$ must satisfy that
$\alpha \subseteq \beta$ or $\alpha \supseteq \beta$ because otherwise,
$\asg^{(t)}$ would violate \cref{eq:MaxPar:R1R2} at edge $(R_1,R_2)$ such that
$i \in R_1 \cap R_2$, $\asg^{(t)}(R_1)_i = \alpha$, and $\asg^{(t)}(R_2)_i = \beta$, which is a 
contradiction.
For each possible case of $K$,
the result of the plurality vote $\pi^{(t)}_i$ is shown below,
implying that $\pi^{(t)}_i \in \asg^{(t)}(R)_i$.

\begin{tabular}{c|cccccc}
\toprule
$K$ & $\{\}$ & $\{\{0\}\}$ & $\{\{1\}\}$ & $\{\{0,1\}\}$ & $\{\{0\}, \{0,1\}\}$ & $\{\{1\}, \{0,1\}\}$ \\
\midrule
$\pi^{(t)}_i$ & $0$ & $0$ & $1$ & $0$ & $0$ & $1$ \\
\bottomrule
\end{tabular}

Since $\asg^{(t)}(R)$ must satisfy a self-loop $(R,R) \in \calE$,
by the definition of $\psi_{R,R}$, we have
\begin{align}
    \forall \alpha \in \prod_{i \in I_R} \asg^{(t)}(R)_i, \quad D_R(\alpha) = 1,
\end{align}
On the other hand, it holds that
\begin{align}
    \pi^{(t)}|_{I_R} \in \prod_{i \in I_R} \asg^{(t)}(R)_i,
\end{align}
implying $D_R(\pi^{(t)}|_{I_R}) = 1$, as desired.
\end{proof}

\begin{proof}[Proof of \cref{clm:MaxPar:interpolate}]
Recall that $\pi^{(t)}$ and $\pi^{(t+1)}$ may differ in at most $q(n)$ positions.
Consider any trivial reconfiguration sequence $\sqpi^{(t)}$
from $\pi^{(t)}$ to $\pi^{(t+1)}$
by simply changing at most $q(n)$ positions on which $\pi^{(t)}$ and $\pi^{(t+1)}$ differ.
By construction,
any proof $\pi^\circ$ of $\sqpi^{(t)}$ differs from $\pi^{(t)}$ in at most $q(n)$ positions, say,
$I^\circ \in {\ell(n) \choose \leq q(n)}$.
Then, we derive the following:
\begin{align*}
\begin{aligned}
    \Pr\Bigl[V(x) \text{ accepts } \pi^\circ\Bigr]
    & = \Pr_{(I,D) \sim V(x)}\Bigl[ D(\pi^\circ|_I) = 1\Bigr]
    \geq \Pr_{(I,D) \sim V(x)}\Bigl[ D(\pi^\circ|_I) = 1 \text{ and } I \cap I^\circ = \emptyset \Bigr] \\
    & = \Pr_{(I,D) \sim V(x)}\Bigl[ D(\pi^{(t)}|_I) = 1 \text{ and } I \cap I^\circ = \emptyset \Bigr] \\
    & = \underbrace{\Pr_{(I,D) \sim V(x)}\Bigl[ D(\pi^{(t)}|_I) = 1\Bigr]}_{= \Pr\left[V(x) \text{ accepts } \pi^{(t)}\right] \geq \Gamma}
        - \Pr_{(I,D) \sim V(x)}\Bigl[ D(\pi^{(t)}|_I) = 1 \text{ and } I \cap I^\circ \neq \emptyset \Bigr] \\
    & \geq \Gamma
        - \Pr_{(I,D) \sim V(x)}\Bigl[I \cap I^\circ \neq \emptyset \Bigr].
\end{aligned}
\end{align*}
Recall that
$\Pr_{(I,D) \sim V(x)}[i \in I] \leq \frac{\delta(n)^{-\kappa}}{2^{r(n)}}$
for any $i \in [\ell(n)]$ by assumption.
Since $|I^\circ| \leq q(n)$, taking a union bound, we have
\begin{align}
    \Pr_{(I,D) \sim V(x)}\Bigl[I \cap I^\circ \neq \emptyset \Bigr]
    \leq \sum_{i \in I^\circ}\Pr_{(I,D) \sim V(x)}\Bigl[i \in I\Bigr]
    \leq \frac{q(n) \cdot \delta(n)^{-\kappa}}{2^{r(n)}},
\end{align}
implying that
\begin{align}
    \Pr\Bigl[V(x) \text{ accepts } \pi^\circ\Bigr]
    \geq \Gamma - \frac{q(n) \cdot \delta(n)^{-\kappa}}{2^{r(n)}}.
\end{align}
This completes the proof.
\end{proof}

\subsection{Reducing \ParBCSPReconf to \LabCovReconf}
\label{subsec:FGLSS:LabelCover}

Subsequently, we show $\PSPACE$-hardness of approximation for \LabCovReconf
by reducing from \ParBCSPReconf,
whose proof is similar to \cite{karthik2023inapproximability}.
Note that \LabCovReconf admits a $2$-factor approximation,
similarly to \MinmaxSetCovReconf (see \cref{subsec:appl:SetCover}).

\ifthenelse{\boolean{FULL}}{
\begin{theorem}
}{
\begin{theorem}[$\ast$]
}
\label{thm:LabelCover}
For any function $\epsilon \colon \bbN \to \bbR$ with
$\epsilon(n) = \Omega\left(\frac{1}{\polylog n}\right)$,
\prb{Gap$_{1,2-\epsilon(N)}$ \LabCovReconf}
with alphabet size $\poly(\epsilon(N)^{-1})$
is $\PSPACE$-complete,
where $N$ is the number of vertices.
In particular, 
\begin{itemize}
\item for any constant $\epsilon \in (0,1)$,
\prb{Gap$_{1,2-\epsilon}$ \LabCovReconf}
with constant alphabet size is $\PSPACE$-complete, and
\item 
\prb{Gap$_{1,2-\frac{1}{\polyloglog N}}$ \LabCovReconf}
with alphabet size $\polyloglog N$ is $\PSPACE$-complete.
%\item \prb{Gap$_{1,2-\frac{1}{\polylog N}}$ \LabCovReconf} with alphabet size $\polylog N$ is $\PSPACE$-complete.
\end{itemize}
\end{theorem}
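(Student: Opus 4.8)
The plan is to give a straightforward gap-preserving, polynomial-time reduction from \prb{Gap$_{1,\tilde\epsilon(N)}$ \ParBCSPReconf} (which is $\PSPACE$-complete by \cref{thm:MaxPar} for any $\tilde\epsilon(n)=\Omega(1/\polylog n)$) to \prb{Gap$_{1,2-\epsilon(N)}$ \LabCovReconf}, leaving the constraint graph---and therefore the alphabet and the number $N$ of vertices---unchanged; the argument mirrors \cite{karthik2023inapproximability}. Membership of \LabCovReconf in $\PSPACE$ is routine, since deciding $\minlab_G(\asg^\sss\reco\asg^\ttt)\le c$ is a reachability question among the polynomially-many-bit multi-assignments of size at most $c(|\calV|+1)$. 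Given a \ParBCSPReconf instance $(G,\asg^\sss,\asg^\ttt)$ with $G=(\calV,\calE,\Sigma,\Psi)$ and---as we may assume when $c=1$---total satisfying assignments $\asg^\sss,\asg^\ttt$, I would output the \LabCovReconf instance $(G,\bsg^\sss,\bsg^\ttt)$ with $\bsg^\sss(v)\coloneq\{\asg^\sss(v)\}$ and $\bsg^\ttt(v)\coloneq\{\asg^\ttt(v)\}$ for every $v\in\calV$. I would also assume that every vertex of $G$ carries a self-loop: this already holds for the instances constructed in \cref{subsec:FGLSS:MaxPar} (there $I_R\cap I_R=I_R\neq\emptyset$), and otherwise one adds trivial self-loops $\psi_{v,v}(\alpha,\beta)\coloneq[\alpha=\beta]$, which are vacuous for partial assignments and hence do not affect $\maxpar$. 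The purpose of the self-loops is that then every \emph{satisfying} multi-assignment must assign a nonempty subset to each vertex.

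For completeness, suppose $\maxpar_G(\asg^\sss\reco\asg^\ttt)=1$; then there is a reconfiguration sequence $\asg^{(1)},\dots,\asg^{(T)}$ of total satisfying assignments (totality is forced because $\|\asg^{(t)}\|\le|\calV|$ always) with $\Delta(\asg^{(t)},\asg^{(t+1)})\le 1$. I would replace each $\asg^{(t)}$ by the singleton multi-assignment $\bsg^{(t)}(v)\coloneq\{\asg^{(t)}(v)\}$ and, whenever $\asg^{(t)},\asg^{(t+1)}$ differ only at a vertex $v$ (value $a$ versus $b$), interpolate via the two moves $\{a\}\to\{a,b\}\to\{b\}$ at $v$. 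Every multi-assignment so obtained has size at most $|\calV|+1$ and is satisfying, because the intermediate one still carries the previously consistent label $a$ at $v$ and every other vertex is unchanged; hence $\minlab_G(\asg^\sss\reco\asg^\ttt)\le 1$.

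For soundness I would argue the contrapositive. From a reconfiguration multi-assignment sequence $\bsg^{(1)},\dots,\bsg^{(T)}$ with $\|\bsg^{(t)}\|\le(2-\epsilon(N))(|\calV|+1)$ for all $t$, define $\asg^{(t)}(v)$ to be the unique element of $\bsg^{(t)}(v)$ if $|\bsg^{(t)}(v)|=1$ and $\bot$ otherwise. Since consecutive $\bsg^{(t)},\bsg^{(t+1)}$ differ by a single label at a single vertex $v$ and no $\bsg^{(t)}(v)$ is empty (self-loops), the only way $\asg^{(t)}$ and $\asg^{(t+1)}$ can differ is at $v$ via a $1\leftrightarrow 2$ change of $|\bsg(v)|$, so $\Delta(\asg^{(t)},\asg^{(t+1)})\le 1$; and $\asg^{(t)}$ satisfies $G$ because any edge both of whose endpoints are assigned is forced to use its unique candidate label pair, which $\bsg^{(t)}$ witnesses. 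Letting $n_j^{(t)}$ count the vertices with $|\bsg^{(t)}(v)|=j$, we have $\sum_j n_j^{(t)}=|\calV|$ and $\|\bsg^{(t)}\|\ge n_1^{(t)}+2(|\calV|-n_1^{(t)})=2|\calV|-n_1^{(t)}$, so $\|\asg^{(t)}\|=n_1^{(t)}\ge 2|\calV|-(2-\epsilon(N))(|\calV|+1)\ge\epsilon(N)|\calV|-2$, whence $\maxpar_G(\asg^\sss\reco\asg^\ttt)\ge\epsilon(N)-\frac{2}{N}$. Choosing $\tilde\epsilon(N)\coloneq\epsilon(N)-\frac{2}{N}$---still $\Omega(1/\polylog N)$, and at least $\epsilon(N)/2$ for large $N$, so the alphabet size $\poly(\tilde\epsilon(N)^{-1})$ remains $\poly(\epsilon(N)^{-1})$---makes the reduction gap-preserving: soundness of \ParBCSPReconf then forces $\minlab_G(\asg^\sss\reco\asg^\ttt)>2-\epsilon(N)$. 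Together with completeness and \cref{thm:MaxPar}, this yields the main claim, and the two ``in particular'' items follow by taking $\epsilon(n)$ to be a constant and $\epsilon(n)=(\polyloglog n)^{-1}$, respectively.

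The one genuinely delicate step is the soundness bookkeeping above: it must be checked at once that collapsing each $\bsg^{(t)}$ to its singleton-support partial assignment (i) preserves the adjacency relation---this is exactly where nonempty supports, i.e.\ the self-loops, are needed---and (ii) loses only an additive constant in the size, which is the inequality $\|\bsg^{(t)}\|\ge 2|\calV|-n_1^{(t)}$ and is precisely what converts a factor $2-\epsilon(N)$ for \LabCovReconf into value $\approx\epsilon(N)$ for \ParBCSPReconf. All other ingredients---completeness, polynomial running time, the alphabet-size bound, and $\PSPACE$ membership---are immediate.
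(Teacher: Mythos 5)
Your proof is correct and matches the paper's own argument essentially verbatim: same reduction (wrap each value in a singleton and leave the constraint graph unchanged), same completeness interpolation $\{a\}\to\{a,b\}\to\{b\}$, and same soundness step (project to singleton-support partial assignments and count $\|\bsg^{(t)}\|\geq 2|\calV|-n_1^{(t)}$), with only a cosmetic difference in how the soundness parameter for \ParBCSPReconf is chosen ($\epsilon(N)-2/N$ versus the paper's $\epsilon(N)/2$ with the assumption $N\geq 4/\epsilon(N)$). One thing you do that the paper leaves implicit is to flag that the inequality $\|\bsg^{(t)}\|\geq 2|\calV|-n_1^{(t)}$ requires every $\bsg^{(t)}(v)$ to be nonempty, and to note that this is guaranteed by the self-loops present in the \ParBCSPReconf instances produced by the FGLSS reduction (or can be enforced by adding trivial self-loops); the paper relies on this silently.
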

\ifthenelse{\boolean{FULL}}{
\begin{proof}
We present a gap-preserving reduction from
\prb{Gap$_{1,\frac{\epsilon(N)}{2}}$ \ParBCSPReconf}
to
\prb{Gap$_{1,2-\epsilon(N)}$ \LabCovReconf}.
Let
$(G=(\calV,\calE,\Sigma,\Psi),\asg^\sss,\asg^\ttt)$
be an instance of \ParBCSPReconf with $N$ variables and alphabet size $\poly(\epsilon(N)^{-1})$,
where $\|\asg^\sss\| = \|\asg^\ttt\| = |\calV|$.
Without loss of generality, we can safely assume that
$N$ is sufficiently large so that $N \geq \frac{4}{\epsilon(N)}$ because
$\epsilon(N) = \Omega\left(\frac{1}{\polylog N}\right)$.
Construct then multi-assignments
$\asg'^\sss, \asg'^\ttt \colon \calV \to 2^\Sigma$ such that
$\asg'^\sss(v) \coloneq \{\asg^\sss(v)\}$ and
$\asg'^\ttt(v) \coloneq \{\asg^\ttt(v)\}$
for all $v \in \calV$.
Observe that $\asg'^\sss$ and $\asg'^\ttt$ satisfy $G$;
thus, $(G,\asg'^\sss,\asg'^\ttt)$
is an instance of \LabCovReconf, completing the reduction.

We first show the perfect completeness; namely,
\begin{align}
    \maxpar_G(\asg^\sss \reco \asg^\ttt) \geq 1
    \implies
    \minlab_G(\asg'^\sss \reco \asg'^\ttt) \leq 1.
\end{align}
Suppose there is a reconfiguration partial assignment sequence
$\sqasg = ( \asg^{(1)}, \ldots, \asg^{(T)} )$ from
$\asg^\sss$ to $\asg^\ttt$ such that
$\|\sqasg\|_{\min} = |\calV|$.
Construct then a sequence
$\sqasg' = ( \asg'^{(1)}, \asg'^{(1.5)}, \ldots, \asg'^{(T-0.5)}, \asg'^{(T)} )$
of multi-assignments
such that
$\asg'^{(t)}(v) = \{\asg^{(t)}(v)\}$ for all $t \in [T]$ and $v \in \calV$, and
$\asg'^{(t+0.5)}$ for each $t \in [T-1]$ is defined as follows:
Given that $\asg^{(t)}$ and $\asg^{(t+1)}$ differ only in $v^\star$,
we let
\begin{align}
    \asg'^{(t+0.5)}(v) \coloneq
    \begin{cases}
        \left\{\asg^{(t)}(v^\star), \asg^{(t+1)}(v^\star)\right\} & \text{if } v = v^\star, \\
        \left\{\asg^{(t)}(v)\right\} & \text{otherwise},
    \end{cases}
    \text{ for all } v \in \calV.
\end{align}
In particular, it holds that
$\asg'^{(1)} = \asg'^\sss$ and
$\asg'^{(T)} = \asg'^\ttt$.
Observe that
$\asg'^{(t+0.5)}$ is a satisfying multi-assignment
with $\|\asg'^{(t+0.5)}\| = N+1$
for all $t$, and that
$\sum_{v \in \calV}|\asg'^{(t)} \triangle \asg'^{(t+0.5)}| = 1$; i.e.,
$\sqasg'$ is a reconfiguration multi-assignment sequence
from $\asg'^\sss$ to $\asg'^\ttt$
such that
$\|\sqasg'\|_{\max} = N+1$; i.e.,
$\minlab_G(\asg'^\sss \reco \asg'^\ttt) \leq \frac{\|\sqasg'\|_{\max}}{N+1} = 1$.

We then prove the soundness; i.e.,
\begin{align}
    \maxpar_G(\asg^\sss \reco \asg^\ttt) < \frac{\epsilon(N)}{2}
    \implies
    \minlab_G(\asg'^\sss \reco \asg'^\ttt) > 2-\epsilon(N).
\end{align}
Suppose we are given a reconfiguration multi-assignment sequence
$\sqasg' = ( \asg'^{(1)}, \ldots, \asg'^{(T)} )$
from $\asg'^\sss$ to $\asg'^\ttt$
such that
$\frac{\|\sqasg'\|_{\max}}{N+1} = \minlab_G(\asg'^\sss \reco \asg'^\ttt)$.
Define
\begin{align}
    \calV_1^{(t)} \coloneq \Bigl\{v \in \calV \Bigm| |\asg'^{(t)}(v)| = 1 \Bigr\}.
\end{align}
Construct then a sequence
$\sqasg = ( \asg^{(1)}, \ldots, \asg^{(T)} )$
of partial assignments
such that
each $\asg^{(t)} \colon \calV \to \Sigma \cup \{\bot\}$ is defined as follows:
\begin{align}
    \asg^{(t)}(v) \coloneq
    \begin{cases}
        \text{unique } \alpha \in \asg'^{(t)}(v) & \text{if } v \in \calV_1^{(t)}, \\
        \bot & \text{otherwise},
    \end{cases}
    \text{ for all } v \in \calV.
\end{align}
In particular, it holds that $\asg^{(1)} = \asg^\sss$ and $\asg^{(T)} = \asg^\ttt$.
Observe easily that
$\asg^{(t)}$ is a satisfying partial assignment, and
$\asg^{(t)}$ and $\asg^{(t+1)}$ differ in at most one vertex; i.e.,
$\sqasg$ is a reconfiguration partial assignment sequence
from $\asg^\sss$ to $\asg^\ttt$.
By assumption,
there exists a partial assignment $\asg^{(t)}$ in $\sqasg$ such that
$|\calV_1^{(t)}| = \|\asg^{(t)}\| < \frac{\epsilon(N)}{2} \cdot N$.
Simple calculation derives that
\begin{align}
\begin{aligned}
    \|\asg'^{(t)}\|
    & \geq 1\cdot |\calV_1^{(t)}| + 2\cdot |\calV \setminus \calV_1^{(t)}| \\
    & = 2N - |\calV_1^{(t)}| \\
    & > \left(2-\frac{\epsilon(N)}{2}\right) \cdot N \\
    & \underbrace{\geq}_{N \geq \frac{4}{\epsilon(N)}} (2-\epsilon(N)) \cdot (N+1),
\end{aligned}
\end{align}
implying that
$\minlab_G(\asg'^\sss \reco \asg'^\ttt)
= \frac{\|F'\|_{\max}}{N+1}
\geq \frac{\|\asg'^{(t)}\|}{N+1}
> 2-\epsilon(N)$,
which completes the proof.

\end{proof}
}{
}

\section{Applications}
\label{sec:appl}

In this section, we apply \cref{thm:LabelCover}
to show optimal $\PSPACE$-hardness of approximation results for
\MinmaxSetCovReconf (\cref{thm:SetCover}) and \MinmaxVerCovReconf (\cref{thm:VertexCover}).

\subsection{$\PSPACE$-hardness of Approximation for \SetCovReconf}
\label{subsec:appl:SetCover}

We first prove that \MinmaxSetCovReconf
is $\PSPACE$-hard to approximate within a factor smaller than $2$.
Let $\calU$ be a finite set called the \emph{universe} and
$\calF = \{ S_1, \ldots, S_m \}$ be a family of $m$ subsets of $\calU$.
A \emph{cover} for a set system $(\calU,\calF)$ is
a subfamily of $\calF$ whose union is equal to $\calU$.
For any pair of covers $\calC^\sss$ and $\calC^\ttt$ for $(\calU,\calF)$,
a \emph{reconfiguration sequence from $\calC^\sss$ to $\calC^\ttt$} is a sequence
$\scrC = (\calC^{(1)}, \ldots, \calC^{(T)})$ of covers such that
$\calC^{(1)} = \calC^\sss$,
$\calC^{(T)} = \calC^\ttt$, and
$|\calC^{(t)} \triangle \calC^{(t+1)}| \leq 1$
(i.e., $\calC^{(t+1)}$ is obtained from $\calC^{(t)}$ by adding or removing a single set of $\calF$)
for all $t$.
In \SetCovReconf \cite{ito2011complexity},
for a set system $(\calU,\calF)$ and its two covers $\calC^\sss$ and $\calC^\ttt$ of size $k$,
we are asked to decide if
there is a reconfiguration sequence from $\calC^\sss$ to $\calC^\ttt$
consisting only of covers of size at most $k+1$.
Next, we formulate its optimization version.
Denote by $\opt(\calF)$ the size of the minimum cover of $(\calU,\calF)$.
For any reconfiguration sequence
$\scrC = ( \calC^{(1)}, \ldots, \calC^{(T)} )$,
its \emph{cost} is defined as the maximum value of 
$\frac{|\calC^{(t)}|}{\opt(\calF)+1}$ over all $\calC^{(t)}$'s in $\scrC$; namely,\footnote{
    Here, division by $\opt(\calF)+1$ is derived from the nature that
    we must first add at least one set whenever
    $|\calC^\sss| = |\calC^\ttt| = \opt(\calF)$ and $\calC^\sss \neq \calC^\ttt$.
}
\begin{align}
    \cost_{\calF}(\scrC) \coloneq
    \max_{\calC^{(t)} \in \scrC} \frac{|\calC^{(t)}|}{\opt(\calF)+1},
\end{align}
In \MinmaxSetCovReconf,
we wish to minimize $\cost_\calF(\scrC)$ subject to
$\scrC = ( \calC^\sss, \ldots, \calC^\ttt )$.
For a pair of covers $\calC^\sss$ and $\calC^\ttt$ for $(\calU, \calF)$,
let
$\cost_{\calF}(\calC^\sss \reco \calC^\ttt)$
denote the minimum value of $\cost_\calF(\scrC)$
over all possible reconfiguration sequences $\scrC$
from $\calC^\sss$ to $\calC^\ttt$; namely,
\begin{align}
    \cost_{\calF}(\calC^\sss \reco \calC^\ttt) \coloneq
    \min_{\scrC = ( \calC^\sss, \ldots, \calC^\ttt )}
    \cost_{\calF}(\scrC).
\end{align}
For every $1 \leq c \leq s$,
\prb{Gap$_{c,s}$ \SetCovReconf} requests to distinguish whether
$\cost_{\calF}(\calC^\sss \reco \calC^\ttt) \leq c$ or
$\cost_{\calF}(\calC^\sss \reco \calC^\ttt) > s$.

For the sake of completeness,
we here present a $2$-factor approximation algorithm for \MinmaxSetCovReconf
of \cite{ito2011complexity}:\footnote{
    Similarly, a $2$-factor approximation algorithm can be obtained for
    \MinmaxDomSetReconf and \MinmaxVerCovReconf.
}
\begin{itembox}[l]{\textbf{$2$-factor approximation for \MinmaxSetCovReconf.}}
\begin{algorithmic}[1]
    \LComment{start from $\calC^\sss$.}
    \State insert each set of $\calC^\ttt \setminus \calC^\sss$ into the current cover in any order.
    \State discard each set of $\calC^\sss \setminus \calC^\ttt$ from the current cover in any order.
    \LComment{end with $\calC^\ttt$.}
\end{algorithmic}
\end{itembox}

Our main result is stated below,
whose proof uses a gap-preserving reduction from \LabCovReconf to \MinmaxSetCovReconf
\cite{ohsaka2024gap,lund1994hardness}.

\begin{theorem}\label{thm:SetCover}
\prb{Gap$_{1,2-\frac{1}{\polyloglog N}}$ \SetCovReconf} is $\PSPACE$-complete,
where $N$ is the universe size.
In particular, \MinmaxSetCovReconf
is $\PSPACE$-hard to approximate within a factor of $2-\frac{1}{\polyloglog N}$.
\end{theorem}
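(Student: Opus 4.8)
The plan is to reduce \prb{Gap$_{1,2-\epsilon(N)}$ \LabCovReconf}, which is $\PSPACE$-complete by \cref{thm:LabelCover} (taking $\epsilon(N)=\frac{1}{\polyloglog N}$ and alphabet size $\polyloglog N$), to \prb{Gap$_{1,2-\frac{1}{\polyloglog N'}}$ \SetCovReconf} via the gap-preserving reduction of \citet{ohsaka2024gap}, a reconfiguration adaptation of the classical \citet{lund1994hardness} reduction from \prb{Label Cover} to \prb{Set Cover}. Membership of \SetCovReconf in $\PSPACE$ is standard (reachability in the exponential-size configuration graph can be decided in polynomial space), so all the work is in the hardness direction.

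The reduction is as follows. Given a constraint graph $G=(\calV,\calE,\Sigma,\Psi)$ with satisfying multi-assignments $\asg^\sss,\asg^\ttt$ (which, by the completeness normalization, may be taken to assign exactly one label to each vertex), we build a set system with $\calF=\{S_{v,a}:v\in\calV,\,a\in\Sigma\}$ and universe $\calU=\{u_v:v\in\calV\}\sqcup\bigsqcup_{e\in\calE}\calU_e$. The element $u_v$ is private to $v$ and belongs to $S_{v,a}$ for every $a$, forcing every cover to contain at least one set per vertex, so that $\opt(\calF)=|\calV|$ (the single-label covers realize this bound). For each edge $e=(u,w)$, the gadget $\calU_e$ is a small consistency gadget, with $S_{u,a}\cap\calU_e$ and $S_{w,b}\cap\calU_e$ chosen so that a subfamily of $\calF$ covers $\calU_e$ if and only if the labels it selects at $u$ and at $w$ contain a pair satisfying $\psi_e$; $\calU_e$ can be taken of size depending only on $|\Sigma|$, so since $|\Sigma|\leq\polyloglog N$ the new universe size $N':=|\calU|$ is $\poly(N)$ and $\polyloglog N'=\Theta(\polyloglog N)$. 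Finally set $\calC^\sss:=\{S_{v,\asg^\sss(v)}:v\in\calV\}$ and $\calC^\ttt:=\{S_{v,\asg^\ttt(v)}:v\in\calV\}$, which are covers of size $|\calV|=\opt(\calF)$.

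The gadget property makes $\calC\mapsto\asg_\calC$, with $\asg_\calC(v):=\{a:S_{v,a}\in\calC\}$, a bijection between covers of $(\calU,\calF)$ and satisfying multi-assignments of $G$, with inverse $\asg\mapsto\calC_\asg:=\{S_{v,a}:a\in\asg(v)\}$ and $|\calC|=\|\asg_\calC\|$; moreover adding or removing a single set $S_{v,a}$ corresponds exactly to adding or removing the label $a$ from $\asg(v)$, so this bijection is an isomorphism of the two configuration graphs. Hence $\cost_\calF(\calC^\sss\reco\calC^\ttt)=\minlab_G(\asg^\sss\reco\asg^\ttt)$ (both normalized by $|\calV|+1$). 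Completeness is then immediate: if $x\in L$ then $\minlab_G(\asg^\sss\reco\asg^\ttt)\leq 1$, so there is a reconfiguration sequence of covers of maximum size $\leq\opt(\calF)+1$. For soundness, if $x\notin L$ then every reconfiguration multi-assignment sequence has a multi-assignment of size $>(2-\epsilon(N))(|\calV|+1)$, hence every reconfiguration cover sequence contains a cover of size $>(2-\epsilon(N))(\opt(\calF)+1)=\bigl(2-\tfrac{1}{\polyloglog N'}\bigr)(\opt(\calF)+1)$ after adjusting constants. This shows \prb{Gap$_{1,2-\frac{1}{\polyloglog N}}$ \SetCovReconf} is $\PSPACE$-complete; since a $(2-\tfrac{1}{\polyloglog N})$-factor approximation algorithm for \MinmaxSetCovReconf would decide this promise problem, the claimed inapproximability follows.

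The main obstacle is the design of the edge gadget $\calU_e$ so that the equivalence ``$\calC$ is a cover $\iff$ $\asg_\calC$ satisfies $G$'' holds exactly: because set cover is monotone (enlarging a family only helps coverage), one must rule out $\calU_e$ being covered ``for free'' by large but mutually inconsistent label sets at $u$ and $w$, which is exactly the role of the Lund--Yannakakis-style partition-system construction and what pins the soundness factor at $2$ rather than something larger. A secondary point is checking that the private elements $u_v$ — which make $\opt(\calF)$ exactly $|\calV|$, so the ``$+1$'' slack transfers verbatim — do not interfere with the gadget analysis, and that $G$ carries no isolated vertices.
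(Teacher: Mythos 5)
Your proposal follows essentially the same route as the paper: reduce from \prb{Gap$_{1,2-\epsilon(N)}$ \LabCovReconf} (\cref{thm:LabelCover}) via the \citet{lund1994hardness}-style partition-system gadget adapted to reconfiguration by \citet{ohsaka2024gap}, set up the size-preserving, adjacency-preserving correspondence $\calC \leftrightarrow \asg_\calC$, and note that the universe grows only polynomially (since $|\Sigma|=\polyloglog N$ gives $|B|=2^{\polyloglog N}=N^{o(1)}$), so the $\polyloglog$ loss is absorbed into constants. The only cosmetic deviation is that you introduce private elements $u_v$ to pin $\opt(\calF)=|\calV|$; the paper gets the same conclusion without them, relying instead on the gadget's exact \emph{iff} property (\cref{lem:SetCover}) together with the fact that the \LabCovReconf instance produced by the FGLSS step has a self-loop at every vertex, so any satisfying multi-assignment already assigns at least one label per vertex.
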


\cref{thm:SetCover} along with \cite{ohsaka2024gap} implies that
\MinmaxDomSetReconf is 
$\PSPACE$-hard to approximate within a factor of $2-\frac{1}{\polyloglog N}$,
where $N$ is the number of vertices (see \cref{intro:cor:DominatingSet}).

\begin{proof}[Proof of \cref{thm:SetCover}]
The reduction from \LabCovReconf
to \MinmaxSetCovReconf
is almost the same as that due to \citet{ohsaka2024gap,lund1994hardness}.
Let $(G=(\calV,\calE,\Sigma,\Psi), \asg^\sss, \asg^\ttt)$ be 
an instance of \LabCovReconf
with $N$ vertices and alphabet size $|\Sigma| = \polyloglog N$,
where $\|\asg^\sss\| = \|\asg^\ttt\| = |\calV|$.
Define $B \coloneq \zo^\Sigma$.
For each $\alpha \in \Sigma$ and $S \subseteq \Sigma$,
we construct $\bar{Q_\alpha} \subset B$ and $Q_S \subset B$
according to \cite{ohsaka2024gap} in $2^{\bigO(|\Sigma|)}$ time.
Let $\prec$ be an arbitrary order over $V$.
Create an instance of \MinmaxSetCovReconf as follows.
For each vertex $v \in \calV$ and each value $\alpha \in \Sigma$,
we define $S_{v,\alpha} \subset \calE \times B$ as
\begin{align}
\label{eq:SetCover:S}
    S_{v,\alpha} \coloneq
    \left( \bigcup_{e=(v,w) \in \calE : v \prec w} \{e\} \times \bar{Q_\alpha} \right)
    \cup
    \left( \bigcup_{e=(v,w) \in \calE : v \succ w} \{e\} \times Q_{\pi_e(\alpha)} \right),
\end{align}
where 
$\pi_e(\alpha) \coloneq \{ \beta \in \Sigma \mid \psi_e(\alpha,\beta) = 1\}$.
Then, a set system $(\calU, \calF)$ is defined as
\begin{align}
    \calU \coloneq \calE \times B \text{ and }
    \calF \coloneq \Bigl\{ S_{v,\alpha} \Bigm| v \in \calV, \alpha \in \Sigma \Bigr\}.
\end{align}
For a satisfying multi-assignment $\asg \colon \calV \to 2^\Sigma$ for $G$ with $\|\asg\| = |\calV|$,\footnote{
    In other words, each $f(v)$ is a singleton.
}
we associate it with a subfamily $\calC_\asg \subset \calF$ such that
\begin{align}
\label{eq:SetCover:Casgmt}
    \calC_\asg \coloneq \Bigl\{ S_{v,\alpha} \Bigm| v \in \calV, \alpha \in \asg(v) \Bigr\},
\end{align}
which is a minimum cover for $(\calU, \calF)$ \cite{ohsaka2024gap};
i.e., $|\calC_\asg| = |\calV| = \opt(\calF)$.
Constructing minimum covers $\calC^\sss$ from $\asg^\sss$ and $\calC^\ttt$ from $\asg^\ttt$
by \cref{eq:SetCover:Casgmt},
we obtain an instance
$((\calU,\calF), \calC^\sss, \calC^\ttt)$
of \MinmaxSetCovReconf.
This completes the description of the reduction.

Here, we will show that
\begin{align}
\label{eq:SetCover:minlab-cost}
    \minlab_G(\asg^\sss \reco \asg^\ttt) = \cost_\calF(\calC^\sss \reco \calC^\ttt),
\end{align}
which implies the completeness and soundness;
for this, we use the following lemma \cite{ohsaka2024gap}.
\begin{lemma}[\protect{\cite[Observation~4.4; Claim~4.7]{ohsaka2024gap}}]
\label{lem:SetCover}
Let
$\asg \colon \calV \to 2^\Sigma$ be a multi-assignment and
$\calC \subseteq \calF$ be a subfamily such that
for any $v \in \calV$ and $\alpha \in \Sigma$,
$\alpha \in \asg(v)$ if and only if $S_{v,\alpha} \in \calC$.
Then, 
$\asg$ satisfies an edge $e = (v,w) \in \calE$
if and only if
$\calC$ covers $\{e\} \times B$.
In particular,
$\asg$ satisfies $G$ if and only if $\calC$ covers $\calE \times B$.
Moreover, it holds that $\|\asg\| = |\calC|$.
\end{lemma}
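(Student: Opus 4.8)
The plan is to prove \cref{lem:SetCover} directly, by unwinding the definition of $S_{v,\alpha}$ in \cref{eq:SetCover:S}; the only external ingredient is the combinatorial property of the gadgets $\bar{Q_\alpha}, Q_S \subseteq B = \zo^\Sigma$ from \cite{ohsaka2024gap}, namely that for all $A, S \subseteq \Sigma$ (including the empty cases),
\[
 \Bigl(\bigcup_{\alpha \in A} \bar{Q_\alpha}\Bigr) \cup Q_S = B
 \iff A \cap S \neq \emptyset ,
\]
together with $Q_{S_1 \cup S_2} = Q_{S_1} \cup Q_{S_2}$ and $Q_\emptyset = \emptyset$. (With the canonical choice $\bar{Q_\alpha} = \{x \in B : x_\alpha = 0\}$ and $Q_S = \{x \in B : \exists \gamma \in S,\ x_\gamma = 1\}$ this is a one-line check: the complement of the displayed union is $\{x : x_\alpha = 1\ \text{for all}\ \alpha \in A,\ x_\gamma = 0\ \text{for all}\ \gamma \in S\}$, which is empty exactly when $A \cap S \neq \emptyset$. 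I would simply cite the version proved in \cite{ohsaka2024gap}.)

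The core step is the per-edge equivalence. Fix $e = (v,w) \in \calE$ and assume $v \prec w$ (swap the two names otherwise). By \cref{eq:SetCover:S}, among all sets of $\calF$ only the $S_{v,\alpha}$ and the $S_{w,\beta}$ intersect $\{e\} \times B$: the former contributes $\{e\} \times \bar{Q_\alpha}$ (the $v \prec w$ branch) and the latter contributes $\{e\} \times Q_{\pi_e(\beta)}$ (the $w \succ v$ branch). Using the hypothesis $S_{u,\gamma} \in \calC \iff \gamma \in \asg(u)$, the family $\calC$ covers $\{e\} \times B$ if and only if
\[
 \Bigl(\bigcup_{\alpha \in \asg(v)} \bar{Q_\alpha}\Bigr)
 \cup \Bigl(\bigcup_{\beta \in \asg(w)} Q_{\pi_e(\beta)}\Bigr) = B .
\]
By monotonicity the second union equals $Q_S$ with $S \coloneq \bigcup_{\beta \in \asg(w)} \pi_e(\beta)$, so by the property above this holds iff $\asg(v) \cap S \neq \emptyset$, i.e.\ iff some $\alpha \in \asg(v)$ and $\beta \in \asg(w)$ satisfy $\alpha \in \pi_e(\beta)$; by the definition of $\pi_e$, this is exactly the statement that $\asg$ satisfies $e$. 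The degenerate cases $\asg(v) = \emptyset$ and $\asg(w) = \emptyset$ need no separate treatment: then $A = \emptyset$ or $S = \emptyset$, the intersection condition fails, and correspondingly the empty product $\asg(v) \times \asg(w)$ contains no satisfying pair.

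Granting the per-edge equivalence, the ``in particular'' part follows at once: $\calU = \calE \times B = \bigsqcup_{e \in \calE}\bigl(\{e\} \times B\bigr)$, so $\calC$ covers $\calU$ iff it covers $\{e\} \times B$ for every $e$, iff $\asg$ satisfies every edge, iff $\asg$ satisfies $G$. For the last assertion, $\|\asg\| = |\calC|$, note that the $S_{v,\alpha}$ are pairwise distinct (as set up in \cite{ohsaka2024gap}), so the hypothesis puts $\calC$ in bijection with $\{(v,\alpha) \in \calV \times \Sigma : \alpha \in \asg(v)\}$, whence $|\calC| = \sum_{v \in \calV} |\asg(v)| = \|\asg\|$. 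I expect the only genuine subtlety — hence the step to verify most carefully — to be the orientation bookkeeping in the per-edge equivalence: lining up the $\prec/\succ$ branches of \cref{eq:SetCover:S} and the argument order in $\pi_e(\beta) = \{\gamma : \psi_e(\beta,\gamma) = 1\}$ with the definition of ``$\asg$ satisfies $e = (v,w)$'', so that the covering condition unwinds to exactly the satisfaction condition rather than its transpose.
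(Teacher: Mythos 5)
Your proposal is correct, but note that the paper itself does not prove \cref{lem:SetCover} at all: it is imported verbatim by citation to \cite[Observation~4.4; Claim~4.7]{ohsaka2024gap}, so what you have done is reconstruct the omitted argument rather than parallel an in-paper one. Your reconstruction matches the Lund--Yannakakis-style argument of the cited work: with $B = \zo^\Sigma$ and the canonical gadgets $\bar{Q_\alpha}$, $Q_S$, the per-edge covering condition for $\{e\} \times B$ unwinds, via the identity $\bigl(\bigcup_{\alpha \in A} \bar{Q_\alpha}\bigr) \cup Q_S = B \iff A \cap S \neq \emptyset$ together with $Q_{S_1 \cup S_2} = Q_{S_1} \cup Q_{S_2}$, to the existence of a consistent pair in $\asg(v) \times \asg(w)$, and your treatment of the empty cases is right. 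Two small caveats, neither fatal: first, the orientation issue you flag is real --- with $\pi_e(\beta)$ defined literally as $\{\gamma \mid \psi_e(\beta,\gamma)=1\}$ the unwinding yields the transposed constraint, so one must read $\pi_e$ at the $\succ$-endpoint as the preimage set (in the present application the FGLSS constraints are symmetric, so nothing breaks, and the cited construction handles the orientation explicitly); second, your final step $\|\asg\| = |\calC|$ silently requires the map $(v,\alpha) \mapsto S_{v,\alpha}$ to be injective, which can fail for degenerate vertices if $\calF$ is read as a plain set --- the intended reading, here and in \cite{ohsaka2024gap}, is that $\calF$ is an indexed family, and your parenthetical ``as set up in \cite{ohsaka2024gap}'' is the honest way to discharge this. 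In short, your proof buys self-containedness at the cost of importing the gadget property, which is exactly the trade-off the paper makes in the opposite direction by citing the whole lemma.
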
\noindent
We first show that
$\minlab_G(\asg^\sss \reco \asg^\ttt) \geq \cost_\calF(\calC^\sss \reco \calC^\ttt)$.
For any reconfiguration multi-assignment sequence
$\sqasg = (\asg^{(1)}, \ldots, \asg^{(T)})$
from $\asg^\sss$ to $\asg^\ttt$
such that $\|\sqasg\|_{\max} = \minlab_G(\asg^\sss \reco \asg^\ttt)$,
we can construct a reconfiguration sequence
$\scrC = (\calC_{\asg^{(1)}}, \ldots, \calC_{\asg^{(T)}})$
from $\calC^\sss$ to $\calC^\ttt$ by \cref{eq:SetCover:Casgmt}.
By \cref{lem:SetCover},
each $\calC_{\asg^{(t)}}$ covers $\calU$;
thus, $\scrC$ is a valid reconfiguration sequence from $\calC^\sss$ to $\calC^\ttt$.
Moreover,
$
    \cost_\calF(\calC^\sss \reco \calC^\ttt)
    \leq \cost_\calF(\scrC)
    = \|\sqasg\|_{\max}
    = \minlab_G(\asg^\sss \reco \asg^\ttt)
$, as desired.
We then show that 
$\minlab_G(\asg^\sss \reco \asg^\ttt) \leq \cost_\calF(\calC^\sss \reco \calC^\ttt)$.
For any reconfiguration sequence
$\scrC = (\calC^{(1)}, \ldots, \calC^{(T)})$
from $\calC^\sss$ to $\calC^\ttt$
such that $\cost_\calF(\scrC) = \cost_\calF(\calC^\sss \reco \calC^\ttt)$,
we can construct a sequence 
$\sqasg = (\asg^{(1)}, \ldots, \asg^{(t)})$
of multi-assignments
such that $\asg^{(t)} \colon \calV \to 2^{\Sigma}$ is defined as follows:
\begin{align}
    \asg^{(t)}(v) \coloneq \Bigl\{ \alpha \in \Sigma \Bigm| S_{v,\alpha} \in \calC^{(t)} \Bigr\}
    \text{ for all } v \in \calV.
\end{align}
By \cref{lem:SetCover},
each $\asg^{(t)}$ satisfies $G$;
thus, $\sqasg$ is a valid reconfiguration multi-assignment sequence from $\asg^\sss$ to $\asg^\ttt$.
Moreover,
$
    \minlab_G(\asg^\sss \reco \asg^\ttt)
    \leq \|\sqasg\|_{\max}
    = \cost_\calF(\scrC)
    = \cost_\calF(\calC^\sss \reco \calC^\ttt)
$, which completes the proof of \cref{eq:SetCover:minlab-cost}.

Since $|\Sigma| = \polyloglog N$,
the reduction takes polynomial time in $N$, and
it holds that $|\calU| = |\calE \times B| = \bigO(N^2 \cdot 2^{\polyloglog N}) = \bigO(N^3)$;
i.e., $N = \Omega(|\calU|^{\frac{1}{3}})$.
By \cref{thm:LabelCover},
\prb{Gap$_{1,2-\frac{1}{\polyloglog N}}$ \LabCovReconf} with alphabet size $\polyloglog N$
is $\PSPACE$-complete; thus,
\prb{Gap$_{1,2-\frac{1}{\polyloglog |\calU|}}$ \SetCovReconf}
is $\PSPACE$-complete as well,
accomplishing the proof.
\end{proof}

\subsection{$\PSPACE$-hardness of Approximation for \VerCovReconf}
\label{subsec:appl:VertexCover}

We conclude this section with a similar inapproximability result for
\MinmaxVerCovReconf on $\bigO(1)$-uniform hypergraphs.
\ifthenelse{\boolean{FULL}}{%
A \emph{vertex cover} of a hypergraph $H=(\calV,\calE)$ is
a vertex set $\calC \subseteq \calV$ that intersects every hyperedge in $\calE$;
i.e., $e \cap \calC \neq \emptyset$ for every $e \in \calE$.
For any pair of vertex covers $\calC^\sss$ and $\calC^\ttt$ of $H$,
a \emph{reconfiguration sequence from $\calC^\sss$ to $\calC^\ttt$} is a sequence
$\scrC = (\calC^{(1)}, \ldots, \calC^{(T)})$ of vertex covers such that
$\calC^{(1)} = \calC^\sss$,
$\calC^{(T)} = \calC^\ttt$, and
$|\calC^{(t)} \triangle \calC^{(t+1)}| \leq 1$ for all $t$.
Denote by $\beta(H)$ the size of the minimum vertex cover of $H$.
For a reconfiguration sequence
$\scrC = (\calC^{(1)}, \ldots \calC^{(T)})$,
its \emph{cost} is defined as the maximum value of
$\frac{|\calC^{(t)}|}{\beta(H)+1}$
over all $\calC^{(t)}$'s in $\scrC$; namely,
\begin{align}
    \cost_H(\scrC) \coloneq \max_{\calC^{(t)} \in \scrC} \frac{|\calC^{(t)}|}{\beta(H)+1}.
\end{align}
In the \MinmaxVerCovReconf problem,
for a hypergraph $H$ and its two vertex covers $\calC^\sss$ and $\calC^\ttt$,
we wish to minimize $\cost_H(\scrC)$
subject to $\scrC = (\calC^\sss, \ldots, \calC^\ttt)$.
Let $\cost_H(\calC^\sss \reco \calC^\ttt)$ denote the minimum value of 
$\cost_H(\scrC)$ over all possible reconfiguration sequences $\scrC$
from $\calC^\sss$ to $\calC^\ttt$; namely,
\begin{align}
    \cost_H(\calC^\sss \reco \calC^\ttt)
    \coloneq \min_{\scrC = (\calC^\sss, \ldots, \calC^\ttt)} \cost_H(\scrC).
\end{align}
For every $1 \leq c \leq s$,
\prb{Gap$_{c,s}$ \VerCovReconf}
requires to distinguish
whether $\cost_H(\calC^\sss \reco \calC^\ttt) \leq c$ or
$\cost_H(\calC^\sss \reco \calC^\ttt) > s$.

}{
\MinmaxVerCovReconf is defined analogously to \MinmaxSetCovReconf; refer to \cref{app:appl} for the formal definition.
}
Our inapproximability result is shown below,
whose proof reuses the reduction of \cref{thm:SetCover}.

\ifthenelse{\boolean{FULL}}{
\begin{theorem}
}{
\begin{theorem}[$\ast$]
}
\label{thm:VertexCover}
For any constant $\epsilon \in (0,1)$,
\prb{Gap$_{1,2-\epsilon}$ \VerCovReconf}
on $\poly(\epsilon^{-1})$-uniform hypergraphs
is $\PSPACE$-complete.
In particular, \MinmaxVerCovReconf
on $\poly(\epsilon^{-1})$-uniform hypergraphs
is $\PSPACE$-hard to approximate within a factor of $2-\epsilon$.
\end{theorem}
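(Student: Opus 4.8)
The plan is to reuse, essentially verbatim, the set system $(\calU,\calF)$ produced in the proof of \cref{thm:SetCover}, but to start from a \emph{constant}-alphabet instance of \LabCovReconf (the first bullet of \cref{thm:LabelCover}) rather than the $\polyloglog N$-alphabet version, and then to reinterpret the resulting \SetCovReconf instance as a \VerCovReconf instance on a bounded-uniformity hypergraph via set–cover/vertex–cover duality. Concretely, fix a constant $\epsilon \in (0,1)$, apply the reduction of \cref{thm:SetCover} to an instance of \prb{Gap$_{1,2-\epsilon}$ \LabCovReconf} with alphabet size $|\Sigma| = \poly(\epsilon^{-1})$; this runs in polynomial time (the gadgets take $2^{\bigO(|\Sigma|)}$ time) and yields a set system $(\calU,\calF)$ with $\calU = \calE \times B$, $B = \zo^\Sigma$, $\calF = \{S_{v,\alpha}\}$, satisfying $\cost_\calF(\calC^\sss \reco \calC^\ttt) = \minlab_G(\asg^\sss \reco \asg^\ttt)$, so that \prb{Gap$_{1,2-\epsilon}$ \SetCovReconf} is $\PSPACE$-complete on these instances.

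Now let $H$ be the hypergraph with vertex set $\calF$ whose hyperedges are $e_u \coloneq \{S \in \calF : u \in S\}$ for $u \in \calU$: vertex covers of $H$ are exactly set covers of $(\calU,\calF)$, and ``add/remove one set'' coincides with ``add/remove one vertex,'' so reconfiguration sequences correspond identically. Reading off \cref{eq:SetCover:S}, for $u = ((v,w),b)$ one has $e_u = \{(v,\alpha) : b \in \bar{Q_\alpha}\} \cup \{(w,\alpha) : b \in Q_{\pi_e(\alpha)}\}$, hence $|e_u| \leq 2|\Sigma| = \poly(\epsilon^{-1})$; moreover every $e_u$ is nonempty, since $\asg^\sss$ satisfies $G$ so each edge $e$ admits a satisfying pair (which keeps the $(w,\cdot)$-part nonempty even when $b$ is the all-ones string). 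To make the hypergraph exactly $q$-uniform with $q \coloneq 2|\Sigma| = \poly(\epsilon^{-1})$, pad every hyperedge with fresh \emph{private} (degree-one) dummy vertices up to size $q$, obtaining $H'$. A minimum cover $\calC_\asg$ of $(\calU,\calF)$ consists only of real vertices and already meets every $e_u$, hence is still a vertex cover of $H'$, so $\beta(H') \leq |\calV| = \opt(\calF)$; the cleanup below turns any vertex cover of $H'$ into a set cover of no larger size, giving $\beta(H') \geq \opt(\calF)$ as well. Thus $\beta(H') + 1 = \opt(\calF) + 1 = |\calV| + 1$ is the common normalization.

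It remains to show $\cost_{H'}(\calC^\sss \reco \calC^\ttt) = \cost_\calF(\calC^\sss \reco \calC^\ttt)$, which transfers both completeness and soundness. Direction ``$\leq$'' is immediate: a \SetCovReconf sequence through covers of size $\leq k+1$ uses only real vertices, hence meets every padded hyperedge, hence is a \VerCovReconf sequence of $H'$ of the same cost. For ``$\geq$,'' take a vertex-cover reconfiguration sequence $\scrC = (\calC^{(1)}, \ldots, \calC^{(T)})$ of $H'$ and put $\calC'^{(t)} \coloneq (\calC^{(t)} \cap \calF) \cup \{ r_u : e_u \text{ is not met by } \calC^{(t)} \cap \calF \}$, where $r_u \in e_u$ is a fixed real representative. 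Each $\calC'^{(t)}$ is a set cover, $\calC'^{(1)} = \calC^\sss$, $\calC'^{(T)} = \calC^\ttt$, and $|\calC'^{(t)}| \leq |\calC^{(t)}|$ because each hyperedge unmet by the real part must charge a private dummy of $\calC^{(t)}$, and these dummies are distinct across such hyperedges. Consecutive $\calC'^{(t)}, \calC'^{(t+1)}$ may differ in several vertices (adding one real vertex to $\calC^{(t)}$ can retire several representatives $r_u$), so interpolate: first perform the single real addition, then remove the now-redundant representatives one at a time (removal steps and dummy-only steps need no interpolation); every intermediate set contains $\calC'^{(t+1)}$ and is thus a valid cover. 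The peak size of this interpolation is at most $|\calC'^{(t)}| + 1 \leq |\calC^{(t)}| + 1$, and since the transition was an \emph{addition} in $\scrC$ we had $|\calC^{(t)}| \leq \max_s |\calC^{(s)}| - 1$, so the peak never exceeds $\max_s |\calC^{(s)}|$. Hence the cleaned sequence is a valid \SetCovReconf sequence with the same maximum cover size, proving ``$\geq$.''

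Combining the identity $\cost_\calF(\calC^\sss \reco \calC^\ttt) = \minlab_G(\asg^\sss \reco \asg^\ttt)$ with \cref{thm:LabelCover} then shows that \prb{Gap$_{1,2-\epsilon}$ \VerCovReconf} on $\poly(\epsilon^{-1})$-uniform hypergraphs is $\PSPACE$-hard; membership in $\PSPACE$ is routine, which completes the proof. I expect the delicate point to be the cleanup in the last paragraph — removing dummy vertices from an arbitrary reconfiguration sequence while respecting the one-vertex-per-step adjacency \emph{and} without inflating the maximum cover size; the key observation that makes it go through is that an ``addition'' step always leaves one unit of slack below the global maximum, which is exactly what pays for the interpolation, so no loss in the approximation factor is incurred and we may keep the same constant $\epsilon$ throughout.
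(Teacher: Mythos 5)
Your construction is isomorphic to the paper's: the paper builds $H=(\calW,\calF)$ with $\calW=\calV\times\Sigma$ and hyperedges $T_{e,\vec q}=\{(v,\alpha):(e,\vec q)\in S_{v,\alpha}\}$, which under the bijection $(v,\alpha)\leftrightarrow S_{v,\alpha}$ is exactly your dual hypergraph with vertex set $\calF$ and hyperedges $e_u=\{S:u\in S\}$, and both pad with fresh dummy vertices to $2|\Sigma|$-uniformity. You are in fact \emph{more} explicit than the paper about the one subtle point: the paper dismisses the equality $\minlab_G=\cost_H$ with ``similarly to the proof of \cref{thm:SetCover},'' but the backward direction there does need an argument that dummy vertices can be removed from an arbitrary vertex-cover reconfiguration sequence without increasing the peak, which is precisely what your cleanup supplies.

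There is, however, a genuine slip in that cleanup. You assert that ``removal steps and dummy-only steps need no interpolation,'' but removing a \emph{real} vertex $v$ from $\calC^{(t)}$ can cause one or more hyperedges, previously met by the real part, to become unmet; in that case $\calC'^{(t+1)}=(\calC'^{(t)}\setminus\{v\})\cup R_{\text{new}}$ with $R_{\text{new}}$ possibly of size greater than one, so $\calC'^{(t)}$ and $\calC'^{(t+1)}$ again differ in several vertices and interpolation is needed. The fix is symmetric to your addition case and the slack is still there: insert the new representatives one at a time before removing $v$, so the peak is $|\calC'^{(t)}|+|R_{\text{new}}|=|\calC'^{(t+1)}|+1\leq|\calC^{(t+1)}|+1=|\calC^{(t)}|\leq\max_s|\calC^{(s)}|$. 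With that case added, the argument goes through; without it, the proof as written is incomplete because it simply asserts the problematic case cannot occur.
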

\ifthenelse{\boolean{FULL}}{
\begin{proof}
We build an ``inverted index'' of
\prb{Gap$_{1,2-\epsilon}$ \SetCovReconf} of \cref{thm:SetCover}.
Let $(G=(\calV,\calE,\Sigma,\Psi), \asg^\sss, \asg^\ttt)$
be an instance of \LabCovReconf
with $N$ variables and alphabet size $|\Sigma| = \poly(\epsilon^{-1})$,
where $\|\asg^\sss\| = \|\asg^\ttt\| = |\calV|$.
Define $B \coloneq \zo^\Sigma$ and $S_{v,\alpha}$'s by \cref{eq:SetCover:S}.
Construct then a hypergraph $H = (\calW,\calF)$ as follows:
\begin{align}
    \calW & \coloneq \calV \times \Sigma, \\
    \calF & \coloneq \Bigl\{T_{e,\vec{q}} \subset \calW \Bigm| (e,\vec{q}) \in \calE \times B \Bigr\},
        \quad \text{where} \\
    T_{e,\vec{q}} &
        \coloneq \Bigl\{ (v,\alpha) \in \calW \Bigm| (e,\vec{q}) \in S_{v,\alpha} \Bigr\}
        \quad \text{for all } (e,\vec{q}) \in \calE \times B.
\end{align}
Note that the size of hyperedge $T_{e,\vec{q}}$ is 
\begin{align}
\begin{aligned}
    |T_{e,\vec{q}}|
    & = \left|\Bigl\{ (v,\alpha) \in \calW \Bigm| (e,\vec{q}) \in S_{v,\alpha} \Bigr\}\right| \\
    &   \leq \left|\Bigl\{
            (v,\alpha) \in \calV \times \Sigma \Bigm| v \in e
        \Bigr\}\right| \\
    & \leq 2|\Sigma| = \poly(\epsilon^{-1}).
\end{aligned}
\end{align}
To make $H$ into $2|\Sigma|$-uniform, we simply augment each hyperedge $T_{e,\vec{q}}$
with a set of fresh $2|\Sigma| - |T_{e,\vec{q}}|$ vertices.
For a satisfying multi-assignment $\asg \colon \calV \to 2^\Sigma$ for $G$ with $\|\asg\| = |\calV|$,
we associate with it a vertex set $\calC_\asg \subset \calW$ such that
\begin{align}
\label{eq:VertexCover:Casgmt}
    \calC_\asg \coloneq \Bigl\{(v,\alpha) \in \calV \times \Sigma \Bigm| v \in \calV, \alpha \in \asg(v) \Bigr\},
\end{align}
which is a minimum vertex cover of $H$ (i.e., $|\calC_\asg| = |\calV| = \beta(H)$),
shown similarly to the proof of \cref{thm:SetCover}.
Constructing minimum vertex covers $\calC^\sss$ from $\asg^\sss$ and $\calC^\ttt$ from $\asg^\ttt$
by \cref{eq:VertexCover:Casgmt},
we obtain an instance $(H, \calC^\sss, \calC^\ttt)$ of \MinmaxVerCovReconf
on a $\poly(\epsilon^{-1})$-uniform hypergraph.
This completes the description of the reduction.

Similarly to the proof \cref{thm:SetCover}, we can show that
\begin{align}
    \minlab_G(\asg^\sss \reco \asg^\ttt) = \cost_H(\calC^\sss \reco \calC^\ttt),
\end{align}
implying the completeness and soundness.
By \cref{thm:LabelCover},
\prb{Gap$_{1,2-\epsilon}$ \LabCovReconf}
with alphabet size $\poly(\epsilon^{-1})$ is $\PSPACE$-complete; thus,
\prb{Gap$_{1,2-\epsilon}$ \VerCovReconf}
on $\poly(\epsilon^{-1})$-uniform hypergraphs is $\PSPACE$-complete as well,
which completes the proof.

\end{proof}
}{
}

%\acks
\printbibliography

\end{document}